\newtheorem{lemma}{Lemma}[section]
\newtheorem{corollary}{Corollary}[section]
\newtheorem{theorem}{Theorem}[section]
\newtheorem{proposition}{Proposition}[section]
\theoremstyle{definition} 
\newtheorem{definition}{Definition}[section]
\newtheorem{example}{Example}[section]
\author{Aziz Amezian El Khalfioui\thanks{Research fellow FNRS.}
}
\author{
Jef Wijsen
}
\affil{University of Mons, Mons, Belgium\\
\texttt{\{aziz.amezianelkhalfioui, jef.wijsen\}@umons.ac.be}}
\date{}
\newcommand{\myheading}[1]{\vspace{0.5\baselineskip}{\sc{#1}}}
\newcommand{\defeq}{\vcentcolon=}
\newcommand{\card}[1]{|{#1}|}
\newcommand{\arity}[1]{|{#1}|}
\newcommand{\db}{\mathbf{db}}
\newcommand{\tuple}[1]{\langle{#1}\rangle}
\newcommand{\makefree}[2]{\nexists{#2}\left[{#1}\right]}
\newcommand{\makebound}[2]{\exists{#2}\left[{#1}\right]}
\newcommand{\substitute}[3]{{#1}_{[{#2}\rightarrow{#3}]}}
\newcommand{\cqamodels}{\models_{\mathsf{\tiny cqa}}}
\newcommand{\attribute}[1]{\mathit{#1}}
\newcommand{\constant}[1]{\textnormal{`{#1}'}}
\newcommand{\mycount}[2]{{\mathbf{cnt}}({#1},{#2})}
\newcommand{\cqacount}[2]{{\mathbf{cqacnt}}({#1},{#2})}
\newcommand{\formula}[1]{\left({#1}\right)}
\newcommand{\signature}[2]{[{#1},{#2}]}
\newcommand{\cforest}{\mathsf{Cforest}}
\newcommand{\cnewclass}{\mathsf{Cparsimony}}
\newcommand{\rep}{\mathbf{r}}
\newcommand{\sep}{\mathbf{s}}
\newcommand{\oep}{\mathbf{o}}
\newcommand{\pep}{\mathbf{p}}
\newcommand{\calD}{\mathcal{D}}
\newcommand{\calR}{\mathcal{R}}
\newcommand{\fd}[2]{#1 \rightarrow #2}
\newcommand{\fdset}[1]{\mathcal{K}(#1)}
\newcommand{\vars}[1]{{\mathsf{vars}}(#1)}
\newcommand{\free}[1]{{\mathsf{free}}(#1)}
\newcommand{\key}[1]{{\mathsf{Key}}(#1)}
\newcommand{\frozen}[1]{{\mathsf{frozen}}(#1)}
\newcommand{\notkey}[1]{{\mathsf{notKey}}(#1)}
\newcommand{\body}[1]{{\mathsf{body}}(#1)}
\newcommand{\keycl}[2]{{#1}^{+,#2}}
\newcommand{\attacksrelation}[1]{\stackrel{#1}{\rightsquigarrow}}
\newcommand{\nattacksrelation}[1]{\stackrel{#1}{\not\rightsquigarrow}}
\newcommand{\attacks}[3]{{#1}\attacksrelation{#3}{#2}}
\newcommand{\notattacks}[3]{{#1}\nattacksrelation{#3}{#2}}
\newcommand{\cqa}[1]{\mathsf{CERTAINTY}({#1})}
\newcommand{\problem}[1]{{\textsf{#1}}}
\title{Consistent Query Answering for Primary Keys and Conjunctive Queries with Counting} 
\begin{document}

\maketitle

\begin{abstract}
The problem of consistent query answering for primary keys and self-join-free conjunctive queries has been intensively studied in recent years and is by now well understood.
In this paper, we study an extension of this problem with counting.
The queries we consider count how many times each value occurs in a designated (possibly composite) column of an answer to a full conjunctive query. 
In a setting of database repairs, we adopt the semantics of [Arenas et al., ICDT 2001] which computes tight lower and upper bounds on these counts, where the bounds are taken over all repairs.
Ariel Fuxman defined in his PhD thesis a syntactic class of queries, called $\cforest$, for which this computation can be done by executing two first-order queries (one for lower bounds, and one for upper bounds) followed by simple counting steps. 
We use the term ``parsimonious counting'' for this computation.
A natural question is whether $\cforest$ contains all self-join-free conjunctive queries that admit parsimonious counting. 
We answer this question negatively.
We define a new syntactic class of queries, called $\cnewclass$, and prove that it contains all (and only) self-join-free conjunctive queries that admit parsimonious counting.
\end{abstract}

\section{Introduction}\label{sec:introduction}

The problem of consistent query answering (CQA)~\cite{DBLP:conf/icdt/ArenasBC01,DBLP:series/synthesis/2011Bertossi,DBLP:conf/pods/Bertossi19,DBLP:journals/sigmod/Wijsen19} with respect to primary keys is by now well understood for self-join-free conjunctive queries:
a dichotomy between tractable and intractable queries has been established, 
and it is known which queries have a consistent first-order rewriting~\cite{DBLP:journals/tods/KoutrisW17,DBLP:journals/mst/KoutrisW21}. 
It remains a largely open question to extend these complexity results to queries with aggregation.
In this paper, we look at a simple form of aggregation: counting the number of times each (possibly composite) value occurs in the answer to a conjunctive query.
Although this problem has been studied since the early years of CQA~\cite{FuxmanThesis}, a fine-grained characterization of its complexity remains open.

Formally, let $q$ be a full (i.e., quantifier-free) self-join-free conjunctive query.
We define a counting query as follows.
We designate a tuple $\vec{z}$ of distinct variables of $q$, called the \emph{grouping variables},
and let $\vec{w}$ be a tuple of the variables in~$q$ that are not in~$\vec{z}$.
The variables of~$q$, which are all free, are made explicit by denoting $q$ as $q(\vec{z},\vec{w})$.
We are interested in a query that, on a given database instance~$\db$, returns all tuples $(\vec{c},i)$ with $\vec{c}$ a tuple of constants, of the same arity as~$\vec{z}$, and with $i$ a positive integer that is the number of distinct tuples $\vec{d}$, of the same arity as $\vec{w}$, satisfying $(\vec{c},\vec{d})\in q(\db)$. 
This counting query will be denoted $\mycount{q}{\vec{z}}$.

For example, consider the database schema of Fig.~\ref{fig:exampledatabase}, which is intended to store the unique gender and department of each employee, and the unique building of each department.
Ignore for now that the database instance of Fig.~\ref{fig:exampledatabase} is inconsistent (as it stores two departments for Anny, and two buildings for IT). 
Let $q_{0}(x,y,z)=E(x,\constant{F},y)\land D(y,z)$, where $x,y,z$ are variables and $\constant{F}$ denotes a constant.
Then, on a consistent database instance, $\mycount{q_{0}}{z}$ would return the number of female employees working in each building.
In SQL, $\mycount{q_{0}}{z}$ can be encoded as follows:
\begin{quote}\small
\begin{verbatim}
SELECT   Building, COUNT(*) AS CNT
FROM     E, D
WHERE    E.Dept = D.Dept AND Gender = 'F'
GROUP BY Building
\end{verbatim}
\end{quote}
On the database instance of Fig.~\ref{fig:exampledatabase}, this query will return $(A,4)$ and $(B,3)$.
These answers are however not meaningful because they suffer from double-counting due to inconsistencies.
We describe next a more meaningful semantics that was introduced in~\cite{DBLP:conf/icdt/ArenasBC01}.

First, following~\cite{DBLP:conf/pods/ArenasBC99}, we define a \emph{repair} of a database instance as a maximal subinstance that satisfies all primary-key constraints.
In this paper, we consider no other constraints than primary keys.
Then, following the approach of~\cite{DBLP:conf/icdt/ArenasBC01}, more meaningful answers are obtained by returning, for every value $\vec{c}$ for the grouping variables $\vec{z}$, tight lower and upper bounds on the corresponding counts over all repairs.
This new query is denoted by $\cqacount{q}{\vec{z}}$.
Thus, an answer $(\vec{c},[m,n])$ to this new query means that on every repair, our original query $\mycount{q}{\vec{z}}$ returns a tuple $(\vec{c},i)$ with $m\leq i\leq n$, and, moreover, these bounds~$m$ and~$n$ are tight.
By tight, we mean that for every $j\in\{m,n\}$, there is a repair on which $\mycount{q}{\vec{z}}$ returns $(\vec{c},j)$.

For example, the database instance of Fig.~\ref{fig:exampledatabase} has four repairs, because there are two choices for Anny's department, and two choices for the building of the IT department.
Note that in Fig.~\ref{fig:exampledatabase}, \emph{blocks} of conflicting tuples are separated by dashed lines. 
The query $\mycount{q_{0}}{z}$ returns different answers on each repair:
there are two repairs where the answer is $\{(A,3),(B,1)\}$; there is one repair where the answer is $\{(A,1),(B,3)\}$; and there is one repair where the answer is $\{(A,2),(B,2)\}$.
The latter set of answers, for example, is obtained in the repair that assigns Anny to department~HR, and~IT to building~B.
The query $\cqacount{q_{0}}{z}$ would thus return $\{(A,[1,3]), (B,[1,3])\}$.

In this paper, we are concerned about the complexity of computing $\cqacount{q}{\vec{z}}$.
In general, there exist self-join-free conjunctive queries $q$ such that, for some choice of the grouping variables $\vec{z}$, $\cqacount{q}{\vec{z}}$ cannot be solved in polynomial time (under standard complexity assumptions).
This follows from earlier research showing that there are self-join-free conjunctive queries~$q'(\vec{z})$ for which the following problem is $\coNP$-complete: given $\vec{c}$ and $\db$, determine whether $q'(\vec{c})$ is true in every repair of $\db$. 
The latter problem obviously reduces to counting: $q'(\vec{c})$ is true in every repair of $\db$ if and only if $\cqacount{q}{\vec{z}}$ returns $(\vec{c},[m,n])$ on $\db$ for some $m\geq 1$, where $q$ is the full query obtained from~$q'$ by dropping quantification.

In his PhD thesis~\cite{FuxmanThesis}, Fuxman showed that for some $q$ and $\vec{z}$, the answer to $\cqacount{q}{\vec{z}}$ can be computed by executing first-order queries followed by simple counting steps.
To illustrate his approach, consider the following query in SQL:
\begin{quote}\small
\begin{verbatim}
SELECT   Building, COUNT(DISTINCT Emp) AS CNT
FROM     E, D
WHERE    E.Dept = D.Dept AND Gender = 'F'
GROUP BY Building
\end{verbatim}
\end{quote}
On our example database of Fig.~\ref{fig:exampledatabase}, this query returns $\{(A,3), (B,3)\}$.
We observe that the returned counts match the upper bounds previously found for $\cqacount{q_{0}}{z}$.
Importantly, it can be shown that this is not by accident: on \emph{every} database instance, the latter SQL query will return the correct upper bounds for $\cqacount{q_{0}}{z}$.
Note that the latter query uses \texttt{COUNT(DISTINCT Emp)}, which means that duplicates are removed, which is a standard practice in relational algebra.

We now explain how to obtain the lower bounds for our example query.
To this end, consider the following query:
\begin{quote}\small
\begin{verbatim}
SELECT Building, Emp
FROM   E, D
WHERE  E.Dept = D.Dept AND Gender = 'F'
\end{verbatim}
\end{quote}
Following~\cite{DBLP:conf/pods/ArenasBC99}, we define the \emph{consistent answer} to such a query as the intersection of the query answers on all repairs.
For our example database, the consistent answer is the following table, which we call $C$:
\begin{quote}\small
\begin{tabular}{l|lc}
$C$ & $\attribute{Emp}$ & $\attribute{Building}$\bigstrut\\\cline{2-3}
& Suzy  & A\\
& Lucy & B
\end{tabular}
\end{quote}
Note that Anny does not occur in the consistent answer, because (Anny, A) is false in some repair, and so is (Anny, B).
From~\cite{DBLP:journals/tods/KoutrisW17}, it follows that computing the consistent answers to the latter SQL query is in $\FO$ (i.e., the class of problems that can be solved by a first-order query), using a technique known as \emph{consistent first-order rewriting}.
The lower bounds $\{(A,1), (B,1)\}$ are now found by executing the following query on~$C$ (and, again, this is not by accident):
\begin{quote}\small
\begin{verbatim}
SELECT   Building, COUNT(DISTINCT Emp) AS CNT
FROM     C
GROUP BY Building
\end{verbatim}
\end{quote}
Since $C$ can be expressed in SQL, we can actually construct a single SQL query that computes the lower bounds in $\cqacount{q_{0}}{z}$.

In general, if $q(\vec{z},\vec{w})$ is a full self-join-free conjunctive query for which $\cqacount{q}{\vec{z}}$ can be computed as previously described, then we will say that the query obtained from~$q$ by existentially binding the variables in~$\vec{w}$ (i.e., by binding the variables that are not grouping variables) admits \emph{parsimonious counting}.
Thus, our example showed that $\exists x\exists y\, E(x,\constant{F},y)\land D(y,z)$ admits parsimonious counting.
A formal definition of parsimonious counting will be given later on (Definition~\ref{def:pc}). 
In this introduction, we content ourselves by saying that parsimonious counting, if possible, computes $\cqacount{q}{\vec{z}}$ by executing two first-order queries (one for lower bounds, and one for upper bounds), followed by simple counting steps. 

\begin{figure}\centering\small
\begin{tabular}{ll}
\begin{tabular}{l|lcl}
$E$ & $\underline{\attribute{Emp}}$ & $\attribute{Gender}$ & $\attribute{Dept}$\bigstrut\\\cline{2-4}
& Suzy  & F & HR\\\cdashline{2-4}
& Anny  & F & HR\\
& Anny  & F & IT\\\cdashline{2-4}
& Dolores & F & IT\\\cdashline{2-4}
& Lucy & F & MIS
\end{tabular}
&
\begin{tabular}{l|lc}
$D$ & $\underline{\attribute{Dept}}$ & $\attribute{Building}$\bigstrut\\\cline{2-3}
& HR  & A\\\cdashline{2-3}
& IT  & A\\
& IT  & B\\\cdashline{2-3}
& MIS & B
\end{tabular}
\end{tabular}
\caption{Example database. Primary keys are underlined.}\label{fig:exampledatabase}
\end{figure}

The main contribution of our paper can now be described.
In his doctoral dissertation~\cite{FuxmanThesis}, Fuxman defined a class of self-join-free conjunctive queries, called $\cforest$, and showed the following.

\begin{theorem}[\cite{FuxmanThesis}]\label{the:fuxman}
Every query in $\cforest$ admits parsimonious counting.
\end{theorem}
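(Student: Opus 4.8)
The plan is to show that for every query $q(\vec{z},\vec{w})$ in $\cforest$, both the upper and the lower bounds of $\cqacount{q}{\vec{z}}$ can be computed by a first-order query followed by a simple grouping-and-counting step. The structure of $\cforest$ (a join forest with acyclicity conditions on how key and non-key positions interact) is what makes this possible, so the proof should proceed by exploiting that structure.

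First I would handle the \emph{upper bounds}. The claim to establish is that, for $q\in\cforest$, the upper bound on the count for a grouping value $\vec{c}$ equals the number of distinct $\vec{w}$-values $\vec{d}$ such that $(\vec{c},\vec{d})$ is in $q(\db')$ for \emph{some} repair $\db'$ — i.e., the ``possible answers'' count — and moreover this number is attained in a single repair. One direction (the count on any repair is at most the number of possible answers) is immediate. The substantive claim is that there is one repair realizing all of them simultaneously; here the forest structure is used to show that the choices witnessing different $\vec{d}$'s never conflict irreconcilably, so a single repair can be built. Then I would argue that the set of pairs $(\vec{c},\vec{d})$ appearing in some repair is itself first-order definable — this is where I would invoke the consistent-first-order-rewriting machinery of~\cite{DBLP:journals/tods/KoutrisW17} (applied to the ``possibility'' rather than ``certainty'' side, or via a standard complementation argument on blocks), using that $\cforest$ queries are attack-cycle-free. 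Counting distinct $\vec{d}$'s per $\vec{c}$ on that first-order-definable relation gives the upper bounds.

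Next, the \emph{lower bounds}. Here I would show that the lower bound for $\vec{c}$ equals the number of distinct $\vec{d}$ such that $(\vec{c},\vec{d})$ belongs to $q(\db')$ for \emph{every} repair $\db'$ — the consistent (certain) answers of the full query $q$ viewed as a query with free variables $(\vec{z},\vec{w})$ — and that this common value is attained in some repair, namely one that introduces no ``extra'' pairs beyond the certain ones for $\vec{c}$. Again, one inequality is trivial and the forest structure is needed for the existence of a realizing repair. Since $q\in\cforest$ is attack-cycle-free, its consistent answers are $\FO$-computable by~\cite{DBLP:journals/tods/KoutrisW17}, and grouping-and-counting on that relation yields the lower bounds. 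Assembling the two pieces and checking they meet the formal requirements of Definition~\ref{def:pc} (two FO queries plus counting) completes the argument.

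The main obstacle I anticipate is the ``single realizing repair'' step on each side: it is not enough that each pair $(\vec{c},\vec{d})$ is individually possible (resp. certain); one must exhibit \emph{one} repair that simultaneously attains the extremal count. This is precisely where the acyclicity/forest conditions defining $\cforest$ must be used — by orienting the join tree from the grouping positions outward and choosing block representatives consistently along it, so that the selections made to witness one $\vec{d}$ do not preclude witnessing another. A secondary subtlety is making sure the grouping variables $\vec{z}$ sit in a position in the forest compatible with this orientation; if $\vec{z}$ is spread across incomparable parts of the forest, one has to check the argument still goes through (or that $\cforest$'s definition already precludes the bad configurations). The rest — invoking known $\FO$-rewritability and the routine translation into ``FO query + counting'' — should be straightforward.
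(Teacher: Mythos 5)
There is a genuine gap, and it is in the choice of what gets counted. In Definition~\ref{def:pc}, parsimonious counting is relative to a tuple $\vec{x}$ of bound variables: the upper bound is the number of distinct $\vec{x}$-values $\vec{d}$ with $\db\models q'(\vec{c},\vec{d})$ for $q'=\makefree{q}{\vec{x}}$, and the lower bound is the analogous count over a consistent first-order rewriting of $q'$. Your proposal instead counts over \emph{all} non-grouping variables $\vec{w}$, on both sides, and both of your central claims are then false, already for queries in $\cforest$. Take $q(z)=\exists x\exists y\, R(\underline{x},y,z)$, which is in $\cforest$ (one atom, empty Fuxman graph), and $\db=\{R(\underline{a},b,c),\,R(\underline{a},b',c)\}$. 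The range-consistent answer is $(c,[1,1])$. But the number of $\vec{w}$-tuples $(x,y)$ that are possible (i.e., hold in some repair) is $2$, and no single repair realizes both, so your upper-bound claim ("one repair attains all possible pairs") fails; and the number of $\vec{w}$-tuples that are certain in every repair is $0$, so your lower-bound claim fails as well. The point is that distinct full $\vec{w}$-witnesses for the same group can come from conflicting facts of one block, and no forest condition can reconcile that. What saves the day is projecting the count onto the right coordinates: within any single repair, counting $\vec{w}$-tuples coincides with counting $\vec{x}$-tuples for a suitable $\vec{x}$ because $\vec{z}\cdot\vec{x}$ functionally determines everything else (this is Lemma~\ref{lem:onlyOneVal}), while on the inconsistent $\db$ only the $\vec{x}$-projection gives the correct extremal counts. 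Your argument never identifies such an $\vec{x}$, so even the statements you set out to prove are not the ones required by Definition~\ref{def:pc} (implicitly you take $\vec{x}=\vec{w}$, a choice for which parsimonious counting genuinely fails on the example above).

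Your instinct about orienting the forest and choosing block representatives consistently is the right one, but it belongs to a different step: once $\vec{x}$ is fixed as the bound key variables of the \emph{roots} of the Fuxman graph, one shows that witnesses for distinct $\vec{x}$-values can only disagree on facts they are allowed to share, and hence can be merged into a single (optimistic) repair; dually, a pessimistic repair realizes exactly the consistent answers of $q'(\vec{z},\vec{x})$, not of the full query. This is how the paper proceeds: it proves $\cforest\subseteq\cnewclass$ by exhibiting exactly that $\vec{x}$ as an id-set (acyclicity and weakness of attacks via the Fuxman-graph lemmas, then conditions~\ref{it:source} and~\ref{it:separation} of Definition~\ref{def:newclass}), and then invokes Theorem~\ref{the:mainif}, whose proof is the optimistic/pessimistic-repair argument you sketch but carried out for the $\vec{x}$-projected query. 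A secondary, minor point: possibility of an answer to a self-join-free conjunctive query needs no rewriting machinery or complementation — a single valuation uses at most one fact per relation, so $(\vec{c},\vec{d})$ holds in some repair iff it holds in $\db$; the nontrivial first-order ingredient is only the consistent rewriting used for the lower bound.
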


The class $\cforest$ has been used in several studies on consistent query answering, because of its good computational properties.
It was an open question whether $\cforest$ contains \emph{all} self-join-free conjunctive queries that admit parsimonious counting.
We will answer this question negatively in Section~\ref{sec:cforest}. 
More fundamentally, we introduce a new syntactic class, called $\cnewclass$, which includes $\cforest$ and contains \emph{all} (and only) self-join-free conjunctive queries that admit parsimonious counting.
That is, we prove the following theorem.

\begin{theorem}[Main theorem]\label{the:main}
For every self-join-free conjunctive query $q$, it holds that $q$ admits parsimonious counting if and only if $q$ is in $\cnewclass$.
\end{theorem}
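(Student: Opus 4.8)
The plan is to prove Theorem~\ref{the:main} by establishing the two directions separately, with the ``only if'' direction (hardness) being where the real difficulty lies, and the ``if'' direction (membership implies parsimonious counting) being a generalization of Fuxman's argument for $\cforest$ (Theorem~\ref{the:fuxman}).

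\textbf{The ``if'' direction.} Suppose $q$ is a full self-join-free conjunctive query in $\cnewclass$ (with grouping variables $\vec{z}$). I would first isolate, from the syntactic definition of $\cnewclass$, the structural features that make parsimonious counting possible: intuitively, the non-grouping variables $\vec{w}$ must be ``attached'' to the grouping part of the query in a way that lets one (i)~compute the consistent answer to the full projection onto $(\vec{z},\vec{w})$ by a first-order rewriting, giving the lower bounds after a $\COUNT(\mathtt{DISTINCT})$ step, and (ii)~compute the upper bounds by a first-order query followed by a single $\COUNT(\mathtt{DISTINCT})$ step. For (i), I would invoke the first-order rewritability results of~\cite{DBLP:journals/tods/KoutrisW17} for the relevant sub-query — this requires checking that the acyclicity/attack-graph conditions packaged into $\cnewclass$ imply the query whose consistent answer we need is itself first-order rewritable for $\cqa{\cdot}$. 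For (ii), the key lemma is that on every database, the number of distinct $\vec{c}$-grouped values of $\vec{w}$ that appear in \emph{some} repair's answer equals the maximum over repairs of $\mycount{q}{\vec{z}}$ evaluated at $\vec{c}$; this ``no double counting in the union of repairs'' property is exactly what the $\cnewclass$ conditions are designed to guarantee, and I would prove it by a repair-surgery argument (given a witness for $(\vec{c},\vec{d})$ in one repair, modify blocks locally to realize a repair containing a prescribed set of such witnesses simultaneously). Formally wrapping this up as ``parsimonious counting'' amounts to checking the conditions of Definition~\ref{def:pc}.

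\textbf{The ``only if'' direction.} Suppose $q\notin\cnewclass$; I must show $q$ does not admit parsimonious counting. The strategy is the standard one for such dichotomies: analyze the ``forbidden patterns'' that witness $q\notin\cnewclass$, and for each such pattern build a family of database instances on which \emph{no} pair of first-order queries followed by simple counting can produce the correct bounds. I would split into cases according to which clause of the $\cnewclass$ definition fails. One family of cases should reduce to known $\coNP$-hardness of $\cqa{q'}$ for a sub-query $q'$ (via the reduction sketched in the introduction: $q'(\vec{c})$ holds in every repair iff $\cqacount{q}{\vec{z}}$ returns $(\vec{c},[m,n])$ with $m\geq 1$), which immediately rules out first-order lower bounds unless $\P=\coNP$, hence rules out parsimonious counting. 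The subtler cases are those where $\cqa{q'}$ is tractable (even in $\FO$) for all relevant sub-queries, yet the \emph{counting} still cannot be done parsimoniously: here I expect the obstruction to be ``double counting is unavoidable'' — I would construct instances where the true upper bound at some $\vec{c}$ is strictly smaller than the number of $(\vec{c},\vec{d})$ pairs surviving into the union of all repairs, and simultaneously argue (by an Ehrenfeucht--Fra\"iss\'e / locality argument, or by a direct pumping argument on the instance family) that no bounded-quantifier-rank first-order query can detect the needed correction. A clean way to organize this is to show that if parsimonious counting were possible, then a certain auxiliary decision problem (``is the true count at $\vec{c}$ at least $k$?'', for the specific bad instances) would be first-order expressible, and then refute that by locality.

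\textbf{Main obstacle.} The hard part will be the second group of cases in the ``only if'' direction: establishing non-expressibility of the required counting \emph{despite} tractability of the underlying certain-answer problems. Plain $\coNP$-hardness does not apply there, so I cannot reuse off-the-shelf dichotomy machinery; instead I will need a tailored inexpressibility argument showing that the counts themselves encode information (e.g.\ a parity, or a product of block sizes) that no first-order query plus a fixed counting step can recover. Getting the forbidden patterns of $\cnewclass$ exactly right — tight enough that every query outside the class exhibits one of these obstructions, yet loose enough that every query inside admits the repair-surgery argument of the ``if'' direction — is the crux, and I anticipate the bulk of the proof will be the careful case analysis matching each failed clause of the $\cnewclass$ definition to a concrete bad instance family.
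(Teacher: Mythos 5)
Your ``if'' direction is, in substance, the paper's: acyclicity of the attack graph of $q$ (and of $q'=\makefree{q}{\vec{x}}$, whose attack graph is a subgraph) yields the two consistent first-order rewritings, and your ``repair-surgery'' step is what the paper formalizes as the existence of an \emph{optimistic} repair realizing all of $q'(\db)$ at once (Lemma~\ref{lem:hasOptiRepair}, giving the upper bound) and a \emph{pessimistic} repair realizing exactly the consistent answers (Lemma~\ref{lem:hasPessRepair}, giving the lower bound), combined with the fact that on a consistent instance the count in each $\vec{z}$-group equals the number of distinct $\vec{x}$-values (Lemma~\ref{lem:onlyOneVal}). So that half is on track, modulo supplying these lemmas.

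The ``only if'' direction as you plan it would not prove the stated theorem, and the root cause is a misreading of Definition~\ref{def:pc}. Parsimonious counting does not quantify over arbitrary pairs of first-order queries followed by counting: once the tuple $\vec{x}$ is chosen, the two queries are \emph{forced} --- the upper bound must equal the number of distinct $\vec{d}$ with $\db\models q'(\vec{c},\vec{d})$ for $q'=\makefree{q}{\vec{x}}$ itself, and the lower bound the number of consistent answers to $q'$. Hence to refute parsimonious counting it suffices, for every candidate $\vec{x}$, to exhibit one explicit counterexample database (or a cyclic attack graph, which kills conditions~\ref{it:pcone}/\ref{it:pctwo} unconditionally via Theorem~\ref{the:koutriswijsen}). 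This is exactly what the paper does: tiny two-valuation consistent instances show that a strong attack in $q'$, or a violation of condition~\ref{it:source} in Definition~\ref{def:newclass}, makes the $\vec{x}$-count differ from the true count (Lemmas~\ref{lem:strongImpliesMultipleValuations} and~\ref{lem:freeBadComponentImpliesMultipleValuations}); a violation of condition~\ref{it:separation} is shown to destroy optimistic repairs, while admitting parsimonious counting forces optimistic repairs to exist (Lemmas~\ref{lem:freeBadPathImpliesNoOptiRepair} and~\ref{lem:optimistic}); and Lemma~\ref{lem:strongImpliesStrong} transfers ``no strong attack'' from $q'$ back to $q$. Your plan instead sets out to prove that \emph{no} first-order queries plus a simple counting step can compute the bounds, via $\coNP$-hardness (``unless $\P=\coNP$'') and Ehrenfeucht--Fra\"iss\'e/locality arguments. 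That is both more than is needed and genuinely problematic: the complexity-theoretic route yields only a conditional statement, whereas Theorem~\ref{the:main} is unconditional (the unconditional fact you actually need --- no consistent first-order rewriting when the attack graph is cyclic --- is already Theorem~\ref{the:koutriswijsen}); and the strong inexpressibility claim is close to false, since Example~\ref{ex:almostpc} gives a query outside $\cnewclass$ whose range-consistent answers \emph{are} first-order computable with only a slightly richer counting step, so an argument that ``the counts encode information no first-order query can recover'' cannot go through in the generality you describe. The missing idea is the semantic bridge via optimistic/pessimistic repairs and small tailored counterexample databases, not a new inexpressibility theory.
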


Moreover, a new and simpler proof for Theorem~\ref{the:fuxman} will follow in Section~\ref{sec:cforest}.

The remainder of this paper is organized as follows.
Section~\ref{sec:related} discusses related work.
Section~\ref{sec:preliminaries} introduces preliminary constructs and notations.
Section~\ref{sec:pc} introduces the semantic notion of parsimonious counting.
Section~\ref{sec:newclass} introduces our new syntactic class of queries, called $\cnewclass$, which restricts self-join-free conjunctive queries. 
Section~\ref{sec:soundness} shows that every query in $\cnewclass$ admits parsimonious counting, and Section~\ref{sec:completeness} shows that  $\cnewclass$ contains every self-join-free conjunctive query that admits parsimonious counting.
Section~\ref{sec:cforest} shows that $\cforest$ is strictly included in $\cnewclass$, and provides a new proof for Theorem~\ref{the:fuxman}.
Section~\ref{sec:conclusion} concludes the paper.
Several helping lemmas and proofs are in the appendix.

\section{Related Work}\label{sec:related}

Consistent query answering (CQA) started by a seminal paper in 1999 co-authored by Arenas, Bertossi, and Chomicki~\cite{DBLP:conf/pods/ArenasBC99}, who introduced the notions of repair and consistent answer. 
Two years later, the same authors introduced the \emph{range semantics} (with lower and upper bounds) for queries with aggregation~\cite{DBLP:conf/icdt/ArenasBC01,DBLP:journals/tcs/ArenasBCHRS03}\cite[Chapter~5]{DBLP:series/synthesis/2011Bertossi}, which has been commonly adopted ever since.
In particular, it was adopted in the PhD thesis~\cite{FuxmanThesis} of Fuxman, who provided Theorem~\ref{the:fuxman} (albeit using different terminology) and its proof, and used this result in the implementation of the ConQuer system~\cite{DBLP:conf/sigmod/FuxmanFM05}.
ConQuer aims at computations in first-order logic with counting (coined ``parsimonious counting'' in the current paper), which can be encoded in SQL. 
This is different from AggCAvSAT~\cite{DBLP:conf/icde/DixitK22}, a recent system by Dixit and Kolaitis, which uses powerful SAT solvers for computing range semantics, and thus can solve queries that are beyond the computational power of ConQuer.
Aggregation queries were also studied in the context of CQA in~\cite{DBLP:journals/is/BertossiBFL08}.


Consistent query answering for self-join-free conjunctive queries~$q$ and primary keys has been intensively studied. 
Its decision variant, which was coined $\cqa{q}$ in 2010~\cite{DBLP:conf/pods/Wijsen10}, asks whether a Boolean query~$q$ is true in every repair of a given database instance.
A systematic study of its complexity for self-join-free conjunctive queries had started already in~2005~\cite{DBLP:conf/icdt/FuxmanM05}, and was eventually solved in two journal articles by Koutris and Wijsen~\cite{DBLP:journals/tods/KoutrisW17,DBLP:journals/mst/KoutrisW21}, as follows: for every self-join-free Boolean conjunctive query~$q$, $\cqa{q}$ is either in $\FO$, $\L$-complete, or $\coNP$-complete, and it is decidable, given $q$, which case applies.
This complexity classification extends to non-Boolean queries by treating free variables as constants.
Other extensions beyond this trichotomy deal with foreign keys~\cite{DBLP:conf/pods/HannulaW22}, more than one key per relation~\cite{DBLP:conf/pods/KoutrisW20}, negated atoms~\cite{DBLP:conf/pods/KoutrisW18}, or restricted self-joins~\cite{DBLP:conf/pods/KoutrisOW21}.
For unions of conjunctive queries~$q$, Fontaine~\cite{DBLP:journals/tocl/Fontaine15} established interesting relationships between $\cqa{q}$  and Bulatov's dichotomy theorem
for conservative CSP~\cite{DBLP:journals/tocl/Bulatov11}.

The counting variant of $\cqa{q}$, denoted
$\sharp\cqa{q}$, asks to count the number of repairs that satisfy some Boolean query~$q$.
This counting problem is fundamentally different from the range semantics in the current paper.
For self-join-free conjunctive queries, $\sharp\cqa{q}$ exhibits a dichotomy between  $\FP$ and $\sharp\P$-complete under polynomial-time Turing reductions~\cite{DBLP:journals/jcss/MaslowskiW13}. This dichotomy has been shown to extend to queries with self-joins if primary keys are singletons~\cite{DBLP:conf/icdt/MaslowskiW14}, and to functional dependencies~\cite{DBLP:conf/pods/CalauttiLPS22a}.
Calautti, Console, and Pieris present in~\cite{DBLP:conf/pods/CalauttiCP19} a complexity analysis of these counting problems under weaker reductions, in particular, under many-one logspace reductions.
The same authors have conducted an experimental evaluation of randomized approximation schemes for approximating the percentage of repairs that satisfy a given query~\cite{DBLP:conf/pods/CalauttiCP21}.
Other approaches to making CQA more meaningful and/or tractable include operational repairs~\cite{DBLP:conf/pods/CalauttiLP18,DBLP:conf/pods/CalauttiLPS22} and preferred repairs~\cite{DBLP:journals/tcs/KimelfeldLP20,DBLP:journals/amai/StaworkoCM12}.

Recent overviews of two decades of theoretical research in CQA are \cite{DBLP:conf/pods/Bertossi19,DBLP:journals/sigmod/Wijsen19}.
It is worthwhile to note that theoretical research in $\cqa{q}$ has stimulated implementations and experiments in prototype systems~\cite{DBLP:conf/vldb/FuxmanFM05,DBLP:conf/sigmod/FuxmanFM05,DBLP:journals/pvldb/KolaitisPT13,DBLP:conf/sat/DixitK19,DBLP:conf/cikm/KhalfiouiJLSW20,DBLP:journals/corr/abs-2208-12339}.

\section{Preliminaries}\label{sec:preliminaries}

We assume that every relation name~$R$ is associated with an \emph{arity}, which is a positive integer.
We assume that all \emph{primary-key positions} precede all \emph{non-primary-key positions}.
We say that $R$ has \emph{signature} $\signature{n}{k}$ if $R$ has arity~$n$ and primary-key positions $1,\dots,k$.

If $R$ has signature $\signature{n}{k}$ and $s_{1},\dots,s_{n}$ are variables or constants, then $R(s_{1},\dots,s_{n})$ is an \emph{$R$-atom} (or simply \emph{atom}), which will often be denoted as $R(\underline{s_{1},\dots,s_{k}},s_{k+1},\dots,s_{n})$ to distinguish between primary-key and non-primary-key positions.
Two atoms $R_{1}(\underline{\vec{s}_{1}},\vec{t}_{1})$ and $R_{2}(\underline{\vec{s}_{2}},\vec{t}_{2})$ are said to be \emph{key-equal} if $R_{1}=R_{2}$ and $\vec{s}_{1}=\vec{s}_{2}$.
A \emph{fact} is an atom in which no variable occurs.
A \emph{database instance} (or simply \emph{database}) is a finite set of facts.
A database instance~$\db$ is \emph{consistent} if it does not contain two distinct key-equal facts.
A \emph{repair} of~$\db$ is a $\subseteq$-maximal consistent subset of~$\db$.

If $\vec{s}$ is a tuple of variables or constants, then $\arity{\vec{s}}$ denotes the arity of $\vec{s}$, and $\vars{\vec{s}}$ denotes the set of variables occurring in~$\vec{s}$.
By an abuse of notation, if we use a tuple $\vec{z}$ of variables at places where a set of variables is expected, we mean $\vars{\vec{z}}$. 
For an atom $F=R(\underline{\vec{s}},\vec{t})$, we define $\key{F}\defeq\vars{\vec{s}}$, $\notkey{F}\defeq\vars{\vec{t}}\setminus\vars{\vec{s}}$, and $\vars{F}\defeq\vars{\vec{s}}\cup\vars{\vec{t}}$.
For example, if $F=R(\underline{c,x,x,y},y,z,c)$, then $\key{F}=\{x,y\}$ and $\notkey{F}=\{z\}$, where $c$ is a constant.

\myheading{Conjunctive Queries.}
A conjunctive query~$q$ is a first-order formula of the form:
\begin{equation}\label{eq:cq}
\exists \vec{w}\formula{R_{1}(\underline{\vec{x}_{1}}, \vec{y}_{1}) \land \dotsm \land R_{n}(\underline{\vec{x}_{n}}, \vec{y}_{n})},
\end{equation}
where the variables of $\vec{w}$ are \emph{bound}, and the other variables are \emph{free}. 
Such a query is also denoted by $q(\vec{z})$ with $\vec{z}$ a tuple composed of the free variables. 
We write $\vars{q}$ for the set of variables that occur in $q$, and can assume $\vars{q}=\vars{\vec{w}}\cup\vars{\vec{z}}$ without loss of generality.
We say that $q$ is \emph{full} if all variables of $\vars{q}$ are free.
We say that $q$ is \emph{self-join-free} if $i\neq j$ implies $R_{i}\neq R_{j}$.
The quantifier-free part $R_{1}(\underline{\vec{x}_{1}}, \vec{y}_{1}) \land \dotsm \land R_{n}(\underline{\vec{x}_{n}}, \vec{y}_{n})$ of $q$ is denoted $\body{q}$.
By slightly overloading notation, we also use $\body{q}$ for the set $\{R_{1}(\underline{\vec{x}_{1}},\vec{y}_{1}), \ldots, R_{n}(\underline{\vec{x}_{n}},\vec{y}_{n})\}$. 
We write $\free{q}$ for the set of free variables in~$q$.

If a self-join-free conjunctive query~$q$ is understood, and we use a relation name~$R$ at places where an atom is expected, then we mean the unique $R$-atom of~$q$.
If $\vec{c}$ is a tuple of constants of arity $\arity{\vec{z}}$ and $\db$ a database instance,
then $\db\models q(\vec{c})$ denotes that $q(\vec{c})$ is true in $\db$ using standard first-order semantics. 
If $\db\models q(\vec{c})$, we also write $\vec{c}\in q(\db)$, and we say that $\vec{c}$ is an \emph{answer} to $q$ on $\db$.

We now introduce operators for turning bound variables into free variables, or vice versa, and for instantiating free variables.
\begin{description}
\item[Making bound variables free.]
Let $q$ be a conjunctive query with $\free{q}=\vec{z}$. 
Let $\vec{x}$ be a tuple of (not necessarily all) bound variables in $q$ (hence $\vec{x}\cap\vec{z}=\emptyset$).
We write $\makefree{q}{\vec{x}}$ for the conjunctive query $q'$ such that $\free{q'}=\vec{z}\cup\vec{x}$ and $\body{q'}=\body{q}$. 
Informally, $\makefree{q}{\vec{x}}$ is obtained from~$q$ by omitting the quantification $\exists\vec{x}$.
For example, if $q(z)=\exists x\exists y\, R(x,y)\land R(y,z)$,
then $\makefree{q}{x}=\exists y\, R(x,y)\land R(y,z)$.
\item[Binding free variables.]
Let $q$ be a conjunctive query, and $\vec{x}$ a tuple of (not necessarily all) free variables of~$q$.
Then $\makebound{q}{\vec{x}}$ denotes the query with the same body as $q$,
but whose set of free variables is $\free{q}\setminus\vec{x}$.
\item[Instantiating free variables.]
Let $q$ be a conjunctive query, and $\vec{z}$ a tuple of distinct free variables of~$q$. 
Let $\vec{c}$ be a tuple of constants of arity~$\arity{\vec{z}}$.
Then $\substitute{q}{\vec{z}}{\vec{c}}$ is the query obtained from~$q$ by replacing, for every $i\in\{1,2,\ldots,\arity{\vec{z}}\}$, each occurrence of the $i$th variable in~$\vec{z}$ by the $i$th constant in~$\vec{c}$.
\end{description}

\myheading{Consistent Query Answering.}
Let $q(\vec{z})$ be a conjunctive query.
We write $\db\cqamodels q(\vec{c})$ if for every repair $\rep$ of $\db$, we have $\rep\models q(\vec{c})$.
If $\db\cqamodels q(\vec{c})$, we also say that $\vec{c}$ is a \emph{consistent answer} to $q$ on~$\db$.
A \emph{consistent first-order rewriting} of $q(\vec{z})$ is a first-order formula $\varphi(\vec{z})$ such that for every database instance $\db$ and every tuple $\vec{c}$ of constants of arity~$\arity{\vec{z}}$,
we have $\db\cqamodels q(\vec{c})$ if and only if $\db\models\varphi(\vec{c})$.
Note incidentally that the set of integrity constraints is always implicitly understood to be the primary keys associated with the relation names that occur in the query.
 




\myheading{Query Graph.}
The \emph{query graph} of a conjunctive query $q(\vec{z})$ is an undirected graph whose vertices are the bound variables of $q$.
There is an edge between $x$ and $y$ if $x\neq y$ and $x,y$ occur together in some atom of $\body{q}$.

\myheading{Attack Graph.}
The following is a straightforward extension of attack graphs~\cite{DBLP:journals/tods/KoutrisW17} to deal with free variables.

Let $q(\vec{z})$ be a self-join-free conjunctive query.
If $S$ is a subset of $\body{q}$, then $q\setminus S$ denotes the query obtained from $q$ by removing from~$q$ all atoms in~$S$.
Every variable of $q\setminus S$ that is free in $q$ remains free in~$q\setminus S$;
and every variable of $q\setminus S$ that is bound in $q$ remains bound in  $q\setminus S$.

We define $\fdset{q}$ as the set of functional dependencies that contains $\fd{\emptyset}{\free{q}}$ and contains,
for every atom $F$ in~$q$, the functional dependency $\fd{\key{F}}{\vars{F}}$.
Note that since $\fdset{q}$ contains $\fd{\emptyset}{\free{q}}$, we have that $\fdset{q}\models\fd{\emptyset}{y}$ if and only if $\fdset{q}\models\fd{\free{q}}{y}$, for each $y\in\vars{q}$.
If $F$ is an atom of $q$, then $\keycl{F}{q}$ is the set that contains every variable $y\in\vars{q}$ such that either $y\in\free{q}$ or $\fdset{q\setminus\{F\}}\models\fd{\key{F}}{y}$ (or both).

It is known~\cite{DBLP:journals/tods/KoutrisW17} that in the study of consistent query answering for self-join-free conjunctive queries, free variables can often be treated as constants. 
The addition of functional dependencies $\fd{\emptyset}{\free{q}}$ has the same effect as treating variables in $\free{q}$ as constants.
In the following example, we omit curly braces and commas when denoting sets of variables.
For example, $\{z_{1},z_{2}\}$ is denoted $z_{1}z_{2}$.

\begin{example}\label{ex:fdset}
Let 
$q=\exists u\exists v\exists x\exists y\, R(\underline{u},x)\land S(\underline{x,z_{1}},y)\land T(\underline{y},v,z_{2})\land U(\underline{y},u)$.
We have $\free{q}=z_{1}z_{2}$.
Then, $q\setminus\{T\}$\footnote{Recall that we use $T$ as a shorthand for the $T$-atom of~$q$.} is the query 
$\exists u\exists v\exists x\exists y\, R(\underline{u},x)\land S(\underline{x,z_{1}},y)\land U(\underline{y},u)$, 
whose only free variable is~$z_{1}$.
Note incidentally that since $v$ does not occur in the latter query, the quantification $\exists v$ can be dropped.
We have $\fdset{q\setminus\{T\}}=\{\fd{\emptyset}{z_{1}}, \fd{u}{x}, \fd{xz_{1}}{y}, \fd{y}{u}\}$.
Note that $\fd{\emptyset}{z_{1}}$ belongs to the latter set because $z_{1}$ is free in~$q\setminus\{T\}$.
The closure of $\key{T}$ with respect to $\fdset{q\setminus\{T\}}$ is $uxyz_{1}$.
Finally, we obtain $\keycl{T}{q}=uxyz_{1}z_{2}$.
Note that the latter set contains the variable $z_{2}$ that is free in~$q$.
\qed
\end{example}

We say that an atom $F$ of $q$ \emph{attacks} a variable $x$ occurring in $q$, denoted $\attacks{F}{x}{q}$, if there exists a sequence 
\begin{equation}\label{eq:witness}
\tuple{x_{1}, x_{2}, \ldots, x_{n}}
\end{equation}
of bound variables of~$q$ ($n\geq 1$) such that:
\begin{enumerate}
\item
if two variables are adjacent in the sequence, then they occur together in some atom of $q$;
\item
$x_{1}\in\notkey{F}$ and $x_{n}=x$; and
\item
for every $\ell\in\{1,\ldots,n\}$, $x_{\ell}\not\in\keycl{F}{q}$.
\end{enumerate}
The sequence~\eqref{eq:witness} will be called a \emph{witness} that $\attacks{F}{x}{q}$.
We say that an atom $F$ of $q$ attacks another atom~$G$ of~$q$, denoted $\attacks{F}{G}{q}$, if $F\neq G$ and $F$ attacks some variable of $\vars{G}$.
It is now easily verified that if~$F$ attacks $G$, then $F$ also attacks a variable in $\key{G}$.
A variable or atom that is not attacked, is called \emph{unattacked} (where~$q$ is understood from the context).
The \emph{attack graph} of $q$ is a directed graph whose vertices are the atoms of $q$;
there is a directed edge from $F$ to $G$ if $\attacks{F}{G}{q}$.
A directed edge in the attack graph is called an \emph{attack}.

Koutris and Wijsen~\cite{DBLP:journals/tods/KoutrisW17} showed the following.
\begin{theorem}[\cite{DBLP:journals/tods/KoutrisW17}]\label{the:koutriswijsen}
A self-join-free conjunctive query $q(\vec{z})$ has a consistent first-order rewriting if and only if its attack graph is acyclic.
\end{theorem}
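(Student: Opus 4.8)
The plan is to prove the two implications separately, and to reduce at once to the Boolean case: it suffices to establish the equivalence after instantiating the free variables $\vec{z}$ by an arbitrary tuple $\vec{c}$ of constants, since the functional dependency $\fd{\emptyset}{\free{q}}$ built into $\fdset{q}$ makes the variables of $\free{q}$ behave exactly as constants do, both in the semantics and in the attack-graph definition. So below I take $q$ to be Boolean, and the rewriting I construct will in fact keep $\vec{z}$ free, uniformly over all $\vec{c}$.

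\textbf{Acyclic $\Rightarrow$ first-order rewriting.} Here I would argue by induction on $\card{\body{q}}$, peeling off a source of the attack graph. Two structural facts are needed. First, a finite acyclic digraph has a source, so there is an unattacked atom $F=R(\underline{\vec{x}},\vec{y})$. Second, and this is the delicate point, if $F$ is unattacked in $q$ then the attack graph of $q\setminus\{F\}$ is again acyclic; note that this is not obvious, because deleting $F$ discards the dependency $\fd{\key{F}}{\vars{F}}$, so the key closures $\keycl{G}{\cdot}$ of the remaining atoms can only shrink and hence new attacks may appear, and the lemma must show that none of these new attacks closes a cycle. Granting this, I define the rewriting recursively by
\[
\varphi_q \;=\; \exists\vec{x}\Bigl[\bigl(\exists\vec{y}\,R(\underline{\vec{x}},\vec{y})\bigr)\;\land\;\forall\vec{y}\bigl(R(\underline{\vec{x}},\vec{y})\rightarrow\varphi_{q\setminus\{F\}}\bigr)\Bigr],
\]
where $\varphi_{q\setminus\{F\}}$ is the rewriting obtained inductively for $q\setminus\{F\}$, whose free variables among $\vars{\vec{x}}\cup\vars{\vec{y}}$ are precisely the variables of $F$ shared with the rest of $q$; these are bound by the outer $\exists\vec{x}$ and the $\forall\vec{y}$. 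Correctness rests on the claim that, because $F$ is unattacked, $\db\cqamodels q$ holds if and only if there is a key value $\vec{a}$ for $\vec{x}$ whose block of $R$-facts is nonempty and such that \emph{every} fact $R(\underline{\vec{a}},\vec{b})$ of that block extends to a consistent satisfaction of $\substitute{(q\setminus\{F\})}{\vec{x}\vec{y}}{\vec{a}\vec{b}}$. The $\forall$-quantifier encodes the ``for every fact of the block'' condition, and the inner recursion is sound exactly because $F$ unattacked means no atom of $q\setminus\{F\}$ attacks a variable frozen by the substitution.

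\textbf{Cyclic $\Rightarrow$ no first-order rewriting.} Here the plan is to show that a cycle forces $\cqa{q}$ to express a property provably outside $\FO$. I would first extract from an arbitrary attack cycle a short, well-structured one by composing the witnesses of the form~\eqref{eq:witness}; the canonical target is a $2$-cycle $\attacks{F}{G}{q}$ and $\attacks{G}{F}{q}$, together with the distinction between \emph{weak} and \emph{strong} cycles coming from whether the attacking variables lie inside the mutual key closures. Around such a cycle I would build, for each $n$, two database families $\db_n$ and $\db_n'$ that are separated by the consistent answer, $\db_n\cqamodels q$ but $\db_n'\not\cqamodels q$, yet that agree on all first-order sentences of quantifier rank at most $n$. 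For a weak cycle the gadget encodes undirected $st$-reachability, and for a strong cycle it encodes (the complement of) a $\coNP$-complete problem; in either case the separating property is not first-order definable, which I would certify by an Ehrenfeucht--Fra\"{i}ss\'{e} game (equivalently, Gaifman/Hanf locality) on $\db_n,\db_n'$. Since any candidate rewriting $\varphi$ has a fixed quantifier rank, this rules it out. The unconditional separation that FO $\subseteq$ AC$^0$ is strictly weaker than L (non-definability of reachability) handles the weak case, and is immediate for the strong case.

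I expect the main obstacle to lie entirely in the second direction: designing each block's conflict structure so that the two families are simultaneously separated by the consistent answer and EF-equivalent up to rank $n$ is the technical heart, and reducing a general attack cycle to a manageable weak/strong $2$-cycle requires careful bookkeeping of how witnesses concatenate. The first direction is more routine, its only subtle ingredient being the preservation-of-acyclicity lemma, whose proof hinges on the monotonicity of $\keycl{\cdot}{\cdot}$ under atom removal and the fact that $F$ being a source blocks any newly created attack from participating in a cycle.
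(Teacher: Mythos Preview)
The paper does not contain a proof of this theorem: it is quoted from~\cite{DBLP:journals/tods/KoutrisW17} and used as a black box throughout. There is therefore nothing in the present paper to compare your proposal against.

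That said, your sketch follows the broad architecture of the original Koutris--Wijsen argument. For the acyclic-implies-rewritable direction, peeling off an unattacked atom and recursing is exactly their approach, and you are right that the preservation-of-acyclicity lemma (the attack graph of $q\setminus\{F\}$ stays acyclic when $F$ is unattacked) is the one nontrivial structural ingredient; your rewriting template is also the standard one. For the converse, the original likewise reduces an arbitrary attack cycle to a $2$-cycle and then splits on weak versus strong: a weak $2$-cycle yields $\L$-hardness (hence not in $\FO$, unconditionally), and a strong $2$-cycle yields $\coNP$-hardness. Your plan is faithful to this outline, though the cycle-shortening argument and the gadget constructions are substantial and constitute the bulk of the work in the original; what you have written is a correct high-level roadmap rather than a proof.
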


An attack from $F$ to $G$ is \emph{weak} if $\fdset{q}$ logically implies $\fd{\key{F}}{\key{G}}$; otherwise it is strong.
By a \emph{component} of an attack graph, we always mean a maximal weakly connected component.


Let $q$ be a self-join-free conjunctive query.
Whenever the relationship $\fdset{q}\models\fd{Z}{w}$ holds true, then there exists a sequential proof of it, as defined next.

\myheading{Sequential Proof.}
Let $q(\vec{z})$ be a self-join-free conjunctive query, and $Z\subseteq\vars{q}$.
Let $\tuple{F_{1}, F_{2}, \ldots, F_{n}}$ be a (possibly empty) sequence of atoms in $\body{q}$ such that for every $i\in \{1,\ldots,n\}$, 
$\key{F_i}\subseteq\free{q}\cup Z\cup\formula{\bigcup_{j=1}^{i-1} \vars{F_j}}$.
Such a sequence is called a \emph{sequential proof} of $\fdset{q}\models\fd{Z}{w}$, for every $w\in\free{q}\cup Z\cup\formula{\bigcup_{j=1}^{n} \vars{F_j}}$.
A sequential proof of $\fdset{q}\models\fd{Z}{w}$ is called \emph{minimal} if $\tuple{F_{1},\ldots,F_{n-1}}$ is not a sequential proof of $\fdset{q}\models\fd{Z}{w}$.


%


\section{Parsimonious Counting}\label{sec:pc}

Consider a conjunctive query $q(\vec{z})=\exists\vec{w}\; B$, with $B$ a quantifier-free conjunction of atoms (called the \emph{body}). 
We introduce a query that takes a database instance $\db$ as input and returns, for every tuple $\vec{c}\in q(\db)$, the number of valuations for $\vec{w}$ that make the query true. 

\begin{definition}[$\mycount{q}{\vec{z}}$]\label{def:cntsimplified}
Let $q(\vec{z},\vec{w})$ be a full conjunctive query, in which notation it is understood that $\vec{z}$ and $\vec{w}$ are disjoint, duplicate-free tuples of variables. 
$\mycount{q}{\vec{z}}$ is the query that takes as input a database instance $\db$ and returns every tuple $(\vec{c},i)$ for which the following hold:
\begin{enumerate}
\item 
$\vec{c}$ a tuple of constants of arity $\arity{\vec{z}}$; and 
\item
$i$ is a positive integer such that $i$ is the number of distinct tuples $\vec{d}$, of arity~$\arity{\vec{w}}$, satisfying $\db\models q(\vec{c},\vec{d})$. 
\end{enumerate}
A maximal set of answers to $q(\db)$ that agree on $\vec{z}$ will also be called a \emph{$\vec{z}$-group} (where $q$ and $\db$ are implicitly understood).
Thus, $\mycount{q}{\vec{z}}$ counts the number of tuples in each $\vec{z}$-group.
\end{definition}

The following definition introduces range consistent query answers as introduced in~\cite{DBLP:conf/icdt/ArenasBC01}.

\begin{definition}[$\cqacount{q}{\vec{z}}$]\label{def:cqacnt}
Let $q(\vec{z},\vec{w})$ be a full conjunctive query, in which notation it is understood that $\vec{z}$ and $\vec{w}$ are disjoint, duplicate-free tuples of variables. 
$\cqacount{q}{\vec{z}}$ is the query that takes as input a database instance $\db$ and returns every tuple $(\vec{c}, [m,n])$ for which the following hold:
\begin{enumerate}
    \item\label{it:cqacntone} for every repair $\rep$ of $\db$, there exists $\vec{d}$ such that $\rep\models q(\vec{c},\vec{d})$; 
    \item\label{it:cqacnttwo} there is a repair of $\db$ on which $\mycount{q}{\vec{z}}$ returns $(\vec{c},m)$; 
    \item\label{it:cqacntthree} there is a repair of $\db$ on which $\mycount{q}{\vec{z}}$ returns $(\vec{c},n)$; and
    \item\label{it:cqacntfour} if $\mycount{q}{\vec{z}}$ returns $(\vec{c},i)$ on some repair of $\db$, then $m\leq i\leq n$.
\end{enumerate}
Note that it follows that $m\geq 1$.
If  $(\vec{c}, [m,n])$ is an answer to $\cqacount{q}{\vec{z}}$ on $\db$,
then we will say that it is a \emph{range-consistent answer}.
\end{definition}

The following proposition states that computing $\cqacount{q(\vec{z},\vec{w})}{\vec{z}}$ can be $\NP$-hard, even if the query $\makebound{q}{\vec{w}}$ has a consistent first-order rewriting.

\begin{proposition}\label{pro:threedm}
There exists a self-join-free conjunctive query $q(\vec{z})$ that has a consistent first-order rewriting such that $\cqacount{\body{q}}{\vec{z}}$
is $\NP$-hard to compute.
\end{proposition}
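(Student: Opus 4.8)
The plan is to exhibit a concrete query and reduce from a known \NP-hard problem, following the template already hinted at in the introduction: one shows there is a self-join-free Boolean (or free-variable) conjunctive query $q'(\vec{z})$ whose attack graph is acyclic (so $q'$ has a consistent first-order rewriting) but for which counting-over-repairs is hard. The natural source of hardness here is \textsc{3-Dimensional Matching} (3DM), which is suggested by the proposition's label \texttt{pro:threedm}. First I would fix a query such as $q(\vec{z}) = \exists \vec{w}\,\bigl(R_{1}(\underline{\vec{x}_1},\vec{y}_1)\wedge \dotsm \wedge R_{n}(\underline{\vec{x}_n},\vec{y}_n)\bigr)$ designed so that: (i) its attack graph is acyclic, hence by Theorem~\ref{the:koutriswijsen} it has a consistent first-order rewriting; yet (ii) a 3DM instance can be encoded into a database $\db$ so that the range-consistent count $\cqacount{\body{q}}{\vec{z}}$ for a designated grouping value $\vec{c}$ encodes the size of a maximum (or the existence of a perfect) 3-dimensional matching.

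The encoding I would use: given a 3DM instance with ground sets $A,B,C$ (each of size $m$) and a set $T\subseteq A\times B\times C$ of triples, build a database over relations whose keys force ``choices'' — for instance, a relation keyed on the triple-identifier with a non-key attribute ranging over the elements of $A$, and similarly blocks keyed on elements of $A$, $B$, $C$ that each point (via conflicting key-equal facts) back to the triples containing them. A repair then corresponds to a way of (a) selecting, for each element, one incident triple, and the query $\body{q}$ counts, in its $\vec{z}$-group, exactly those triples that are consistently selected from all three sides simultaneously — i.e., the triples in the chosen matching. Since 3DM has a perfect matching iff some repair yields a count of $m$, the tight upper bound $n$ in $\cqacount{\body{q}}{\vec{z}}$ returning $(\vec{c},[m',n])$ determines whether a perfect matching exists; thus computing $\cqacount{\body{q}}{\vec{z}}$ is \NP-hard. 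I would verify separately that the Boolean query $\makebound{q}{\vec{w}}$ (or $q'$) obtained by existentially quantifying $\vec{w}$ has an acyclic attack graph — this is a purely syntactic check on the chosen relation signatures and the placement of variables in key versus non-key positions.

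The key steps, in order, are: (1) recall the definition of 3DM and that it is \NP-complete; (2) define the fixed query $q$ with explicit signatures, free variables $\vec{z}$, and bound variables $\vec{w}$, engineered so the attack graph is acyclic; (3) describe the polynomial-time reduction mapping a 3DM instance to a database instance $\db$, with a distinguished constant tuple $\vec{c}$; (4) prove the correspondence between repairs of $\db$ and ``partial selections'' on the 3DM instance, and show that $\mycount{\body{q}}{\vec{z}}$ on a repair returns $(\vec{c},k)$ where $k$ is the number of fully-agreeing triples in that selection; (5) conclude that the maximum such $k$ over repairs equals the maximum matching size, so extracting the upper bound $n$ from $\cqacount{\body{q}}{\vec{z}}$ solves 3DM; (6) note that $\makebound{q}{\vec{w}}$ has a consistent first-order rewriting by Theorem~\ref{the:koutriswijsen}.

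The main obstacle is step (2)–(4) working together: the query must be simple enough that its attack graph is acyclic (so the rewriting exists and the contrast with Proposition~\ref{pro:threedm} is sharp), yet expressive enough that repairs genuinely range over independent ``choices'' whose joint consistency is the combinatorial core of 3DM. Balancing these is delicate because attacks are driven exactly by bound variables escaping the key-closure $\keycl{F}{q}$, so the variable whose repair-choices encode the matching must be placed in non-key positions carefully — if it attacks another atom the attack graph becomes cyclic and Theorem~\ref{the:koutriswijsen} no longer gives the rewriting. I expect the bulk of the work to be choosing the right signatures and then checking both the acyclicity of the attack graph and the soundness/completeness of the reduction, rather than any deep new idea; the hardness itself is a fairly standard gadget once the query is pinned down.
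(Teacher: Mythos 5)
Your high-level plan coincides with the paper's: reduce from \textsc{3DM}, encode each element of $A_{1}\cup A_{2}\cup A_{3}$ as a key block whose facts point to the incident triples, so that a repair is a per-element selection and the count in the designated $\vec{z}$-group is the number of fully agreed triples. However, the two points you yourself flag as ``delicate'' are exactly where the proof lives, and you leave both unresolved, so as it stands the argument has genuine gaps. First, you never exhibit a query that simultaneously encodes \textsc{3DM} and has an acyclic attack graph, and the natural candidate fails: for $q(z)=\exists x_{1}\exists x_{2}\exists x_{3}\exists y\, Z(\underline{z})\land R_{1}(\underline{x_{1}},y)\land R_{2}(\underline{x_{2}},y)\land R_{3}(\underline{x_{3}},y)$ one has $y\notin\keycl{R_{i}}{q}$ for each $i$, so every $R_{i}$ attacks every $R_{j}$ and the attack graph is cyclic, killing the appeal to Theorem~\ref{the:koutriswijsen}. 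The paper's construction resolves this with a specific gadget you do not mention: each atom is duplicated by a twin $S_{i}(\underline{x_{i}},y)$, so that $\fdset{q\setminus\{R_{i}\}}$ contains $\fd{x_{i}}{y}$ via $S_{i}$, hence $y\in\keycl{R_{i}}{q}$ and the attack graph has no edges at all, while the repair combinatorics is unchanged. Saying ``I would verify acyclicity, a purely syntactic check'' is not enough when the check fails for the obvious query and the fix is the actual idea.

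Second, your step (5) reads off the tight upper bound from $\cqacount{\body{q}}{\vec{z}}$, but under Definition~\ref{def:cqacnt} a tuple $(\vec{c},[m,n])$ is returned only if \emph{every} repair satisfies the query for $\vec{c}$ (so $m\geq 1$). In your encoding a repair may make incompatible selections so that no triple is fully agreed; then that repair contributes no tuple to the $\vec{c}$-group, the range-consistent answer for $\vec{c}$ is empty, and no upper bound can be extracted --- even when a perfect matching exists. So the equivalence ``perfect matching exists iff the returned upper bound equals $m$'' is not established. The paper handles this by padding the database with fresh facts $R_{i}(\underline{\bot_{i}},\top)$, $S_{i}(\underline{\bot_{i}},\top)$ that lie in every repair, which guarantees $\db\cqamodels q(c)$ and shifts the target bound to $n+1$; the backward direction then argues that any repair attaining $n+1$ answers yields $n$ pairwise coordinate-disjoint triples, i.e.\ a matching. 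Your proposal would need both of these gadgets (or substitutes) before steps (2)--(5) go through.
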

\begin{proof}
The following problem is $\NP$-complete~\cite{DBLP:books/fm/GareyJ79}.
\begin{description}
\item[\problem{3-DIMENSIONAL MATCHING} (\problem{3DM})]
\item[INSTANCE:] A set $M\subseteq A_{1}\times A_{2}\times A_{3}$, where $A_{1}$, $A_{2}$, $A_{3}$ are disjoint sets 
having the same number $n$ of elements. 
\item[QUESTION:] Does $M$ contain a matching, that is, a subset $M'\subseteq M$ such 
that $\card{M'}=n$ and no two elements of $M'$  agree in any coordinate? 
\end{description}
Consider the query 
$$
q(z)=\exists x_{1}\exists x_{2}\exists x_{3}\exists y\, Z(\underline{z})\land\bigwedge_{i=1}^{3}\formula{R_{i}(\underline{x_{i}},y)\land S_{i}(\underline{x_{i}},y)}.
$$
The edge-set of $q$'s attack graph is empty.
Therefore, $q$'s attack graph is acyclic.
By Theorem~\ref{the:koutriswijsen}, $q(z)$ has a consistent first-order rewriting.
Let $M\subseteq A_{1}\times A_{2}\times A_{3}$ be an instance of \problem{3DM}.
Let $\db_{M}$ be the database instance that contains $Z(\underline{c})$ and includes,
for every $a_{1}a_{2}a_{3}$ in $M$, $\bigcup_{i=1}^{3}\{R_{i}(\underline{a_{i}},a_{1}a_{2}a_{3}), S_{i}(\underline{a_{i}},a_{1}a_{2}a_{3})\}$.
Moreover, $\db_{M}$ includes $D\defeq\bigcup_{i=1}^{3}\{R_{i}(\underline{\bot_{i}},\top), S_{i}(\underline{\bot_{i}},\top)\}$,
where $\bot_{1}, \bot_{2}, \bot_{3}, \top$ are fresh constants not in $A_{1}\cup A_{2}\cup A_{3}$.
Clearly, $\db_{M}$ is first-order computable from~$M$.

We show the following: $M$ has a matching
if and only if
for some $\ell$, $\cqacount{\body{q}}{z}$ returns $(c,[\ell,n+1])$ on $\db_{M}$.
Before delving into the proof, we provide an example.
\begin{example}
Let $A_{1}=\{a,b\}$, $A_{2}=\{d,e\}$, $A_{3}=\{f,g\}$, and $M=\{adf, aeg, beg\}$.
Thus, $n=\card{A_{1}}=2$.
Then, we construct relations as follows.
\begin{small}
$$
\setlength{\arraycolsep}{0.5\arraycolsep}
\begin{array}{cccc}
\db_{M}:
\begin{array}{c|*{1}{c}c}
Z & \underline{z}\bigstrut\\\cline{2-2}
  & c & \ast
\end{array}
&
\begin{array}{c|*{2}{c}c}
R_{1}=S_{1} & \underline{x_{1}} & y\bigstrut\\\cline{2-3}
  & a & adf & \ast\\
  & a & aeg\\\cdashline{2-3}
  & b & beg & \ast\\\cdashline{2-3}
  & \bot_{1} & \top & \ast
\end{array}
&
\begin{array}{c|*{2}{c}c}
R_{2}=S_{2} & \underline{x_{2}} & y\bigstrut\\\cline{2-3}
  & d & adf & \ast\\\cdashline{2-3}
  & e & aeg\\
  & e & beg & \ast\\\cdashline{2-3}
  & \bot_{2} & \top & \ast
\end{array}
&
\begin{array}{c|*{2}{c}c}
R_{3}=S_{3} & \underline{x_{3}} & y\bigstrut\\\cline{2-3}
 & f & adf & \ast\\\cdashline{2-3}
 & g & aeg\\
 & g & beg & \ast\\\cdashline{2-3}
 & \bot_{3} & \top & \ast
\end{array}
\end{array}
$$
\end{small}
Let $q^{*}(z,x_{1},x_{2},x_{3},y)$ be the full query whose set of atoms is $\body{q}$.
We have $\db_{M}\cqamodels q(c)$ because $(c,\bot_{1},\bot_{2},\bot_{3},\top)$ is in the answer to $q^{*}$ on every repair.
It can be verified that $\cqacount{q^{*}}{z}$ returns $(c,[1,3])$ on $\db_{M}$.
The tuples marked with $\ast$ form a repair $\rep$ on which $\mycount{q^{*}}{z}$ returns $(c,3)$.
In particular,  $q^{*}(\rep)=\{(c,\bot_{1},\bot_{2},\bot_{3},\top)$, $(c,a,d,f,adf)$, $(c,b,e,g,beg)\}$.
This answer corresponds to the 3-dimensional matching $\{adf, beg\}$.

Note that if we remove from $\db_{M}$ one of the facts $R_{i}(\underline{\bot_{i}},\top)$ or $S_{i}(\underline{\bot_{i}},\top)$, for some $1\leq i\leq 3$,
then $c$ is no longer a consistent answer to $q(z)$, and, according to Definition~\ref{def:cqacnt}, the answer to $\cqacount{\body{q}}{z}$ would be empty.
\end{example}
\framebox{$\implies$}
Assume $M'$ is a matching of~$M$.
Let $\rep$ be a database instance that includes~$D$ and includes, for every $a_{1}a_{2}a_{3}\in M'$, the set $\bigcup_{i=1}^{3}\{R_{i}(\underline{a_{i}},a_{1}a_{2}a_{3})$, $S_{i}(\underline{a_{i}},a_{1}a_{2}a_{3})\}$.
Since no two elements of $M'$ agree on any coordinate, $\rep$ is consistent.
Moreover, since $n=\card{M'}=\card{A_{1}}=\card{A_{2}}=\card{A_{3}}$, $\rep$ contains a tuple of every block of $\db_{M}$.
Therefore, $\rep$ is a repair of $\db_{M}$.
Clearly, $\mycount{q}{z}$ returns $(c,n+1)$ on $\rep$.
It is also obvious to see that there is no repair of $\db_{M}$ on which $\mycount{\body{q}}{z}$ returns  $(c,k)$ with $k>n+1$.
Moreover, $\db_{M}\cqamodels q(c)$ because $D$ is included in every repair of~$\db_{M}$.
Therefore, for some~$\ell$, $\cqacount{\body{q}}{z}$ returns $(c,[\ell,n+1])$ on $\db_{M}$.

\framebox{$\impliedby$}
Assume that for some $\ell$, $\cqacount{\body{q}}{z}$ returns $(c,[\ell,n+1])$ on $\db_{M}$.
Then, there is a repair $\rep$ on which $\mycount{\body{q}}{z}$ returns~$n+1$.
Let $q^{*}(z,x_{1},x_{2},x_{3},y)$ be the full query whose set of atoms is $\body{q}$.
Let $(c,a_{1},a_{2},a_{3},b_{1}b_{2}b_{3})$, with $b_{1}b_{2}b_{3}\neq\top$, be a tuple in $q^{*}(\rep)$. 
For each $i\in\{1,2,3\}$, since $q^{*}$ contains the atom $R_{i}(\underline{x_{i}},y)$, $\rep$ contains~$R_{i}(\underline{a_{i}},b_{1}b_{2}b_{3})$, and hence $a_{i}=b_{i}$ by our construction.
It follows $b_{1}b_{2}b_{3}=a_{1}a_{2}a_{3}$.

Let $(c,a_{1},a_{2},a_{3},a_{1}a_{2}a_{3})$ and $(c,b_{1},b_{2},b_{3},b_{1}b_{2}b_{3})$ be two tuples in $q^{*}(\rep)$.
Since $q^{*}$ contains $R_{i}(\underline{x_{i}},y)$,
$\rep$ contains $R_{i}(\underline{a_{i}},a_{1}a_{2}a_{3})$ and $R_{i}(\underline{b_{i}},b_{1}b_{2}b_{3})$.
If $a_{i}=b_{i}$, then, since $\rep$ is consistent, we have $a_{1}a_{2}a_{3}=b_{1}b_{2}b_{3}$.
Consequently, no two distinct answers in  $q^{*}(\rep)$ agree on any coordinate among $x_{1}$, $x_{2}$, and $x_{3}$.
Since  $\mycount{\body{q}}{z}$ returns~$n+1$ on $\rep$, it follows that the set $\{a_{1}a_{2}a_{3}\mid\rep\setminus D\models q^{*}(c,a_{1},a_{2},a_{3},a_{1}a_{2}a_{3})\}$ is a matching of $M'$ of size~$n$.
This concludes the proof.
\end{proof}

Note that the foregoing proof can be easily adapted from \problem{3-DIMENSIONAL MATCHING} to \problem{2-DIMENSIONAL MATCHING}.
That is, the query $q(z)=\exists x_{1}\exists x_{2}\exists y\, Z(\underline{z})\land\bigwedge_{i=1}^{2}\formula{R_{i}(\underline{x_{i}},y)\land S_{i}(\underline{x_{i}},y)}$ has a consistent first-order rewriting, but computing $\cqacount{\body{q}}{z}$ is as hard as \problem{$2$-DIMENSIONAL MATCHING}.

The following definition introduces the semantic notion of \emph{parsimonious counting}, which was illustrated by the running example in Section~\ref{sec:introduction}.
Informally, for a query $q(\vec{z})$ that admits parsimonious counting, it will be the case that on every database instance $\db$,
the answers to $\cqacount{\body{q}}{\vec{z}}$ can be computed by a first-order query followed by a simple counting step.

\begin{definition}[Parsimonious counting]\label{def:pc}
Let $q$ be a conjunctive query with $\free{q}=\vec{z}$.\footnote{We will commonly write $q(\vec{z})$ to make explicit that $\free{q}=\vec{z}$.}
Let~$\vec{x}$ be a (possibly empty) sequence of distinct bound variables of $q(\vec{z})$.
We say that $q$ \emph{admits parsimonious counting on~$\vec{x}$} if the following hold (let $q'(\vec{z},\vec{x})=\makefree{q}{\vec{x}}$):
\begin{enumerate}[label=(\Alph*)]
    \item\label{it:pcone} $q(\vec{z})$ has a consistent first-order rewriting;
    \item\label{it:pctwo} $q'(\vec{z},\vec{x})$ has a consistent first-order rewriting (call it $\varphi(\vec{z},\vec{x})$); and
    \item\label{it:pcthree} for every database instance $\db$, the following conditions~\ref{it:pcthreeone} and~\ref{it:pcthreetwo} are equivalent:
    \begin{enumerate}[label=(\alph*)]
    \item\label{it:pcthreeone}
    $(\vec{c},[m,n])$ is an answer to $\cqacount{\body{q}}{\vec{z}}$ on $\db$;
    \item\label{it:pcthreetwo}
    $m\geq 1$ and both the following hold:
    \begin{enumerate}[label=(\roman*)]
    \item\label{it:pcthreetwoone} $m$ is the number of distinct tuples $\vec{d}$, of arity $\vec{x}$, such that $\db\models \varphi(\vec{c},\vec{d})$; and
    \item\label{it:pcthreetwotwo} $n$ is the number of distinct tuples $\vec{d}$ such that $\db\models q'(\vec{c},\vec{d})$.
\end{enumerate}    
\end{enumerate}
\end{enumerate}
We say that $q$ \emph{admits parsimonious counting} if it admits parsimonious counting on some sequence $\vec{x}$ of bound variables.
\end{definition}

Significantly, since Definition~\ref{def:pc} contains a condition that must hold for every database instance $\db$, it does not give us an efficient procedure for deciding whether a given self-join-free query $q(\vec{z})$ admits parsimonious counting. 

We now give some examples.
From the proof of Proposition~\ref{pro:threedm} and the paragraph after that proof, 
it follows that under standard complexity assumptions, for $k\geq 2$, 
$$q_{k}(z)\defeq\exists x_{1}\dotsm\exists x_{k}\exists y\, Z(\underline{z})\land\bigwedge_{i=1}^{k}\formula{R_{i}(\underline{x_{i}},y)\land S_{i}(\underline{x_{i}},y)}$$ does not admit parsimonious counting, even though $q_{k}(z)$ has a consistent first-order rewriting.
The following example shows a query~$q(z)$ that does not admit parsimonious counting, but for which $\cqacount{\body{q}}{z}$ can be computed in first-order logic with a counting step that is slightly more involved than what is allowed in parsimonious counting.

\begin{example}\label{ex:almostpc}
Let $q(z)=\exists x\exists y\, R(\underline{z},x)\land S(\underline{x,y})$ and $q^{*}(z,x,y)=R(\underline{z},x)\land S(\underline{x,y})$.
We first argue that $q(z)$ does not admit parsimonious counting.
Let $\db$ be the following database instance:
\begin{small}
$$
\begin{array}{cc}
\begin{array}{c|cc}
R    & \underline{z} & x\\\cline{2-3}
     & c_{1} & a\\\cdashline{2-3}
     & c_{2} & a\\
     & c_{2} & b
\end{array}
&
\begin{array}{c|cc}
S    & \underline{x} & \underline{y}\bigstrut\\\cline{2-3}
     & a & d\\\cdashline{2-3}
     & a & e\\\cdashline{2-3}
     & b & f
\end{array}
\end{array}
$$
\end{small}
It can be verified that on this database instance, $\cqacount{q^{*}}{z}$ must return $(c_{1},[2,2])$ and $(c_{2},[1,2])$.
We next show the answer to $q^{*}$ on $\db$:
\begin{small}
$$
\begin{array}{c|ccc}
q^{*}(\db)    & z & x & y\\\cline{2-4}
     & c_{1} & a & d\\
     & c_{1} & a & e\\
     & c_{2} & a & d\\
     & c_{2} & a & e\\
     & c_{2} & b & f
\end{array}
$$
\end{small}
The correct upper bound of~$2$ in $(c_{2},[1,2])$ could only be obtained by counting, within the $c_{2}$-group, the number of distinct $\tuple{x}$-values. 
However, such a counting would conclude an incorrect upper bound of~$1$ for the $c_{1}$-group.
It is now correct to conclude that $q(z)$ does not admit parsimonious counting.

The lower and upper bounds can be obtained from $q^{*}(\db)$ by a counting step that is only slightly more involved than what is allowed in parsimonious counting. 
First, construct the following relation where $\tilde{R}(\underline{c_{j}},v\mid n)$ means that $\mycount{q^{*}}{z}$ returns $(c_{j},n)$ on a repair that contains $R(\underline{c_{j}},v)$.
\begin{small}
$$
\begin{array}{c|cc|c}
\tilde{R}    & \underline{z} & x\\\cline{2-4}
     & c_{1} & a & 2\\\cdashline{2-4}
     & c_{2} & a & 2\\
     & c_{2} & b & 1
\end{array}
$$
\end{small}
These counts can be obtained from $q^{*}(\db)$ by counting the number of distinct $y$-values within each $zx$-group.
Next, the lower and upper bounds are obtained as the minimal and maximal counts within each $z$-group.

Note incidentally that for $q_{0}(z)\defeq\exists x\exists y\, R(\underline{z},x)\land T(\underline{z},x)\land S(\underline{x,y})$,
which is obtained from $q(z)$ by adding $T(\underline{z},x)$, we have that $q_{0}$ admits parsimonious counting.
The change occurs because if $\db\cqamodels q_{0}(c)$, then there exists a unique value $a$ such that
$\db\models\forall x\formula{R(\underline{c},x)\rightarrow x=a}$ and $\db\models\forall x\formula{T(\underline{c},x)\rightarrow x=a}$.
That is, the only blocks that can contribute to $\cqacount{\body{q_{0}}}{z}$ have cardinality~$1$.
This means that range semantics reduces to counting on a consistent database instance.
\qed
\end{example}

\section{The Class $\cnewclass$}\label{sec:newclass}

The notion of parsimonious counting is a semantic property defined for conjunctive queries.
A natural question is to syntactically characterize the class of conjunctive queries that admit parsimonious counting.
In this paper, we will answer this question under the restriction that queries are self-join-free.
This is the best we can currently hope for, because consistent query answering for primary keys and conjunctive queries with self-joins is a notorious open problem for which no tools are known (e.g., attack graphs are not helpful in the presence of self-joins). 
We now define our new syntactic class $\cnewclass$, which uses the following notion of \emph{frozen variable}. 

\begin{definition}[Frozen variable]\label{def:frozen}
Let $q(\vec{z})$ be a self-join-free conjunctive query.
We say that a bound variable~$y$ of $q(\vec{z})$ is \emph{frozen in~$q$} if there exists a sequential proof of $\fdset{q}\models\fd{\emptyset}{y}$ such that $\notattacks{F}{y}{q}$ for every atom $F$ that occurs in the sequential proof.
We write $\frozen{q}$ for the set of all bound variables of $\vars{q}$ that are frozen in~$q$.
A bound variable that is not frozen in~$q$ is called \emph{nonfrozen in $q$}.
\qed
\end{definition}

\begin{example}
Let $q(z)=\exists x\, R(\underline{z},x)\land S(\underline{z},x)$.
We have $\notattacks{R}{x}{q}$.
Therefore, $\tuple{R(\underline{z},x)}$ is a sequential proof of $\fdset{q}\models\fd{\emptyset}{x}$ that uses no atom attacking~$x$. Hence, $x$ is frozen.
Note here that $z$ is free, hence $\fdset{q}\models\fd{\emptyset}{z}$ by definition.
\qed
\end{example}

\begin{definition}[The class $\cnewclass$]\label{def:newclass}
We define $\cnewclass$ as the set of self-join-free conjunctive queries $q(\vec{z})$ satisfying the following conditions:
\begin{enumerate}[label=(\Roman*)]
    \item\label{it:noCyclicOrStrongAttacks} the attack graph of $q(\vec{z})$ is acyclic and contains no strong attacks; and
    \item\label{it:newclasstwo} 
    there is a tuple $\vec{x}$ of bound variables of $q(\vec{z})$ such that:
			\begin{enumerate}[label=(\arabic*)]
				\item\label{it:source} every component\footnote{Whenever we use the term component, we mean a maximal weakly connected component.} of $q(\vec{z})$'s attack graph contains an unattacked atom $R$ such that $\fdset{q}\models\fd{\vec{x}}{\key{R}}$; and
 				\item\label{it:separation} for every atom $R$ in $\body{q(\vec{z})}$, every (possibly empty) path in the query graph of $q(\vec{z})$ between a variable of $\notkey{R}$ and a variable of $\vec{x}$ uses a variable in $\key{R}\cup\frozen{q}$.
			\end{enumerate}
\end{enumerate}
We will say that such an $\vec{x}$ is an \emph{id-set} for $q(\vec{z})$.
We will say that an id-set $\vec{x}$ is \emph{minimal} if any sequence obtained from $\vec{x}$ by omitting one or more variables is no longer an id-set. 
\end{definition}

Informally, id-sets $\vec{x}$ will play the role of $\vec{x}$ in Definition~\ref{def:pc}: they identify the values that have to be counted within each $\vec{z}$-group to obtain range-consistent answers.

\begin{figure}
    \centering
    \begin{tabular}{c@{\hspace{7em}}c}
         \includegraphics[scale=0.8]{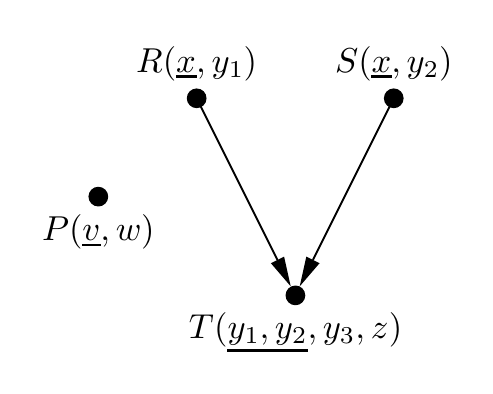}
         &
         \includegraphics[scale=0.8]{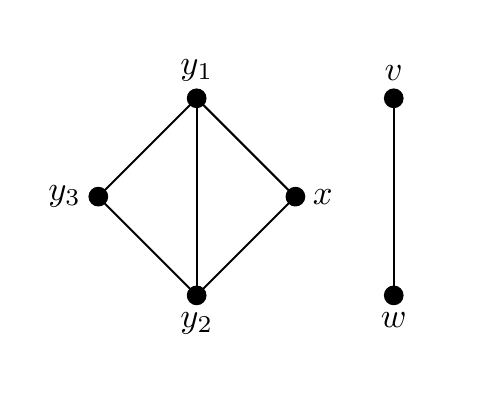}
    \end{tabular}
    
    \caption{Attack graph \emph{(left)} and query graph \emph{(right)} of $q(z) = \exists x\exists y_1\exists y_2\exists y_3\exists v\exists w\, R(\underline{x}, y_1) \land S(\underline{x}, y_2) \land T(\underline{y_1, y_2}, y_3, z) \land P(\underline{v}, w)$.}
    \label{fig:graphs}
\end{figure}

We now illustrate Definition~\ref{def:newclass} by some examples.
Then Proposition~\ref{pro:defineX} implies that every query $q(\vec{z})$ in $\cnewclass$ has a unique minimal id-set that can be easily constructed from $q(\vec{z})$'s attack graph.

\begin{example}
In the paragraph following the proof of Proposition~\ref{pro:threedm}, we introduced the query
$q(z)=\exists x_{1}\exists x_{2}\exists y\, Z(\underline{z})\land\bigwedge_{i=1}^{2}\formula{R_{i}(\underline{x_{i}},y)\land S_{i}(\underline{x_{i}},y)}$.
The edge-set of $q(z)$'s attack graph is empty.
No variable is frozen.
According to condition~\ref{it:source} in Definition~\ref{def:newclass}, every id-set (if any) must contain $x_{1}$.
However, no id-set can contain $x_{1}$, because for the atom $R_{2}(\underline{x_{2}},y)$, the edge $\{y,x_{1}\}$ in the query graph is a path between a variable of $\notkey{R_{2}}$ and $x_{1}$ that uses no variable of $\key{R_{2}}$. 
We conclude that $q(z)$ is not in $\cnewclass$.
\qed
\end{example}

\begin{example}
The query $q(z)=\exists x\exists y\exists v\, R(\underline{x},y)\land S(\underline{y},v)\land T(\underline{v},y)\land P_{1}(\underline{z},y)\land P_{2}(\underline{z},y)$ belongs to $\cnewclass$.
The attack graph of $q(z)$ has a single attack from $S$ to~$T$. 
The query graph of $q(z)$ has two undirected edges: $\{x,y\}$ and $\{y,v\}$.
The variable $y$ is frozen, because $\tuple{P_{1}(\underline{z},y)}$ is a sequential proof of $\fdset{q}\models\fd{\emptyset}{y}$ (note here that $z$ is free), and $\notattacks{P_{1}}{y}{q}$.

It can be verified that $\tuple{x}$ is an id-set.
Note that $\tuple{y,x}$ is a path in the query graph between $y\in\notkey{T}$ and $x$ that uses no variable of~$\key{T}=\{v\}$.
However, that path uses the frozen variable~$y$.
\qed
\end{example}

\begin{example}
Let $q(z) = \exists x\exists y_1\exists y_2\exists y_3\exists v\exists w\, R(\underline{x}, y_1) \land S(\underline{x}, y_2) \land T(\underline{y_1, y_2}, y_3, z) \land P(\underline{v}, w)$. 
The attack graph and the query graph of~$q(z)$ are shown in Fig.~\ref{fig:graphs}.
We now argue that $q(z)$ is in $\cnewclass$.
First, the attack graph of $q$ is acyclic and contains no strong attacks.
We next argue that $xv$ is an id-set for~$q$.
The attack graph of $q(z)$ has two components.
Condition~\ref{it:source} in Definition~\ref{def:newclass} is obviously satisfied for $\vec{x}=xv$ since $\fdset{q} \models \fd{xv}{v}$ and $\fdset{q} \models \fd{xv}{x}$.
It is easily verified that condition~\ref{it:separation} is also verified.
In particular, for the atom $T(\underline{y_1, y_2}, y_3, z)$, every path between $y_{3}$ and $x$ uses either $y_{1}$ or $y_{2}$. 
\qed
\end{example}

\begin{example}
Let $q(z) = \exists x\exists y\, R_{1}(\underline{x},y,z) \land R_{2}(\underline{x},y) \land S_{1}(\underline{y},x) \land S_{2}(\underline{y},x)$.
The attack graph of $q(z)$ contains no edges and, thus, is acyclic and has four components.
It can be verified that no variable is frozen.
We claim that $q(z)$ is not in $\cnewclass$, because it has no id-set.
Indeed, from condition~\ref{it:source} in Definition~\ref{def:newclass}, it follows that every id-set must contain either $x$ or $y$ (or both).
For the atom $S_{1}(\underline{y},x)$, the empty path is a path between a variable in $\notkey{S_{1}}$ to~$x$ that uses no variable in $\key{S_{1}}$.
It follows by condition~\ref{it:separation} that no id-set can contain~$x$.
From $R_{2}(\underline{x},y)$, by similar reasoning, we conclude that no id-set can contain~$y$.
It follows that $q(z)$ has no id-set.
\qed
\end{example}

\begin{proposition} \label{pro:defineX}
Let $q(\vec{z})$ be a query in $\cnewclass$, and let $\vec{x}$ be a minimal id-set for it. 
Let $N=\bigcup\{\notkey{R}\mid R\in q\}$.
Let $V$ be a $\subseteq$-minimal subset of $\vars{q}$ that includes, for every unattacked atom $R$ of $q$, every bound variable of $\key{R}\setminus N$.
Then, 
\begin{enumerate}[label=(\Alph*)]
    \item\label{it:defineXone} $V=\vars{\vec{x}}$; and
    \item\label{it:defineXtwo} whenever $R,S$ are unattacked atoms that are weakly connected in $q(\vec{z})$'s attack graph, $\key{R}\cap\vec{x}=\key{S}\cap\vec{x}$.
\end{enumerate}
\end{proposition}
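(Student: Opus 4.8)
The plan is to prove part~\ref{it:defineXone} by establishing the two inclusions $V\subseteq\vars{\vec{x}}$ and $\vars{\vec{x}}\subseteq V$ separately, and then to derive part~\ref{it:defineXtwo} in a few lines. First I would record three auxiliary facts. \emph{(i) A closure fact:} if $v$ is a bound variable with $v\notin N$, then for every $Z\subseteq\vars{q}$ one has $\fdset{q}\models\fd{Z}{v}$ iff $v\in Z$; this is immediate because the only dependencies in $\fdset{q}$ with $v$ on the right are those of the form $\fd{\key{F}}{\vars{F}}$ with $v\in\vars{F}$, and for such $F$ the assumptions $v\notin\free{q}$ and $v\notin N$ force $v\in\key{F}$, so $v$ can never be added to a closure. \emph{(ii) Structure of a minimal id-set:} every variable $x$ of a minimal id-set $\vec{x}$ is bound, satisfies $x\notin\frozen{q}$, and satisfies $x\notin N$. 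Indeed, if $x\in\frozen{q}$ then $\fdset{q}\models\fd{\emptyset}{x}$, so $\vars{\vec{x}}\setminus\{x\}$ still determines every key that $\vec{x}$ does (hence meets condition~\ref{it:source}) and, being a subset of $\vec{x}$, still meets condition~\ref{it:separation}, contradicting minimality; and if $x\in\notkey{R_0}$ for some atom $R_0$, then $x\notin\key{R_0}$ because the key- and non-key-variable sets of a single atom are disjoint, so applying condition~\ref{it:separation} to $R_0$ and the empty path from $x$ to $x$ forces $x\in\key{R_0}\cup\frozen{q}$, hence $x\in\frozen{q}$, contradicting what was just shown. \emph{(iii) The main lemma} (discussed below): under condition~\ref{it:noCyclicOrStrongAttacks}, any two unattacked atoms lying in the same component of the attack graph have functionally equivalent keys.

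For $V\subseteq\vars{\vec{x}}$: let $y\in V$, so $y$ is bound, $y\notin N$, and $y\in\key{R_0}$ for some unattacked atom $R_0$. Condition~\ref{it:source} provides an unattacked atom $R$ in the component of $R_0$ with $\fdset{q}\models\fd{\vec{x}}{\key{R}}$; fact~(iii) gives $\fdset{q}\models\fd{\key{R}}{\key{R_0}}$, hence $\fdset{q}\models\fd{\vec{x}}{y}$, and fact~(i) yields $y\in\vars{\vec{x}}$.

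For $\vars{\vec{x}}\subseteq V$: let $x\in\vec{x}$. By fact~(ii) it suffices to exhibit an unattacked atom whose key contains $x$. By minimality, $\vars{\vec{x}}\setminus\{x\}$ is not an id-set; it cannot violate condition~\ref{it:separation} (being a subset of $\vec{x}$), so it violates condition~\ref{it:source}: some component $C$ has no unattacked atom $R$ with $\fdset{q}\models\fd{\vec{x}\setminus\{x\}}{\key{R}}$, whereas $C$ does contain an unattacked atom $R^{*}$ with $\fdset{q}\models\fd{\vec{x}}{\key{R^{*}}}$. Thus some $w\in\key{R^{*}}$ is derivable from $\vec{x}$ but not from $\vec{x}\setminus\{x\}$; examining a minimal sequential proof of $\fdset{q}\models\fd{\vec{x}}{w}$ shows that either $w=x$ (whence $x\in\key{R^{*}}$ and we are done, $R^{*}$ being unattacked) or $x\in\key{F}$ for some atom $F$ occurring in the proof. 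In the latter case, if $F$ is unattacked we are done; otherwise $F$ is attacked by some atom $H$ via a necessarily weak attack, so $\fdset{q}\models\fd{\key{H}}{\key{F}}$, hence $\fdset{q}\models\fd{\key{H}}{x}$, and fact~(i) gives $x\in\key{H}$; since the attack graph is acyclic, iterating this reaches an unattacked atom whose key contains $x$. This settles part~\ref{it:defineXone}. Part~\ref{it:defineXtwo} then follows quickly: if $R,S$ are unattacked and weakly connected, fact~(iii) gives $\fdset{q}\models\fd{\key{R}}{\key{S}}$ and $\fdset{q}\models\fd{\key{S}}{\key{R}}$; for $v\in\key{R}\cap\vec{x}$ we have $v\in V$ by part~\ref{it:defineXone}, so $v$ is bound and $v\notin N$, and from $\fdset{q}\models\fd{\key{S}}{v}$ and fact~(i) we get $v\in\key{S}$; symmetry then gives $\key{R}\cap\vec{x}=\key{S}\cap\vec{x}$.

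The step I expect to be the main obstacle is fact~(iii). It is the point where condition~\ref{it:noCyclicOrStrongAttacks} is genuinely used, and the naive approach — inducting along a path in the attack graph that weakly connects the two unattacked atoms — breaks down at edges that must be traversed against their orientation; handling this seems to require the structural theory of attack graphs and witnesses (in particular, a careful analysis of how two distinct unattacked atoms can both attack a common atom when no attack is strong), and I would isolate and prove it as a separate lemma before invoking it above.
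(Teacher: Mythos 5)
Your proposal is correct, and in two places it takes a genuinely different route from the paper. The inclusion $V\subseteq\vars{\vec{x}}$ is essentially the paper's own argument (the paper invokes Lemma~\ref{lem:notAttackedKeyDetermineComponent} where your key-equivalence fact~(iii) suffices, and your closure fact~(i) is exactly the paper's implicit final step ``since $v\notin N$, $\fdset{q}\models\fd{\vec{x}}{v}$ implies $v\in\vec{x}$''). For $\vars{\vec{x}}\subseteq V$, the paper proves Lemma~\ref{lem:xIsKeyofNA}: its items~\ref{it:idsetpk} and~\ref{it:idsetunattacked} correspond to your fact~(ii), but item~\ref{it:idsetsource} --- that $x$ lies in the key of an \emph{unattacked} atom --- is obtained there by extending an attack witness by $x$ and invoking condition~\ref{it:separation}, whereas you obtain it from minimality of the id-set (dropping $x$ must break condition~\ref{it:source}), an analysis of a sequential proof, and a climb up the attack graph in which each attack is weak, so that $\fdset{q}\models\fd{\key{H}}{\key{F}}$ and fact~(i) push $x$ into the attacker's key; both arguments are valid, yours trading the separation condition for the no-strong-attacks part of condition~\ref{it:noCyclicOrStrongAttacks}. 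Your derivation of part~\ref{it:defineXtwo} is also shorter than the paper's, which redoes a witness-path argument via Lemma~\ref{lem:rootsLinkedByRoots} and condition~\ref{it:separation}; given fact~(iii), fact~(i), and part~\ref{it:defineXone}, your few-line deduction is sound. The one ingredient you leave unproven is fact~(iii); it is precisely the paper's Lemma~\ref{lem:rootKeysDetermineOtherRootsKeys}, proved in the appendix from Lemma~\ref{lem:rootsLinkedByRoots} and Lemma~\ref{lem:sameEndpoint}, and your sketch of the needed analysis (two unattacked atoms attacking a common atom) is the right one, so this is a deferred lemma rather than a gap. One small correction: that lemma needs only acyclicity of the attack graph, not condition~\ref{it:noCyclicOrStrongAttacks} in full; the absence of strong attacks is genuinely used elsewhere --- in your own climbing step, and in the paper's Lemma~\ref{lem:notAttackedKeyDetermineComponent} --- not in the key-equivalence of weakly connected unattacked atoms.
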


\begin{example}
Let $q(z_{1},z_{2})=\exists x\exists y\, R(\underline{x},y,z_{1})\land S(\underline{x},y)\land T(\underline{y},z_{2})$.
The attack graph of~$q$ contains no edges, and hence has three weak components. 
If we construct $V$ as in the statement of Proposition~\ref{pro:defineX}, we first compute $N=\{y,z_{1},z_{2}\}$, and then $V=\{x\}$. 
\qed
\end{example}

\begin{example}
Let $q(z_3)= \exists x\exists y \exists z_1\exists z_2\, R(\underline{x}, y, z_1) \land S(\underline{x,y}, z_2) \land T(\underline{z_1, z_2, z_3}) \land P(\underline{x},y)$. The attack graph of $q$ contains two edges: $R(\underline{x}, y, z_1)$ and $S(\underline{x,y}, z_2)$ both attack $T(\underline{z_1, z_2, z_3})$. Thus, the attack graph of $q$ has two weak components. If we construct $V$ as in the statement of Proposition~\ref{pro:defineX}, we first compute $N=\{y,z_{1},z_{2}\}$, and then $V=\{x\}$.
\qed
\end{example}


The following proposition settles the complexity of checking whether a query belongs to $\cnewclass$.

\begin{proposition}\label{pro:testparsimony}
The following decision problem is in quadratic time:
Given a self-join-free conjunctive query $q(\vec{z})$, decide whether or not $q(\vec{z})$ belongs to $\cnewclass$.
\end{proposition}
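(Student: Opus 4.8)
The plan is to exhibit an explicit algorithm and argue that each of its phases runs in time quadratic in the size $|q|$ of the input query (in the standard RAM model). The algorithm proceeds in two stages: first it tests condition~\ref{it:noCyclicOrStrongAttacks} of Definition~\ref{def:newclass} (acyclicity of the attack graph plus absence of strong attacks); then, relying on Proposition~\ref{pro:defineX}, it tests condition~\ref{it:newclasstwo} by checking a single canonical candidate for the id-set. Correctness is immediate from the definitions together with Proposition~\ref{pro:defineX}; the work is entirely in the running-time bookkeeping.

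\textbf{Stage 1 (attack graph, conditions~\ref{it:noCyclicOrStrongAttacks}).} First I would compute, for every atom $F$ of $q$, the set $\keycl{F}{q}$. Since $\fdset{q\setminus\{F\}}$ has total size $O(|q|)$ and a closure under a set of functional dependencies is computable in linear time, each $\keycl{F}{q}$ costs $O(|q|)$, hence $O(n\cdot|q|)=O(|q|^{2})$ for all $n$ atoms. Next, for each $F$, I would determine which variables $F$ attacks: working in the variable--atom incidence structure of $q$ (which has $O(|q|)$ edges), a single graph search starting from the variables in $\notkey{F}\setminus\keycl{F}{q}$ and never passing through a variable of $\keycl{F}{q}$ collects exactly the variables attacked by $F$; this is $O(|q|)$ per atom. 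The edge ``$\attacks{F}{G}{q}$'' is then present iff $F\neq G$ and $F$ attacks some variable of $\key{G}$ (using the observation, made right after the definition of attacks between atoms, that attacking some variable of $\vars{G}$ forces attacking one in $\key{G}$), which is checked in $O(n\cdot|q|)$ in total. Acyclicity of the resulting digraph on $n$ vertices is tested by topological sort in $O(n^{2})$. For the strong/weak distinction I would precompute, once, the $\fdset{q}$-closure of every key set $\key{F}$ (again $O(n\cdot|q|)$ in total); an attack $\attacks{F}{G}{q}$ is then weak iff $\key{G}$ is contained in that precomputed closure, so scanning all attack edges costs $O(n^{2})$. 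If the attack graph is cyclic or has a strong attack, reject.

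\textbf{Stage 2 (id-set, condition~\ref{it:newclasstwo}).} By Proposition~\ref{pro:defineX}, if $q\in\cnewclass$ then its minimal id-sets all have the same underlying variable set, namely the set $V$ constructed there from $N=\bigcup\{\notkey{R}\mid R\in q\}$ and the key sets of the unattacked atoms, and $V$ is computable in time $O(|q|)$. Since being an id-set depends only on the variable set (conditions~\ref{it:source} and~\ref{it:separation} refer to $\vec{x}$ only through $\vars{\vec{x}}$), it follows that $q\in\cnewclass$ iff Stage~1 succeeds and $V$ is an id-set: indeed any id-set contains a minimal id-set, whose variable set equals $V$ by Proposition~\ref{pro:defineX}\ref{it:defineXone}. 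So it remains to test whether $\vec{x}\defeq V$ satisfies~\ref{it:source} and~\ref{it:separation}. For~\ref{it:source} I compute the $\fdset{q}$-closure of $V$ once ($O(|q|)$), compute the weakly connected components of the attack graph ($O(n^{2})$), and verify that each component contains an unattacked atom $R$ with $\key{R}$ inside that closure (a weakly connected component of a DAG always contains an unattacked atom, so this is one scan over atoms, $O(|q|)$). For~\ref{it:separation} I first compute $\frozen{q}$: for each bound variable $y$, compute the closure of $\free{q}$ under the functional dependencies $\{\fd{\key{F}}{\vars{F}}\mid \notattacks{F}{y}{q}\}$, where the non-$y$-attacking atoms are read off from the attacked-variable relation of Stage~1, and put $y\in\frozen{q}$ iff $y$ lies in that closure; this greedy closure precisely simulates the sequential-proof condition of Definition~\ref{def:frozen} (the atoms, in the order added, form a sequential proof of $\fdset{q}\models\fd{\emptyset}{y}$ using no $y$-attacking atom, and conversely any such proof is subsumed by the greedy run), and costs $O(|q|)$ per variable, hence $O(|q|^{2})$ in total. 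Finally, for each atom $R$, I delete the variable set $\key{R}\cup\frozen{q}$ from the incidence structure, compute the connected components of the remaining query graph, and verify that no surviving component contains both a (bound) variable of $\notkey{R}$ and a variable of $V$; by definition of the query graph this is exactly condition~\ref{it:separation} for $R$, and it costs $O(|q|)$ per atom, hence $O(|q|^{2})$ overall. Accept iff all checks pass.

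\textbf{Main obstacle.} The delicate part is not correctness but the complexity accounting, and in particular resisting the temptation to recompute closures pairwise: a naive test of ``$\attacks{F}{G}{q}$'' or of the weak/strong status per pair $(F,G)$, or an enumeration of sequential proofs for $\frozen{q}$, would be cubic. The key savings are (i) computing each $\keycl{F}{q}$ and each $\fdset{q}$-closure of a key set only once, reusing them across all $O(n^{2})$ pairs, and (ii) expressing $\frozen{q}$ through linearly many closure computations rather than an explicit search over proofs; together with the incidence representation of $q$ for all reachability/connectivity queries, this keeps every phase within $O(|q|^{2})$, which yields the claimed quadratic bound.
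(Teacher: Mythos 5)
Your proposal is correct and follows essentially the same route as the paper's proof: compute the attack graph and test condition~\ref{it:noCyclicOrStrongAttacks}, build the canonical candidate $V$ from Proposition~\ref{pro:defineX}, determine $\frozen{q}$ by one linear-time closure per variable over the non-attacking atoms, and for each atom $R$ run a reachability test in the query graph avoiding $\key{R}\cup\frozen{q}$, with correctness resting on the fact that if any id-set exists then $V$ is one. The only cosmetic difference is that you verify condition~\ref{it:source} for $V$ explicitly, whereas the paper notes it holds by construction.
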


\section{$\cnewclass$ Admits Parsimonious Counting}\label{sec:soundness}

In this section, we show the if-direction of Theorem~\ref{the:main}, which is the following theorem.

\begin{theorem}\label{the:mainif}
Every self-join-free conjunctive query in $\cnewclass$ admits parsimonious counting.
\end{theorem}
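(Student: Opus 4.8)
I would prove Theorem~\ref{the:mainif} by taking a query $q(\vec{z})\in\cnewclass$, letting $\vec{x}$ be its (unique minimal) id-set guaranteed by Definition~\ref{def:newclass} and Proposition~\ref{pro:defineX}, and showing that $q$ admits parsimonious counting \emph{on this very $\vec{x}$}, i.e.\ verifying conditions~\ref{it:pcone}--\ref{it:pcthree} of Definition~\ref{def:pc}. Condition~\ref{it:pcone} is immediate: condition~\ref{it:noCyclicOrStrongAttacks} of $\cnewclass$ says the attack graph of $q(\vec{z})$ is acyclic, so by Theorem~\ref{the:koutriswijsen} a consistent first-order rewriting exists. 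For condition~\ref{it:pctwo}, writing $q'(\vec{z},\vec{x})=\makefree{q}{\vec{x}}$, I would argue that freeing the variables of $\vec{x}$ cannot create a cycle or strong attack in the attack graph: adding $\vec{x}$ to the free variables only enlarges every $\keycl{F}{q}$ and strengthens every $\fdset{\cdot}$, which can only remove attacks, never add them. Hence the attack graph of $q'(\vec{z},\vec{x})$ is again acyclic, and Theorem~\ref{the:koutriswijsen} yields the rewriting $\varphi(\vec{z},\vec{x})$.

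**The core argument.** The real content is condition~\ref{it:pcthree}: for every $\db$ and every $\vec{c}$ with $\db\cqamodels q(\vec{c})$, the tight lower bound $m$ on $\mycount{\body{q}}{\vec{z}}$ over repairs equals the number of $\vec{x}$-tuples $\vec{d}$ with $\db\models\varphi(\vec{c},\vec{d})$, and the tight upper bound $n$ equals the number of $\vec{x}$-tuples $\vec{d}$ with $\db\models q'(\vec{c},\vec{d})$. I expect to prove this via two structural facts enforced by the id-set conditions~\ref{it:source} and~\ref{it:separation}. First, the \emph{upper bound}: in any repair $\rep$, condition~\ref{it:separation} should let me show that the map sending each answer $(\vec{c},\vec{e})\in q^*(\rep)$ (where $q^*$ is the full query on $\body{q}$) to its $\vec{x}$-component is a bijection onto $\{\vec{d}\mid \rep\models q'(\vec{c},\vec{d})\}$ — i.e.\ the $\vec{x}$-value determines the rest of a $\vec{z}$-group answer within a repair — so $\mycount{\body{q}}{\vec{z}}$ on $\rep$ equals the number of distinct $\vec{x}$-values realized, which is maximized over repairs precisely by the non-consistent, ``potential'' $\vec{x}$-values, giving $n$. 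The separation property guarantees that distinct answers in a $\vec{z}$-group that agree on $\vec{x}$ would force a block to contribute two key-equal facts, contradiction; this is where frozen variables matter, since a frozen variable has a unique consistent value and so behaves like a constant. Second, the \emph{lower bound}: condition~\ref{it:source} (together with Proposition~\ref{pro:defineX}\ref{it:defineXtwo}) says each attack-graph component has an unattacked atom whose key is functionally determined by $\vec{x}$; an $\vec{x}$-value $\vec{d}$ survives in \emph{every} repair — i.e.\ $\db\cqamodels q'(\vec{c},\vec{d})$, equivalently $\db\models\varphi(\vec{c},\vec{d})$ — exactly when its ``seed'' facts in each component are consistently derivable, and these are the $\vec{x}$-values that contribute to every repair, so their count is the tight lower bound $m$. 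I would make ``$\vec{x}$-value survives in every repair $\iff$ it is in the consistent answer to $q'$'' precise using the consistent-first-order-rewriting machinery and the fact that unattacked atoms can be ``peeled off'' (a standard move in the Koutris--Wijsen analysis).

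**Assembling the bounds.** Given the two facts above, I would finish as follows. Fix $\vec{c}$ with $\db\cqamodels q(\vec{c})$. For the upper bound, build a single repair that realizes \emph{all} potential $\vec{x}$-values simultaneously: using acyclicity and condition~\ref{it:source}, choose, component by component in a topological order of the attack graph, the blocks so that the unattacked seed atom realizes each desired key, and propagate consistently; condition~\ref{it:separation} ensures no two chosen $\vec{x}$-values collide. This repair attains $\mycount{\body{q}}{\vec{z}}=n$, and no repair can exceed it since in any repair the count is at most the number of distinct realizable $\vec{x}$-values. For the lower bound, any repair realizes \emph{at least} the surviving $\vec{x}$-values (those in the consistent answer to $q'$), and some repair realizes \emph{exactly} those — obtained by making ``adversarial'' block choices that kill every non-surviving $\vec{x}$-value while keeping the consistent ones; that the surviving ones cannot be killed is the definition of consistent answer. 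Monotone interpolation (every integer between $m$ and $n$ is realized by some repair) is not required by Definition~\ref{def:cqacnt}, so I need only tightness of the two extremes, which the two constructed repairs provide.

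**Main obstacle.** The hard part will be the upper-bound construction — showing a single repair can simultaneously realize every potential $\vec{x}$-value without internal key conflicts, and that within that repair the $\vec{x}$-component is a faithful invariant of each $\vec{z}$-group answer. This is exactly where condition~\ref{it:separation} and the role of $\frozen{q}$ must be exploited carefully: I will need a lemma saying that along any query-graph path from a $\notkey{R}$-variable to an $\vec{x}$-variable, the first $\key{R}\cup\frozen{q}$ variable encountered ``pins down'' the value of the non-key side once the $\vec{x}$-value and the (unique) frozen values are fixed, so that the non-key positions of every atom are determined by $\vec{x}$ in any repair extending the chosen seeds. Packaging this as a clean induction on the topological order of the attack graph (the acyclicity and no-strong-attacks hypotheses being what makes the induction go through) is the technical heart of the proof; I would isolate it as a standalone lemma before proving the theorem.
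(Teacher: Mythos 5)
Your overall architecture coincides with the paper's: conditions~\ref{it:pcone} and~\ref{it:pctwo} of Definition~\ref{def:pc} from acyclicity plus the fact that freeing variables adds no attacks; then, for condition~\ref{it:pcthree}, (i) inside any repair the count equals the number of distinct $\vec{x}$-values, (ii) an optimistic repair realizing all $\vec{x}$-values of $\db$ yields the tight upper bound, and (iii) a pessimistic repair realizing only the consistent $\vec{x}$-values yields the tight lower bound. These are precisely the paper's Lemmas~\ref{lem:onlyOneVal}, \ref{lem:hasOptiRepair}, and~\ref{lem:hasPessRepair}, so you are on the paper's route.

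There is, however, a concrete flaw and a substantial deferral. First, you derive fact (i) from condition~\ref{it:separation} in Definition~\ref{def:newclass}, arguing that two answers in a $\vec{z}$-group agreeing on $\vec{x}$ would force two distinct key-equal facts into one block. Separation does not give this; what is needed is $\fdset{q}\models\fd{\vec{z}\cdot\vec{x}}{\vec{w}}$, and that comes from condition~\ref{it:source} together with the absence of strong attacks (condition~\ref{it:noCyclicOrStrongAttacks}): $\vec{x}$ determines the key of an unattacked atom in each component, and that key determines every variable of its component (Lemma~\ref{lem:notAttackedKeyDetermineComponent}); only then does consistency of the repair force the two valuations to coincide. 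Frozen variables play no role in this step; they, and separation, are what the optimistic-repair construction needs. Second, the existence of the two extreme repairs is the actual technical content and remains unproven in your plan. For the optimistic repair, your ``seed and propagate in topological order'' sketch does not address the central difficulty, which the paper resolves in Lemma~\ref{lem:valsForOpti}: take one valuation per $\vec{x}$-value witnessed in $\db$ and eliminate key conflicts pairwise, using separation plus the frozen-variable argument to show that conflicting valuations agree on the $\vec{x}$-variables involved, so a conflict can be repaired without losing any $\vec{x}$-value. For the pessimistic repair, your ``adversarial block choices'' assertion needs the fact that every variable of a minimal id-set is unattacked (Lemma~\ref{lem:xIsKeyofNA}, itself a consequence of separation and minimality); that is the hypothesis under which the Koutris--Wijsen construction you allude to produces a single repair whose answers are exactly the consistent answers, and without it the simultaneous ``killing'' of all non-surviving $\vec{x}$-values is not guaranteed.
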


We use a number of helping lemmas and constructs.
The following lemma says that if~$\vec{x}$ is an id-set of a query $q(\vec{z})$ in $\cnewclass$,
then for a consistent database $\db$,
the answers to $\mycount{\body{q}}{\vec{z}}$ can be obtained by counting the number of distinct $\vec{x}$-values within each~$\vec{z}$-group, while variables not in $\vec{x}\cdot\vec{z}$ can be ignored.

\begin{lemma} \label{lem:onlyOneVal}
Let $q(\vec{z},\vec{x}, \vec{w})$ be a full self-join free conjunctive query, in which notation it is understood that $\vec{z}$, $\vec{x}$ and $\vec{w}$ are disjoint, duplicate-free tuples of variables.
Assume that the query $\makebound{q}{\vec{x}\vec{w}}$ belongs to $\cnewclass$ and that $\vec{x}$ is an id-set for it.
Let $\db$ be a consistent database instance. 
For all tuples $\vec{a}$ and $\vec{b}$ of constants, of arities $\arity{\vec{z}}$ and $\arity{\vec{x}}$ respectively, for all tuples $\vec{c_1}$ and $\vec{c_2}$ of arity $\arity{\vec{w}}$, if $\db \models q(\vec{a}, \vec{b}, \vec{c_1})$ and $\db \models q(\vec{a}, \vec{b}, \vec{c_2})$, then $\vec{c}_1 = \vec{c}_2$.
\end{lemma}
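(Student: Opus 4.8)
The plan is to reduce the lemma to a statement about functional-dependency closures and then to establish that statement from the conditions defining $\cnewclass$.

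\emph{Reduction.} Put $q'(\vec z)=\makebound{q}{\vec x\vec w}$, so that $\free{q'}=\vec z$ and $\fdset{q'}=\{\fd{\emptyset}{\vec z}\}\cup\{\fd{\key F}{\vars F}\mid F\in\body q\}$. Suppose $\db\models q(\vec a,\vec b,\vec c_1)$ and $\db\models q(\vec a,\vec b,\vec c_2)$, witnessed by valuations $\mu_1,\mu_2$ of $\body q$ that agree on $\vec z$ (value $\vec a$) and on $\vec x$ (value $\vec b$). Let $D$ be the set of variables on which $\mu_1$ and $\mu_2$ agree. Then $\vec z\cup\vec x\subseteq D$, and $D$ is closed under $\fdset{q'}$: whenever $\key F\subseteq D$, the facts $\mu_1(F)$ and $\mu_2(F)$ are key-equal facts of the consistent instance $\db$, hence equal, so $\vars F\subseteq D$. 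Thus $D$ contains the closure of $\vec x$ under $\fdset{q'}$ (this closure already contains $\vec z$, since $\fd{\emptyset}{\vec z}\in\fdset{q'}$). So it suffices to prove: \emph{for every query $q'\in\cnewclass$ and every id-set $\vec x$ of $q'$, the closure of $\vec x$ under $\fdset{q'}$ equals $\vars{q'}$}. Moreover it is enough to prove this for the minimal id-set, since every id-set contains it (iteratively discard discardable variables) and a larger seed only enlarges the closure; so we may assume $\vec x=V$, the set of Proposition~\ref{pro:defineX}.

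\emph{The closure reaches every atom.} Write $C$ for the closure of $\vec x$ under $\fdset{q'}$, and let $\mathcal S=\{F\in\body q\mid\key F\subseteq C\}$ (equivalently $\vars F\subseteq C$). Note first that $\frozen{q'}\subseteq C$: if $y$ is frozen then $\fdset{q'}\models\fd{\emptyset}{y}$, so $y$ lies in the closure of $\vec z$, hence in $C$. By condition~\ref{it:noCyclicOrStrongAttacks} the attack graph of $q'$ is acyclic with only weak attacks, which gives two observations: (a) $\mathcal S$ is closed under attack successors — if $F\in\mathcal S$ and $\attacks{F}{G}{q'}$, then the attack is weak, so $\fdset{q'}\models\fd{\key F}{\key G}$ and hence $\key G\subseteq C$; (b) by condition~\ref{it:source}, every component of the attack graph contains an unattacked atom that lies in $\mathcal S$. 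Suppose, for contradiction, $\mathcal S\neq\body q$. Fix a topological order of the (acyclic) attack graph and let $G$ be the first atom outside $\mathcal S$. Every attacker of $G$ precedes $G$, hence lies in $\mathcal S$; by (a) this forces $G\in\mathcal S$ unless $G$ has no attacker at all. So $G$ is unattacked and $\key G\not\subseteq C$.

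\emph{Eliminating the offending source $G$.} Since $\vec x=V$, Proposition~\ref{pro:defineX}\ref{it:defineXone} gives that every bound variable of $\key G$ occurring in no non-key position of $q$ lies in $\vec x\subseteq C$, and the free variables of $\key G$ lie in $\vec z\subseteq C$; as $\key G\not\subseteq C$, there is a \emph{bound} variable $u\in\key G\setminus C$ with $u\in\notkey F$ for some atom $F\neq G$. Because $G$ is unattacked, $F$ does not attack $u$; but $u\in\notkey F$, so $\tuple{u}$ would be a witness were $u\notin\keycl F{q'}$; hence $u\in\keycl F{q'}$, i.e.\ $\fdset{q'\setminus\{F\}}\models\fd{\key F}{u}$. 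If $\key F\subseteq C$ held, then, using $\fdset{q'}\supseteq\fdset{q'\setminus\{F\}}$ and that $C$ is closed, we would get $u\in C$, a contradiction; so $\key F\not\subseteq C$. It remains to contradict condition~\ref{it:separation} for $F$, which forbids a query-graph path from a variable of $\notkey F$ to a variable of $\vec x$ that avoids $\key F\cup\frozen{q'}$. The intended contradiction builds such a forbidden path: start at $u\in\notkey F$, step inside $G$ to the first vertex of a witness of an attack incident to $G$, and then concatenate the witnesses of the attacks along a weak-connectivity path from $G$ to the anchor atom $R\in\mathcal S$ of $G$'s component (all of whose variables lie in $C$); since $u\notin C$, $\frozen{q'}\subseteq C$, and $\vars R\subseteq C$, one arranges this path to reach a variable of $\vec x$ while avoiding $\key F\cup\frozen{q'}$, violating condition~\ref{it:separation}. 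This forces $\mathcal S=\body q$, hence $C=\vars{q'}$, which completes the argument by the reduction above.

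I expect the last step to be the main obstacle. The reduction and the attack-graph bookkeeping (observations (a) and (b), and the ``first atom outside $\mathcal S$'' trick) are routine, but turning the separation condition~\ref{it:separation} into a genuine contradiction requires a careful analysis of how the witnesses of attacks inside a weakly connected component that contains two distinct unattacked atoms (the anchor $R$ and the offending $G$) interact with $\key F$, with $\frozen{q'}$, and with the closure $C$; this is precisely where the id-set conditions of Definition~\ref{def:newclass} must be used in full force, and it may well need a separate induction on the structure of the attack graph (and a small extra argument for the degenerate case $\vec x=\emptyset$).
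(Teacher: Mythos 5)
Your reduction is sound (the agreement set of the two valuations is closed under $\fdset{\makebound{q}{\vec{x}\vec{w}}}$ because $\db$ is consistent, so it suffices to show that the closure of $\vec{z}\cup\vec{x}$ is all of $\vars{q}$), and your attack-graph bookkeeping is fine: weak attacks propagate membership in $\mathcal{S}$, condition~\ref{it:source} puts one unattacked atom of each component into $\mathcal{S}$, and the ``first atom outside $\mathcal{S}$'' argument correctly isolates the only possible obstruction, namely an \emph{unattacked} atom $G$ with $\key{G}\not\subseteq C$ lying in a component whose anchor $R$ satisfies $\key{R}\subseteq C$. But the step that disposes of this obstruction is exactly the step you leave as a sketch, and it is a genuine gap, not a routine verification. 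Your plan is to contradict condition~\ref{it:separation} by concatenating attack witnesses along a weak-connectivity path from $G$ to $R$; as written, the resulting walk is neither shown to end at a variable of $\vec{x}$ (it ends among variables of $C$, and $C$ is in general much larger than $\vec{x}$) nor shown to avoid $\key{F}\cup\frozen{q}$ for the atom $F$ you fixed, and no argument is given that such a forbidden path exists at all. Indeed it is doubtful that condition~\ref{it:separation} is the right tool here: the true reason the obstruction cannot occur is that two unattacked atoms in the same weak component determine each other's keys, a fact that holds already from acyclicity (hence transitivity) of the attack graph together with the absence of strong attacks, with no appeal to condition~\ref{it:separation}.

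This is precisely what the paper's proof uses: condition~\ref{it:source} gives, in each component, an unattacked atom $R$ with $\fdset{q}\models\fd{\vec{x}}{\key{R}}$, and Lemma~\ref{lem:notAttackedKeyDetermineComponent} (built on Lemma~\ref{lem:rootKeysDetermineOtherRootsKeys}, ``sources determine one another'', which in turn rests on Lemma~\ref{lem:rootsLinkedByRoots} and the nontrivial ``Back Property'' induction of Lemma~\ref{lem:sameEndpoint}) yields $\fdset{q}\models\fd{\key{R}}{v}$ for every variable $v$ of the component; then $\fdset{q}\models\fd{\vec{z}\cdot\vec{x}}{\vec{w}}$ and consistency of $\db$ finish the proof. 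In your framework, Lemma~\ref{lem:rootKeysDetermineOtherRootsKeys} applied to the anchor $R$ and the offending source $G$ gives $\fdset{q}\models\fd{\key{R}}{\key{G}}$, hence $\key{G}\subseteq C$, the contradiction you need — but proving that lemma is the real content (your detour through the minimal id-set and Proposition~\ref{pro:defineX} is unnecessary once this is available). Until you supply an argument of that strength, the proof is incomplete, as you yourself anticipate in your closing remarks.
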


We now present the notion of \emph{optimistic repair}, which was originally introduced by Fuxman~\cite{FuxmanThesis}.
Informally, a repair $\rep$ of a database $\db$ is an optimistic repair with respect to a conjunctive query~$q(\vec{z})$ if every tuple that is an answer to $q(\vec{z})$ on $\db$ is also an answer to $q(\vec{z})$ on $\rep$.
The converse obviously holds true because conjunctive queries are monotone and repairs are subsets of the original database instance.

\begin{definition}[Optimistic repair]\label{def:optiRepair}
Let $q(\vec{x})$ be a conjunctive query. 
Let $\db$ be a database instance. 
We say that a repair $\rep$ of $\db$ is an \emph{optimistic repair} with respect to $q(\vec{x})$ if for every tuple $\vec{a}$ of constants, of arity $\arity{\vec{x}}$, $\db\models q(\vec{a})$ implies $\rep\models q(\vec{a})$ (the converse implication is obviously true).
\end{definition}

The following lemma gives a sufficient condition for the existence of optimistic repairs.

\begin{lemma} \label{lem:hasOptiRepair}
Let $q(\vec{z})$ be a self-join free conjunctive query in $\cnewclass$, and let $\vec{x}$ be a minimal id-set for it.
Let $q'(\vec{z},\vec{x})$ be the query $\makefree{q}{\vec{x}}$.
Let $\db$ be a database instance, and $\vec{c}$ a tuple of constants, of arity~$\arity{\vec{z}}$, such that $\db\cqamodels q(\vec{c})$.
Then, $\db$ has an optimistic repair with respect to $\substitute{q'}{\vec{z}}{\vec{c}}$.
\end{lemma}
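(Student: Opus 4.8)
The plan is to build the optimistic repair incrementally, processing the atoms of $q$ in an order dictated by the acyclic attack graph, and to use the fact that all attacks are weak to control which key-values "survive" the construction. Fix the tuple $\vec{c}$ with $\db\cqamodels q(\vec{c})$, and write $q_{\vec{c}}=\substitute{q'}{\vec{z}}{\vec{c}}$, whose free variables are exactly $\vec{x}$. By condition~\ref{it:noCyclicOrStrongAttacks} in Definition~\ref{def:newclass}, the attack graph of $q$ is acyclic, so fix a topological order $F_1,F_2,\dots,F_n$ of $\body{q}$ in which every atom precedes the atoms it attacks. For each answer $\vec{b}$ to $q_{\vec{c}}$ on $\db$ (i.e.\ $\db\models\makefree{q_{\vec{c}}}{}(\vec{b})$, meaning $\db\models q(\vec{c},\vec{b},\vec{d})$ for some $\vec{d}$), Lemma~\ref{lem:onlyOneVal} (applied after we know we are working inside a consistent subinstance) guarantees that the "witnessing" facts are essentially unique given $\vec{c}$ and $\vec{b}$; the goal is to select, for every block of $\db$, a representative fact so that all these witnessing facts survive simultaneously.

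The key step is to show that the witnessing facts for two different answers $\vec{b}_1,\vec{b}_2$ never force contradictory choices in the same block. Here is where the weakness of attacks and the structure of id-sets enter. Suppose a valuation $\theta_1$ witnesses $\vec{b}_1$ and $\theta_2$ witnesses $\vec{b}_2$, and suppose $\theta_1(F_i)$ and $\theta_2(F_i)$ are key-equal (they lie in the same block) for some atom $F_i$. I would prove by induction along the topological order that then $\theta_1$ and $\theta_2$ agree on $\vars{F_i}$, and more generally on all variables in $\keycl{F_i}{q}$, hence they agree on $\free{q}\cup\vec{x}$ restricted to the relevant closure. The base of the induction uses condition~\ref{it:source}: each component has an unattacked atom $R$ with $\fdset{q}\models\fd{\vec{x}}{\key{R}}$, so once $\theta_1,\theta_2$ agree on $\vec{x}$ inside that component (which they need not globally, but the construction is done component by component), their key-values on the unattacked atoms coincide. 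The inductive step uses that every attack is weak: if $F$ attacks $G$ then $\fdset{q}\models\fd{\key{F}}{\key{G}}$, so agreement on $\key{F}$ propagates to agreement on $\key{G}$; and condition~\ref{it:separation} together with the frozen-variable machinery (Definition~\ref{def:frozen}) ensures that the only way a non-key variable of $G$ can be "pinned down" differently by $\theta_1$ and $\theta_2$ is blocked, because any query-graph path from $\notkey{G}$ to $\vec{x}$ must pass through $\key{G}\cup\frozen{q}$, and frozen variables have a fixed value across all repairs by Definition~\ref{def:frozen}.

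Concretely I would proceed as follows. First, restrict attention to one component $\mathcal{C}$ of the attack graph at a time, and observe that choices made for blocks touched only by atoms of $\mathcal{C}$ are independent of choices for other components, so it suffices to build an optimistic subrepair inside each component and take the union. Second, within $\mathcal{C}$, let $R_{\mathcal{C}}$ be the unattacked atom from condition~\ref{it:source}; for every answer $\vec{b}$ restricted to the $\vec{x}$-variables in $\mathcal{C}$, the value $\fd{\vec{x}}{\key{R_{\mathcal{C}}}}$ determines $\theta(\key{R_{\mathcal{C}}})$, and distinct $\vec{b}$'s give distinct (or identical) such key-values consistently. Third, walk down the topological order: when we reach $F_i$, every atom that attacks $F_i$ has already been processed, and by weakness of those attacks plus the induction hypothesis the key-value $\theta(\key{F_i})$ is a well-defined function of $\vec{b}$; select the fact $\theta(F_i)$ into the repair. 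Fourth, invoke $\db\cqamodels q(\vec{c})$ to guarantee at least one answer exists and that the blocks we must hit are genuinely hit, and extend the partial selection arbitrarily to a full repair on the remaining blocks. Finally, verify that the resulting $\rep$ satisfies $\rep\models q_{\vec{c}}(\vec{b})$ for every answer $\vec{b}$, which is immediate since all witnessing facts were inserted, and that $\rep$ is consistent, which is exactly the non-contradiction claim proved by the induction.

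The main obstacle I anticipate is the induction in the second step: making precise the claim that key-equal witnessing facts force agreement on the whole variable closure, and in particular correctly handling non-key variables that are neither frozen nor in $\key{F_i}$ but still get equated. This is where condition~\ref{it:separation} must be used in full force, and the argument likely needs the notion of sequential proof (from the Preliminaries) to trace how a non-key variable's value is determined, together with a careful case analysis on whether the relevant query-graph path to $\vec{x}$ exists. A secondary subtlety is ensuring the component-by-component construction does not create a conflict at a block shared (via key-equality of facts from the \emph{same} relation) between what looks like two components --- but self-join-freeness means each relation name occurs once, so a block belongs to exactly one atom and hence one component, which resolves this cleanly.
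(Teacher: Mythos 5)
The heart of your argument is a claim that is false as stated, and the missing ingredient is exactly what the paper's proof (via Lemma~\ref{lem:valsForOpti}) supplies. You propose to prove, by induction along a topological order of the attack graph, that whenever two witnessing valuations $\theta_1,\theta_2$ (for answers $\vec{b}_1,\vec{b}_2$) map some atom $F_i$ to key-equal facts, they must agree on $\vars{F_i}$ and even on $\keycl{F_i}{q}$. This cannot be true of arbitrary witnesses. Take $q(z)=\exists x\exists y\exists w\,\bigl(Z(\underline{z})\land R(\underline{x},y)\land S(\underline{y},w)\bigr)$, which is in $\cnewclass$ with minimal id-set $\tuple{x}$ (the only attack is the weak attack from $R$ to $S$), and let $\db=\{Z(\underline{c}),R(\underline{a_1},b),R(\underline{a_2},b),S(\underline{b},1),S(\underline{b},2)\}$, so that $\db\cqamodels q(c)$ and the answers to $q'(z,x)$ are $(c,a_1)$ and $(c,a_2)$. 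The valuations $\{x\mapsto a_1,y\mapsto b,w\mapsto 1\}$ and $\{x\mapsto a_2,y\mapsto b,w\mapsto 2\}$ are perfectly legitimate witnesses of the two answers, their $S$-facts are key-equal but distinct, and no functional-dependency or attack-graph argument can force them to agree on $w$. An optimistic repair nevertheless exists; the point of the lemma is that the witnesses can be \emph{re-chosen} so that key-equal implies equal, not that every choice of witnesses already agrees. Your proposal contains no mechanism for this re-choosing, and that mechanism is the actual work: in the paper one overwrites $\theta_j$ by $\theta_i$ on a closed set $M$ of atoms infected by the disagreement, and must then verify that the modified valuation still maps $\body{q}$ into $\db$, that it still maps $\vec{x}$ to $\vec{d}_j$ (this is where condition~\ref{it:separation} of Definition~\ref{def:newclass} is used, together with the fact that a frozen variable takes a unique value once $\db\cqamodels q(\vec{c})$ --- a nontrivial consequence of a result of Koutris and Wijsen, not something that holds ``by Definition~\ref{def:frozen}'', which is purely syntactic), and that the repeated surgery terminates. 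None of these steps appears in your sketch.

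The propagation machinery you lean on is also circular, and the scaffolding is not the right lever. A weak attack from $F$ to $G$ gives only $\fdset{q}\models\fd{\key{F}}{\key{G}}$; to turn agreement of $\theta_1,\theta_2$ on $\key{F}$ into agreement on $\notkey{F}$ and hence on $\key{G}$ you need the two $F$-facts to be \emph{equal}, i.e.\ you need the witnessing facts to sit inside a consistent set --- which is precisely what is being constructed. On the inconsistent $\db$, agreement on $\key{F}$ yields only key-equality, as the example above shows, so the induction never gets past the first block containing two facts. Likewise, ``$\theta(\key{F_i})$ is a well-defined function of $\vec{b}$'' is an FD-level statement (compare Lemma~\ref{lem:onlyOneVal}, whose hypothesis is a \emph{consistent} instance) and fails fact-wise in $\db$. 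Your component-by-component independence claim is also shaky: atoms in different components of the attack graph may share variables (e.g.\ $R(\underline{x},y)\land S(\underline{x},y)$ with an attack-free graph), so per-component partial witnesses still have to be glued into a single valuation. Finally, note that the paper's proof of this lemma uses only condition~\ref{it:separation} and the hypothesis $\db\cqamodels q(\vec{c})$; acyclicity, weakness of attacks and condition~\ref{it:source} --- the ingredients your topological induction is built on --- are not needed here, which is a further sign that the attack-graph ordering is not where the difficulty of this statement lies.
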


We now present the notion of \emph{pessimistic repair}, also borrowed from~\cite{FuxmanThesis}.
Informally, a repair of a database $\db$ is a pessimistic repair with respect to a conjunctive query~$q(\vec{z})$ if every answer to~$q(\vec{z})$ on~$\rep$ is a consistent answer to $q(\vec{z})$ on $\db$.
The converse trivially holds true.

\begin{definition}[Pessimistic repair]\label{def:pessRepair}
Let $q(\vec{x})$ be a conjunctive query. 
Let $\db$ be a database instance. 
We say that a repair $\rep$ of $\db$ is a  pessimistic repair with respect to $q(\vec{x})$ if for every tuple $\vec{a}$ of constants, of arity $\card{\vec{x}}$, if $\rep \models q(\vec{a})$, then $\db\cqamodels q(\vec{a})$.
\end{definition}

The following lemma gives a sufficient condition for the existence of pessimistic repairs.

\begin{lemma} \label{lem:hasPessRepair}
Let $q(\vec{z})$ be a self-join free conjunctive query in $\cnewclass$, and let $\vec{x}$ be a minimal id-set for it.
Let $\db$ be a database instance, and $\vec{c}$ a tuple of constants, of arity~$\arity{\vec{z}}$, such that $\db\cqamodels q(\vec{c})$.
Then, $\db$ has a pessimistic repair with respect to $\substitute{q'}{\vec{z}}{\vec{c}}$.
\end{lemma}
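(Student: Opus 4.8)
The plan is to build the required repair by induction on the number of atoms of $q$, peeling off one unattacked atom at a time, following the shape of the consistent first-order rewriting of Koutris and Wijsen~\cite{DBLP:journals/tods/KoutrisW17}. Write $q'(\vec{z},\vec{x})=\makefree{q}{\vec{x}}$ and $Q(\vec{x})\defeq\substitute{q'}{\vec{z}}{\vec{c}}$. What we must produce is a repair $\rep$ of $\db$ such that, for every tuple $\vec{b}$ of arity $\arity{\vec{x}}$, $\rep\models Q(\vec{b})$ implies $\db\cqamodels Q(\vec{b})$; the converse implication is automatic since $\rep$ is a repair. As a first simplification, since $q$ is self-join-free each relation name occurs in exactly one atom, so the relation names of $q$ split according to the (weakly connected) components of $q$'s attack graph, and a repair of $\db$ is obtained by choosing, independently for each relation name, a repair of the corresponding block structure (relation names not occurring in $q$ are irrelevant and repaired arbitrarily). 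Hence it suffices to treat each attack-graph component separately, so we may assume the attack graph of $q$ is weakly connected.

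For the inductive step, I would use Condition~\ref{it:source} in Definition~\ref{def:newclass}, together with Proposition~\ref{pro:defineX}, to pick an unattacked atom $F=R(\underline{\vec{u}},\vec{v})$ of $q$ with $\fdset{q}\models\fd{\vec{x}}{\key{F}}$; such an atom exists because the attack graph is acyclic (Theorem~\ref{the:koutriswijsen}) and weakly connected. Since $\db\cqamodels q(\vec{c})$ and $F$ is unattacked, the recursive structure of the Koutris--Wijsen rewriting (in the free-variable setting of Section~\ref{sec:preliminaries}) yields, for each $\vec{x}$-value that is a consistent answer to $Q$, a block of $R$ --- identified through $\fd{\vec{x}}{\key{F}}$ --- every fact of which extends to a consistent answer of the residual query obtained from $q\setminus\{F\}$ by replacing the variables of $F$ with the constants coming from that block and from a chosen representative fact. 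I would then build $\rep$ as follows: in every block of $R$ that carries a consistent answer of $Q$, keep the representative dictated by that characterization and, recursively, a pessimistic repair (on the remaining relations) with respect to the residual query --- invoking the induction hypothesis after checking that the residual query is again in $\cnewclass$, with $\vec{x}$ minus the variables of $F$ as id-set; in every other block of $R$, and in every relation not reachable from $\vec{x}$ in the query graph, keep an arbitrary representative. To verify that $\rep$ is pessimistic, take $\vec{b}$ with $\rep\models Q(\vec{b})$, extract a valuation of the bound variables witnessing it, and inspect the $R$-fact it uses: if that fact lies in a block of the first kind, the induction hypothesis applied to the residual gives that $\vec{b}$ is a consistent answer; if it lies in a block of the second kind, we reach a contradiction through Condition~\ref{it:separation}, since a witness for $Q(\vec{b})$ routed through such a fact would trace a path in the query graph from a variable of $\notkey{R}$ to a variable of $\vec{x}$ avoiding $\key{R}\cup\frozen{q}$ --- and a variable of $\frozen{q}$ occurring along a witness is harmless, because $\db\cqamodels q(\vec{c})$ forces its value.

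I expect the main obstacle to be the bookkeeping needed to invoke the induction hypothesis: after removing the unattacked atom $F$ and substituting its variables, one must show that the residual query is still in $\cnewclass$ --- in particular that its attack graph stays acyclic and free of strong attacks (removing $F$ weakens $\fdset{q}$, so a formerly weak attack could turn strong unless the new equalities for the variables of $F$ compensate), that each component of the residual attack graph still has a suitable unattacked source atom, and that the separation condition survives the change to $\frozen{\cdot}$. Several of the appendix lemmas, and the construction behind the companion Lemma~\ref{lem:hasOptiRepair}, are presumably devoted to exactly these invariance properties. A secondary subtlety is that one must argue that distinct $\vec{x}$-values are recovered from distinct $\key{R}$-values --- a consequence of $\fdset{q}\models\fd{\vec{x}}{\key{R}}$ together with the minimality of $\vec{x}$ --- so that the per-block choices made in the first case do not interfere with one another.
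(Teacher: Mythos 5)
There is a genuine gap, and it sits exactly where you yourself point: the inductive step. Your plan peels off an unattacked atom $F$ with $\fdset{q}\models\fd{\vec{x}}{\key{F}}$ and then invokes the induction hypothesis on the residual query, which requires showing that the residual query is again in $\cnewclass$ and that the leftover variables of $\vec{x}$ form an id-set for it: acyclicity and absence of strong attacks after deleting $F$ (deleting an atom only weakens $\fdset{q}$, so a weak attack can in principle become strong), condition~\ref{it:source} of Definition~\ref{def:newclass} for every component into which the residual attack graph may split, and condition~\ref{it:separation} with respect to the \emph{new} frozen set, which also changes when $F$ is removed and its variables are instantiated. None of this is proved in your proposal, and your expectation that ``several of the appendix lemmas are presumably devoted to exactly these invariance properties'' is not borne out: the appendix material around Lemma~\ref{lem:hasOptiRepair} (notably Lemma~\ref{lem:valsForOpti}) proves a valuation-merging property on a fixed query, not closure of $\cnewclass$ under removing an unattacked atom. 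Without these preservation facts the recursion simply does not go through. Two secondary points are also shaky: the opening reduction to a single attack-graph component is not clean, because bound variables (including id-set variables) may be shared by atoms in different components, so pessimism does not automatically compose conjunct-wise; and in the ``block of the second kind'' case you need that the $\key{R}$-value of the chosen fact determines the relevant part of $\vec{b}$, which is the \emph{converse} of $\fdset{q}\models\fd{\vec{x}}{\key{R}}$ and has to come from Proposition~\ref{pro:defineX} (the minimal id-set consists of key variables of unattacked atoms), not merely from minimality of $\vec{x}$.

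The paper's own proof avoids all of this machinery. By item~\ref{it:idsetunattacked} of Lemma~\ref{lem:xIsKeyofNA}, every variable of a minimal id-set is unattacked in $q$. Koutris and Wijsen~\cite{DBLP:journals/tods/KoutrisW17} already establish that for any sequence $\vec{x}$ of unattacked variables of a self-join-free conjunctive query, every database instance has a repair $\rep$ such that $\rep\models\substitute{q}{\vec{x}}{\vec{d}}$ implies $\db\cqamodels\substitute{q}{\vec{x}}{\vec{d}}$; treating the free variables $\vec{z}$, instantiated to $\vec{c}$, as constants, this repair is precisely a pessimistic repair with respect to $\substitute{q'}{\vec{z}}{\vec{c}}$. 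So the statement is an immediate consequence of a known result once you know the id-set variables are unattacked; if you want to salvage your inductive construction, the burden is to actually prove the $\cnewclass$-invariance of the residual query, which is substantially harder than the lemma itself.
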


The following example illustrates the preceding constructs and lemmas.

\begin{example}
Let $q(z)= \exists x \exists y \exists v \, R(\underline{x},y)\land S(\underline{y,v},z) \land T(\underline{y},v)$.  Let $\db$ be the following database instance:
\begin{center}\small
\begin{tabular}{ccc}
	\begin{tabular}[t]{c|cc}
		$R$  & $\underline{x}$ & $y$ \\ \cline{2-3}
		& $a_1$ & $b_1$\\\cdashline{2-3}
		& $a_2$ & $b_2$\\\cdashline{2-3}
		& $a_3$ & $b_2$\\\cdashline{2-3}
		& $a_4$ & $b_3$
	\end{tabular}
	&
	\begin{tabular}[t]{c|ccc}
		$S$  & $\underline{y}$ & $\underline{v}$ & $z$\bigstrut\\ \cline{2-4}
		& $b_1$ & $c_1$ & $g_1$\\\cdashline{2-4}
		& $b_2$ & $c_2$ & $g_1$\\
		& $b_2$ & $c_2$ & $g_2$\\\cdashline{2-4}
		& $b_3$ & $c_3$ & $g_2$
	\end{tabular}
	&
	\begin{tabular}[t]{c|cc}
		$T$  & $\underline{y}$ & $v$ \bigstrut\\ \cline{2-3}
		& $b_1$ & $c_1$\\\cdashline{2-3}
		& $b_2$ & $c_2$\\\cdashline{2-3}
		& $b_3$ & $c_3$
	\end{tabular}
\end{tabular} 
\end{center}
Clearly, $\db$ has two repairs, which are $\rep_{1}\defeq\db\setminus\{S(\underline{b_2, b_2}, g_2)\}$ and $\rep_{2}\defeq\db\setminus\{S(\underline{b_2, b_2}, g_1)\}$.

We first determine the answers to $\cqacount{\body{q}}{z}$ on $\db$ in a naive way without using parsimonious counting, but by enumerating repairs.
To this end, let $q^{*}(z,x,y,v)=R(\underline{x},y)\land S(\underline{y,v},z) \land T(\underline{y},v)$.
We have:
\begin{eqnarray*}
q^{*}(\rep_{1}) & = & \{(g_{1},a_{1},b_{1},c_{1}), (g_{1},a_{2},b_{2},c_{2}), (g_{1},a_{3},b_{2},c_{2}), (g_{2},a_{4},b_{3},c_{3})\}\\
q^{*}(\rep_{2}) & = & \{(g_{1},a_{1},b_{1},c_{1}), (g_{2},a_{2},b_{2},c_{2}), (g_{2},a_{3},b_{2},c_{2}), (g_{2},a_{4},b_{3},c_{3})\}
\end{eqnarray*}
The value $g_{1}$ occurs in $3$ tuples of $q^{*}(\rep_{1})$, and in one tuple of $q^{*}(\rep_{2})$.
On the other hand,  $g_{2}$ occurs in one tuple of $q^{*}(\rep_{1})$, and in $3$ tuples of $q^{*}(\rep_{2})$.
It follows that $(g_{1},[1,3])$ and $(g_{2},[1,3])$ are the answers to $\cqacount{\body{q}}{z}$ on $\db$.

It can be verified that $q(z) \in \cnewclass$ with an id-set $\vec{x} = \tuple{x}$. 
We next compute $\cqacount{\body{q}}{z}$ on $\db$ by means of parsimonious counting. 
To this end, let $q'(z,x) = \makefree{q}{x}$, and let $\varphi(z,x)$ be a consistent first-order rewriting for $q'(z,x)$.
If we execute these queries on $\db$, we obtain:\footnote{$\varphi(\db)$ is a shorthand for the set of all tuples $(c,d)$ such that $\db\models\varphi(c,d)$.}
\begin{eqnarray*}
q'(\db) & = & \{(g_{1},a_{1}), (g_{1},a_{2}), (g_{1},a_{3}), (g_{2},a_{2}), (g_{2},a_{3}), (g_{2},a_{4})\}\\
\varphi(\db) & = & \{(g_{1},a_{1}), (g_{2},a_{4})\}
\end{eqnarray*}
As stated in Theorem~\ref{the:mainif}, the set $q'(\db)$ yields the upper bound~$3$ for $g_{1}$ and $g_{2}$,
and the set $\varphi(\db)$ yields the lower bound~$1$ for $g_{1}$ and $g_{2}$.
It is important to understand that parsimonious counting obtains these bounds directly on $\db$, without computing any repair.

We elaborate this example further to illustrate the constructs of optimistic and pessimistic repairs.
We have:
\begin{eqnarray*}
q'(\rep_{1}) & = & \{(g_{1},a_{1}), (g_{1},a_{2}), (g_{1},a_{3}), (g_{2},a_{4})\}\\
q'(\rep_{2}) & = & \{(g_{1},a_{1}), (g_{2},a_{2}), (g_{2},a_{3}), (g_{2},a_{4})\}
\end{eqnarray*}
Note that the consistent answer to $q'(z,x)$ on $\db$ (i.e., the set $\varphi(\db)$ used previously) is equal to $q'(\rep_{1})\cap q'(\rep_{2})=\{(g_{1},a_{1}), (g_{2},a_{4})\}$.
We see that $\rep_{1}$ is an optimistic repair with respect to $\substitute{q'(z,x)}{z}{g_{1}}$, and a pessimistic repair with respect to $\substitute{q'(z,x)}{z}{g_{2}}$. 
On the other hand, $\rep_{2}$ is an optimistic repair with respect to $\substitute{q'(z,x)}{z}{g_{2}}$, and a pessimistic repair with respect to $\substitute{q'(z,x)}{z}{g_{1}}$. 
\qed
\end{example}

We finish this section by a proof of Theorem~\ref{the:mainif}.

\begin{proof}[Proof of Theorem~\ref{the:mainif}]
Let $q(\vec{z}) \in \cnewclass$.
We have to prove that $q(\vec{z})$ admits parsimonious counting.
Since $q(\vec{z})\in\cnewclass$, we can assume an id-set $\vec{x}$ for $q(\vec{z})$.
It suffices to show that conditions~\ref{it:pcone}, \ref{it:pctwo}, and~\ref{it:pcthree} in Definition~\ref{def:pc} are satisfied for this choice of $\vec{x}$.
As in Definition~\ref{def:pc}, let $q'(\vec{z},\vec{x})=\makefree{q}{\vec{x}}$.

Since $q(\vec{z})$ is in $\cnewclass$, it has an acyclic attack graph.
It follows from Theorem~\ref{the:koutriswijsen} that $q(\vec{z})$ has a consistent first-order rewriting.
Thus, condition~\ref{it:pcone} in Definition~\ref{def:pc} is satisfied.
It is known~\cite{DBLP:journals/tods/KoutrisW17} that the attack graph of $q'(\vec{z},\vec{x})$ is a subgraph of the attack graph of $q(\vec{z})$.
Informally, no new attacks are introduced when bound variables are made free.
It follows that $q'(\vec{z},\vec{x})$ has an acyclic attack graph, and therefore, by Theorem~\ref{the:koutriswijsen}, a consistent first-order rewriting.
Thus, condition~\ref{it:pctwo} in Definition~\ref{def:pc} is satisfied.
In the remainder of the proof, we show that condition~\ref{it:pcthree} in Definition~\ref{def:pc} is satisfied.
To this end, let $\db$ be an arbitrary database instance.

Let $\vec{c}$ be a tuple of constants such that $\db\cqamodels q(\vec{c})$.
Let $D$ be the active domain of $\db$.
Let $f$ be a function that maps every subset $\sep$ of $\db$ to the cardinality of the set $\{\vec{a}\in D^{\arity{\vec{x}}}\mid\sep\models q'(\vec{c},\vec{a})\}$.
Clearly, for every repair $\rep$ of $\db$, we have $\rep\subseteq\db$ and hence, since conjunctive queries are monotone, $f(\rep)\leq f(\db)$.
Moreover, since repairs are consistent, 
it follows by Lemma~\ref{lem:onlyOneVal} that for every repair $\rep$ of $\db$, if $(\vec{c},i)$ is an answer to the query $\mycount{\body{q}}{\vec{z}}$ on~$\rep$, then $i=f(\rep)$.

By Lemma~\ref{lem:hasOptiRepair}, we can assume an optimistic repair $\oep$ of $\db$ with respect to $\substitute{q'(\vec{z}, \vec{x})}{\vec{z}}{\vec{c}}$. 
By Definition~\ref{def:optiRepair} of optimistic repair, for every tuple $\vec{a}$ of constants, of arity~$\arity{\vec{x}}$, we have $\oep\models q'(\vec{c},\vec{a})$ if and only if $\db\models q'(\vec{c},\vec{a})$.
It follows $f(\oep)=f(\db)$.
Consequently, for every repair $\rep$ of $\db$, $f(\rep)\leq f(\oep)$.
It follows that for some lower bound~$m$, we have that $(\vec{c},[m,f(\db)]$ is an answer to $\cqacount{\body{q}}{\vec{z}}$ on $\db$.

By Lemma~\ref{lem:hasPessRepair}, we can assume a pessimistic repair $\pep$ of $\db$ with respect to $\substitute{q'(\vec{z},\vec{x})}{\vec{z}}{\vec{c}}$. 
Let $\varphi(\vec{z},\vec{x})$ be a consistent first-order rewriting of $q'(\vec{z},\vec{x})$.
By Definition~\ref{def:pessRepair} of pessimistic repair,
the following hold:
\begin{itemize}
\item 
$\pep\models q(\vec{c},\vec{a})$ if and only if $\db\models \varphi(\vec{c},\vec{a})$.
Therefore, $f(\pep)$ is the cardinality of the set $S\defeq\{\vec{a}\in D^{\arity{\vec{x}}}\mid\db\models \varphi(\vec{c},\vec{a})\}$.
\item
for every repair $\rep$ of $\db$, $f(\pep)\leq f(\rep)$.
\end{itemize}
It follows that for some upper bound~$n$, we have that $(\vec{c},[\card{S},n])$ is an answer to $\cqacount{\body{q}}{\vec{z}}$ on $\db$.
Putting everything together, we obtain that $(\vec{c},[\card{S},f(\db)])$ is an answer to $\cqacount{\body{q}}{\vec{z}}$ on $\db$.
From this, it is correct to conclude that condition~\ref{it:pcthree} in Definition~\ref{def:pc} is satisfied.
This concludes the proof.
\end{proof}

\section{Completeness of $\cnewclass$}\label{sec:completeness}

In this section, we show the only-if-direction of Theorem~\ref{the:main}, which is the following theorem.

\begin{theorem}\label{the:mainonlyif}
Every self-join-free conjunctive query that admits parsimonious counting belongs to $\cnewclass$.
\end{theorem}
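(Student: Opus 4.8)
The plan is to show directly that if $q(\vec{z})$ admits parsimonious counting on a tuple $\vec{x}$ of bound variables, then $q(\vec{z})\in\cnewclass$ with $\vec{x}$ itself serving as an id-set. Write $q'(\vec{z},\vec{x})=\makefree{q}{\vec{x}}$ and let $\varphi(\vec{z},\vec{x})$ be the consistent first-order rewriting of $q'$ granted by condition~\ref{it:pctwo} of Definition~\ref{def:pc}. Acyclicity of the attack graph of $q(\vec{z})$ --- the first half of condition~\ref{it:noCyclicOrStrongAttacks} --- is immediate from condition~\ref{it:pcone} and Theorem~\ref{the:koutriswijsen}. The next step is to prove $\fdset{q}\models\fd{\vec{x}}{\vars{q}}$ using only the behaviour of parsimonious counting on \emph{consistent} instances: on such an instance the unique repair is the instance itself, so $\cqacount{\body{q}}{\vec{z}}$ returns $(\vec{c},[i,i])$ with $i$ the size of the $\vec{c}$-group, while Definition~\ref{def:pc} forces $i$ to be the number of distinct $\vec{x}$-projections of that group. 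If $\fdset{q}\not\models\fd{\vec{z}\vec{x}}{u}$ for some variable $u$, I would take the standard two-valuation counterexample to this functional dependency and materialize it as a consistent database whose $\vec{c}$-group contains two distinct answers sharing one $\vec{x}$-projection, a contradiction. Given $\fdset{q}\models\fd{\vec{x}}{\vars{q}}$, condition~\ref{it:source} follows at once: every weakly connected component of an acyclic attack graph contains a source, i.e.\ an unattacked atom $R$, and $\fdset{q}\models\fd{\vec{x}}{\key{R}}$ holds for every atom.

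It remains to establish the separation condition~\ref{it:separation} and the absence of strong attacks, and for both I would argue by contradiction through an explicit database on which the (forced) formulas $q'$ and $\varphi$ cannot deliver correct range-consistent bounds. If condition~\ref{it:separation} fails, there is an atom $R$ and a path $P$ in the query graph from a variable $y\in\notkey{R}$ to a variable $x^{*}\in\vec{x}$ that uses no variable of $\key{R}\cup\frozen{q}$; I would build a database carrying an inconsistent block on the atom of $P$ that introduces $y$, arrange its two choices to propagate along $P$ to two values of $x^{*}$ while keeping $\key{R}$ and every variable of $\frozen{q}$ pinned, and check that the number of distinct $q'$-answers in one $\vec{z}$-group then strictly exceeds the true range-consistent upper bound, because no single repair realizes all of them --- the construction generalizes the $2$-dimensional-matching gadget of Proposition~\ref{pro:threedm}. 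If some attack $\attacks{F}{G}{q}$ is strong, i.e.\ $\fdset{q}\not\models\fd{\key{F}}{\key{G}}$, I would combine a witness of this attack (which avoids $\keycl{F}{q}$) with the failure of $\fd{\key{F}}{\key{G}}$ to build a database in which, inside one $\vec{z}$-group where $\vec{x}$ --- hence $\key{F}$ --- is pinned, the attacked key $\key{G}$ still takes several values across repairs; this forces either the upper or the lower bound computed through $q'$ and $\varphi$ to be wrong, the prototype being $R(\underline{z},x)\land S(\underline{x,y})$ from Example~\ref{ex:almostpc}, which is ``almost'' but not parsimonious. Since parsimonious counting on $\vec{x}$ holds by assumption, neither failure can occur, so both conditions hold and $q(\vec{z})\in\cnewclass$.

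I expect the main obstacle to be making the two database constructions of the second paragraph work in full generality: one must verify that the inconsistency introduced at the source block propagates \emph{exactly} to the intended target variable and is \emph{contained} everywhere else, translating the syntactic ``avoidance'' conditions (avoiding $\key{R}\cup\frozen{q}$ along $P$, respectively avoiding $\keycl{F}{q}$ along the attack witness) into precise statements about which repairs exist and what counts they yield. The notion of frozen variable is the delicate ingredient: a variable that has a forced consistent value cannot carry inconsistency, so paths through $\frozen{q}$ are harmless in condition~\ref{it:separation}, and the construction must respect this by ensuring that block choices along such portions do not actually spawn repairs with differing counts. Helping lemmas relating sequential proofs, attack witnesses, and query-graph connectivity --- in the spirit of those already used for Theorem~\ref{the:mainif} --- will likely be needed to keep both constructions manageable.
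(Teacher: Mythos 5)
Your skeleton is close to the paper's (acyclicity from Theorem~\ref{the:koutriswijsen}; two-valuation consistent-database arguments; an optimistic-repair argument for condition~\ref{it:separation}), and your step proving $\fdset{q}\models\fd{\vec{z}\vec{x}}{\vars{q}}$ on consistent instances is correct --- it is exactly the paper's Lemma~\ref{lem:notparsimonious} mechanism, and it even subsumes two of the paper's lemmas at once, since it yields condition~\ref{it:source} and also makes every attack of $q'(\vec{z},\vec{x})$ weak (because $\fdset{q'}\models\fd{\emptyset}{\vec{z}\vec{x}}$). The gaps are in your second paragraph. For condition~\ref{it:separation}, your gadget (inconsistent block at $R$, two values propagating along the path $P$ to $x^{*}$) is the paper's construction only in the case where no path variable lies in $U=\{u\mid\fdset{q}\models\fd{\emptyset}{u}\}$. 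The genuinely hard case is a path variable $v_{\ell}\in U\setminus\frozen{q}$: such a variable is determined by the empty set, so any two homomorphisms into your database that disagree on it must already disagree on two key-equal facts of some atom occurring in a sequential proof of $\fd{\emptyset}{v_{\ell}}$ --- the inconsistency cannot be confined to $R$'s block, and the auxiliary ``pinned'' copy of the body that guarantees $\db\cqamodels q(\vec{c})$ can no longer be defined (it would have to agree with both propagating valuations on $v_{\ell}$). The paper (Lemma~\ref{lem:freeBadPathImpliesNoOptiRepair}) handles this by relocating the conflicting block to an atom $S$ that occurs in a sequential proof of $\fd{\emptyset}{v_{\ell}}$ and attacks $v_{\ell}$ (such an $S$ exists precisely because $v_{\ell}$ is not frozen), and by choosing the two valuations to agree exactly on $\keycl{S}{q}$. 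Your stated intuition --- that the delicate variables are the frozen ones and that block choices along frozen portions must not spawn differing counts --- points at the wrong delicacy: frozen variables are excluded from the path by hypothesis; it is the non-frozen variables of $U$ that force a structurally different gadget.

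The strong-attack step is the more serious gap. Your proposed database (``pin $\vec{x}$, hence $\key{F}$, and let $\key{G}$ take several values across repairs'') has no valid mechanism: across repairs FDs of $\fdset{q}$ do not pin anything, and the inference from ``$\key{G}$ varies across repairs'' to ``some bound computed through $q'$ and $\varphi$ is wrong'' is exactly what would need proof; moreover the database you describe often cannot exist at all, e.g.\ when $\key{G}\subseteq\vars{\vec{x}}\cup\vars{\vec{z}}$ (take $q(z)=\exists x\exists y\,R(\underline{z},x)\land S(\underline{x,y})$ with $\vec{x}=\tuple{x,y}$: the attack $R\rightsquigarrow S$ is strong in $q$, yet pinning $\vec{x}$ pins $\key{S}$). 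Indeed, once your FD claim holds, a strong attack of $q$ can never be refuted by a fresh database gadget keyed to that attack: all attacks of $q'$ are weak, so a strong attack of $q$ surviving condition~\ref{it:separation} is impossible for purely syntactic reasons, and whenever it does occur condition~\ref{it:separation} must already fail --- the witnessing databases are condition-\ref{it:separation}-type (no optimistic repair), not strong-attack-type. What is missing is precisely the paper's syntactic transfer (Lemmas~\ref{lem:pathInSequentialProofs}, \ref{lem:attacksKept}, and~\ref{lem:strongImpliesStrong}): assuming condition~\ref{it:separation}, a strong attack $\attacks{F}{G}{q}$ would yield, via a minimal sequential proof of $\fdset{q'}\models\fd{\key{F}}{\key{G}}$, a query-graph path from a variable of $\vec{x}$ to $\key{G}$ avoiding everything determined by $\key{F}$, which concatenated with the attack witness violates condition~\ref{it:separation}. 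Replace your second gadget by this argument (or prove an equivalent transfer lemma) and the proof closes; as sketched, that step would not go through.
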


The following three lemmas state some properties of queries $q(\vec{z})$ that admit parsimonious counting on some~$\vec{x}$.

\begin{lemma} \label{lem:mustBeInFO}
Let $q(\vec{z})$ be a self-join-free conjunctive query.
If $q(\vec{z})$ admits parsimonious counting,
then the attack graph of $q(\vec{z})$ is acyclic.
\end{lemma}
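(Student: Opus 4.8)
The plan is to prove the contrapositive: if the attack graph of $q(\vec{z})$ is cyclic, then $q(\vec{z})$ does not admit parsimonious counting. The key leverage is condition~\ref{it:pcone} in Definition~\ref{def:pc}, which requires that $q(\vec{z})$ itself have a consistent first-order rewriting. By Theorem~\ref{the:koutriswijsen}, a self-join-free conjunctive query has a consistent first-order rewriting if and only if its attack graph is acyclic. Hence if the attack graph of $q(\vec{z})$ contains a cycle, condition~\ref{it:pcone} fails outright, and so no sequence~$\vec{x}$ of bound variables can witness parsimonious counting. This immediately yields the lemma.

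So the proof is essentially one line: $q(\vec{z})$ admits parsimonious counting only if it admits parsimonious counting on some~$\vec{x}$, and for any such~$\vec{x}$, Definition~\ref{def:pc}\ref{it:pcone} demands that $q(\vec{z})$ have a consistent first-order rewriting; by Theorem~\ref{the:koutriswijsen} this forces the attack graph of $q(\vec{z})$ to be acyclic.

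There is no real obstacle here, since the statement was clearly designed so that the hard content has already been isolated into condition~\ref{it:pcone} of the definition and into the cited dichotomy of Koutris and Wijsen. The only thing to be careful about is not to confuse the attack graph of $q(\vec{z})$ with that of the derived full/freed queries $q'(\vec{z},\vec{x})$ or $\body{q}$ that appear elsewhere in Definition~\ref{def:pc}; it is specifically the acyclicity of $q(\vec{z})$'s own attack graph that is needed and that is directly supplied by condition~\ref{it:pcone}. (Condition~\ref{it:pctwo} would give acyclicity of the freed query's attack graph, which — being a subgraph, as noted in the proof of Theorem~\ref{the:mainif} — is a weaker consequence and not what we want to cite here.)

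In short: assume $q(\vec{z})$ admits parsimonious counting on some~$\vec{x}$; apply Definition~\ref{def:pc}\ref{it:pcone} to get a consistent first-order rewriting of $q(\vec{z})$; apply Theorem~\ref{the:koutriswijsen} to conclude that the attack graph of $q(\vec{z})$ is acyclic. This completes the proof.
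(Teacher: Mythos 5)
Your proposal is correct and follows exactly the paper's own argument: condition~(A) of Definition~\ref{def:pc} supplies a consistent first-order rewriting of $q(\vec{z})$, and Theorem~\ref{the:koutriswijsen} then forces the attack graph to be acyclic (the paper phrases this as a one-line contraposition). Your remark about not confusing the attack graph of $q(\vec{z})$ with that of $q'(\vec{z},\vec{x})$ is a sensible precaution but adds nothing beyond the paper's proof.
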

\begin{proof}
Proof by contraposition.
If the attack graph of $q(\vec{z})$ is cyclic, then by Theorem~\ref{the:koutriswijsen}, $q(\vec{z})$ has no consistent first-order rewriting, and therefore does not admit parsimonious counting.
\end{proof}

\begin{lemma} \label{lem:strongImpliesMultipleValuations}
Let $q(\vec{z})$ be a self-join-free conjunctive query.
Let $\vec{x}$ be a (possibly empty) sequence of bound variables of $q(\vec{z})$.
If $q(\vec{z})$ admits parsimonious counting on~$\vec{x}$,
then the attack graph of  $q'(\vec{z}, \vec{x})$ has no strong attack.
\end{lemma}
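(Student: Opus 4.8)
The plan is to argue by contraposition: assuming the attack graph of $q'(\vec{z},\vec{x})$ contains a strong attack, I will build a family of database instances on which the "parsimonious" formula of Definition~\ref{def:pc} gives the wrong lower bound, contradicting condition~\ref{it:pcthree}. Recall that $q'(\vec{z},\vec{x})=\makefree{q}{\vec{x}}$, so the variables $\vec{x}$ are free in $q'$; a strong attack from $F$ to $G$ in $q'$ means $\attacks{F}{G}{q'}$ but $\fdset{q'}\not\models\fd{\key{F}}{\key{G}}$. The intuition is that a strong attack is exactly the obstruction that makes the consistent answer "lossy" in a way that is incompatible with counting: the value chosen for (a key variable of) $G$ is not functionally determined across repairs, so the consistent rewriting $\varphi$ of $q'$ undercounts the number of $\vec{x}$-tuples that can simultaneously appear in a single repair, whereas the range-consistent lower bound $m$ must be witnessed in one repair.

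First I would invoke the machinery behind Theorem~\ref{the:koutriswijsen} for the query $\substitute{q'}{\vec{z}}{\vec{c}}$ (free variables $\vec{x}$, constants $\vec{c}$), specializing the standard "attacked pair" gadget from~\cite{DBLP:journals/tods/KoutrisW17}: along the witnessing path of a strong attack $F\attacksrelation{q'}G$, one constructs a database $\db$ in which $\db\cqamodels q(\vec{c})$ (so condition~\ref{it:cqacntone} of Definition~\ref{def:cqacnt} holds and the answer to $\cqacount{\body{q}}{\vec{z}}$ is nonempty), yet the block of $G$ (or of $F$) has more than one fact, and the two "ends" of the attack can be glued so that each repair realizes only a single $\vec{x}$-value $\vec{a}$ in the $\vec{c}$-group, while $\varphi(\vec{c},\cdot)$ — being the intersection over repairs — realizes \emph{none}. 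More precisely, since the attack is strong, I can independently vary the key of $G$ across blocks while keeping $q(\vec{c})$ true on every repair, so no fixed $\vec{a}$ is consistent, giving $|S|=0$ where $S=\{\vec{a}\mid\db\models\varphi(\vec{c},\vec{a})\}$, yet every repair makes $\mycount{\body{q}}{\vec{z}}$ return $(\vec{c},i)$ with $i\geq 1$. Definition~\ref{def:pc}\ref{it:pcthreetwoone} would then force the lower bound $m$ to equal $|S|=0$, contradicting the "$m\geq 1$" conclusion of Definition~\ref{def:cqacnt} (and contradicting \ref{it:pcthreetwo} which also asserts $m\geq 1$); this is the contradiction.

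The main obstacle is getting the gadget exactly right: one must ensure simultaneously that (i) $q(\vec{c})$ holds on \emph{every} repair, so that the range-consistent answer is defined and $m\geq 1$; (ii) no single $\vec{x}$-valuation $\vec{a}$ survives in all repairs, which is where strongness — i.e.\ $\fdset{q'}\not\models\fd{\key{F}}{\key{G}}$ — must be used to decouple the key of the attacked atom along the witness; and (iii) the atoms of $q$ that were "killed" by binding $\vec{x}$ (the ones in $q$ but handled differently in $q'$) do not interfere, which is why the witness lives entirely inside $q'$ and one pads the remaining atoms with an easily-satisfiable consistent gadget on fresh constants (similar to the set $D$ in the proof of Proposition~\ref{pro:threedm}). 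I expect to split into cases according to whether the strongly attacked variable lies in $\key{G}$ (the typical case, using the remark right after the definition of attacks that a strong attack on $G$ can be taken to land in $\key{G}$) and to handle the witness path by a routine induction propagating the "two-block" choice along adjacent variables outside $\keycl{F}{q'}$. Once the instance is built, verifying that $\varphi(\vec{c},\cdot)$ is empty reduces to the standard characterization of consistent first-order rewriting as the intersection over repairs, and the contradiction with Definition~\ref{def:pc} is immediate.
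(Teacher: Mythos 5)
Your contrapositive set-up, and the way you extract a contradiction from Definition~\ref{def:pc} once the gadget is in hand, are fine; the problem is the gadget itself. You aim at the \emph{lower} bound: a database $\db$ with $\db\cqamodels q(\vec{c})$ while no tuple $\vec{a}$ satisfies $\db\cqamodels q'(\vec{c},\vec{a})$, forcing $m=\card{S}=0$. First, the lemma explicitly allows $\vec{x}=\tuple{}$, and in that case $q'=q$, so whenever $\db\cqamodels q(\vec{c})$ the empty tuple \emph{is} a consistent answer to $q'(\vec{c},\cdot)$; your target instance cannot exist, although strong attacks do occur in this case (e.g.\ $q=\exists x\exists y\exists u\, R(\underline{x},y)\land S(\underline{u,y})$ has an acyclic attack graph whose single attack from $R$ to $S$ is strong). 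Second, even for nonempty $\vec{x}$, the coexistence of $\db\cqamodels q(\vec{c})$ with empty consistent answers to $q'(\vec{c},\cdot)$ is governed by violations of the separation condition~\ref{it:separation} in Definition~\ref{def:newclass} (paths from $\notkey{R}$ to $\vec{x}$ avoiding $\key{R}\cup\frozen{q}$; this is what Lemma~\ref{lem:freeBadPathImpliesNoOptiRepair} exploits), not by strongness, which is the purely FD-theoretic statement $\fdset{q'}\not\models\fd{\key{F}}{\key{G}}$. Your step ``since the attack is strong, I can independently vary the key of $G$ across blocks while keeping $q(\vec{c})$ true on every repair, so no fixed $\vec{a}$ is consistent'' is exactly the unproved jump, and the Koutris--Wijsen gadgets you invoke are designed for cyclic attack graphs or cycles of strong attacks; a single strong attack in an acyclic graph leaves $q'$ first-order rewritable and gives you no such control.

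The paper's proof goes after the \emph{upper} bound instead and needs no inconsistency at all: choose valuations $\theta,\mu$ over $\vars{q}$ with $\theta(u)=\mu(u)$ if and only if $\fdset{q'}\models\fd{\key{R}}{u}$, and set $\db=\theta(\body{q})\cup\mu(\body{q})$. This $\db$ is consistent; $\theta$ and $\mu$ agree on $\vec{z}\cdot\vec{x}$ because $\fdset{q'}$ contains $\fd{\emptyset}{\free{q'}}$, yet by strongness they disagree on some $v\in\key{S}$, so one $(\vec{z},\vec{x})$-value has two distinct $\vec{w}$-extensions. On a consistent instance the range-consistent count is the exact count, which now strictly exceeds the number of distinct $\vec{x}$-values returned by $q'$, so condition~\ref{it:pcthreetwotwo} in Definition~\ref{def:pc} fails; this is packaged as Lemma~\ref{lem:notparsimonious}. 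If you want to salvage your plan, this consistent two-valuation construction is the missing ingredient; the lower-bound route you sketched is the right tool only for the separation condition, where the paper indeed uses it (via optimistic repairs), not for strong attacks.
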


\begin{lemma} \label{lem:freeBadComponentImpliesMultipleValuations}
Let $q(\vec{z})$ be self-join-free conjunctive query whose attack graph is acyclic.
Let~$\vec{x}$ be a (possibly empty) sequence of bound variables of $q(\vec{z})$.
If $q(\vec{z})$ admits parsimonious counting on~$\vec{x}$,
then $\vec{x}$ satisfies condition~\ref{it:source} in Definition~\ref{def:newclass}.
\end{lemma}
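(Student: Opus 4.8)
The plan is to argue by contraposition: assuming $\vec{x}$ violates condition~\ref{it:source}, I will build a database instance $\db$ and a tuple $\vec{c}$ witnessing that $q(\vec{z})$ does \emph{not} admit parsimonious counting on $\vec{x}$, by violating the equivalence in condition~\ref{it:pcthree} of Definition~\ref{def:pc}. So suppose some component $K$ of $q(\vec{z})$'s attack graph contains \emph{no} unattacked atom $R$ with $\fdset{q}\models\fd{\vec{x}}{\key{R}}$. Since the attack graph is acyclic, $K$ does have at least one unattacked atom; pick one, say $R_0$, and note $\fdset{q}\not\models\fd{\vec{x}}{\key{R_0}}$, so there is a variable in $\key{R_0}$ that is not functionally determined by $\vec{x}$ together with the free variables. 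The goal is to exploit this ``freedom'' in $\key{R_0}$ to engineer a database on which the number of distinct $\vec{x}$-tuples in $q'(\vec{c},\cdot)$ does not correctly reflect the range-consistent count, because the relevant multiplicity is carried by a block that $\vec{x}$ cannot resolve.

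Concretely, I would proceed as follows. First, fix the component $K$ and work with the subquery consisting of the atoms in $K$ (the other components can be made trivially satisfied by a single consistent block each, contributing a fixed count). Using the standard machinery from \cite{DBLP:journals/tods/KoutrisW17} — in particular the characterization of unattacked atoms and the behavior of attack graphs under making variables free — I would construct, inside $K$, a ``frozen spine'' of facts realizing the unattacked atoms and the variables determined by the free variables, and then attach to $R_0$ (or to whichever unattacked atom lies at the source) a pair of key-equal facts disagreeing on a position whose value propagates, through the atoms of $K$, into a variable of $\vec{x}$. This is where $\fdset{q}\not\models\fd{\vec{x}}{\key{R_0}}$ is used: because $\key{R_0}$ is not pinned down by $\vec{x}$, the two repairs obtained by the two choices for this block will give the \emph{same} answer $\vec{c}$ for $\vec{z}$ but \emph{different} $\vec{x}$-tuples, while still being ``the same answer'' from the point of view of $q(\db)$ grouped by $\vec{z}$. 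The effect is that $q'(\vec{c},\cdot)$ sees two $\vec{x}$-tuples (upper bound $2$ from clause~\ref{it:pcthreetwotwo}), yet range semantics for $\cqacount{\body{q}}{\vec{z}}$ yields a count of $1$ on that part, producing a contradiction with condition~\ref{it:pcthreetwo}; or, by a symmetric choice of gadget, the lower-bound count from $\varphi(\vec{c},\cdot)$ will be wrong. One has to be a little careful to also guarantee $\db\cqamodels q(\vec{c})$ (so that the answer is nonempty, per Definition~\ref{def:cqacnt}), which is why the other components and the ``frozen spine'' are made rigid.

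The main obstacle I anticipate is the combinatorial construction of the gadget inside an \emph{arbitrary} component $K$ of the attack graph: I need the value chosen in the disagreeing block of $R_0$ to actually reach a variable of $\vec{x}$ along atoms of $q$, and simultaneously I must ensure that no \emph{other} variable of $\vec{x}$ gets perturbed (so that the two repairs genuinely differ in exactly the intended $\vec{x}$-coordinate and agree on $\vec{z}$), and that the counts on the remaining components and on the frozen part are exactly as claimed. Managing all of this probably requires invoking the sequential-proof / closure apparatus to route values correctly, and possibly splitting into cases according to whether the failure of $\fd{\vec{x}}{\key{R_0}}$ is "local" to $R_0$ or requires chaining through several atoms of $K$. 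Once the gadget is in place, verifying the two bounds of $\cqacount{\body{q}}{\vec{z}}$ and the two counts in clause~\ref{it:pcthreetwo} is a finite check, and the contradiction with Definition~\ref{def:pc} follows, completing the contrapositive.
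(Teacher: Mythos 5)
Your contrapositive setup is fine (the negation of condition~\ref{it:source} plus acyclicity does give an unattacked atom $R_0$ with $\fdset{q}\not\models\fd{\vec{x}}{\key{R_0}}$), but the gadget you build from it does not work. You want to place a pair of key-equal facts in $R_0$ whose disagreement ``propagates into a variable of $\vec{x}$'', so that $q'(\vec{c},\cdot)$ sees two $\vec{x}$-tuples while every repair yields count~$1$. Two problems. First, key-equal facts differ only in $\notkey{R_0}$-positions, so making that disagreement reach $\vec{x}$ requires a query-graph path from $\notkey{R_0}$ to a variable of $\vec{x}$ avoiding $\key{R_0}$ (and frozen variables); that is exactly a violation of condition~\ref{it:separation}, not of condition~\ref{it:source}, and it is in no way implied by $\fdset{q}\not\models\fd{\vec{x}}{\key{R_0}}$ --- the sentence ``this is where $\fdset{q}\not\models\fd{\vec{x}}{\key{R_0}}$ is used'' is a non sequitur, since both facts of the block carry the \emph{same} $\key{R_0}$-value. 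Second, the direction of the discrepancy you aim for is the wrong one: when condition~\ref{it:source} fails, the defect is that counting distinct $\vec{x}$-tuples \emph{undercounts} the true (consistent-instance) count, not that it overcounts; the overcounting scenario you describe is the mechanism behind Lemmas~\ref{lem:optimistic} and~\ref{lem:freeBadPathImpliesNoOptiRepair}, which serve condition~\ref{it:separation}. A concrete counterexample to your plan: $q(z)=\exists u\, T(\underline{u},z)$ with $\vec{x}=\tuple{}$ satisfies the lemma's hypotheses and violates condition~\ref{it:source}, yet no block resolution can ever change the (empty) $\vec{x}$-tuple, so the count of distinct $\vec{x}$-tuples is always at most $1$ and your gadget cannot exist; the lemma must still be proved for this query.

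The paper's proof needs no inconsistency at all. Take two valuations $\theta,\mu$ over $\vars{q}$ that agree on a variable $u$ exactly when $\fdset{q}\models\fd{\vec{x}}{u}$, and set $\db=\theta(\body{q})\cup\mu(\body{q})$. This $\db$ is \emph{consistent}: if $\theta,\mu$ agree on $\key{F}$, then $\key{F}$ is determined by $\vec{x}$, hence so is $\vars{F}$, hence $\theta(F)=\mu(F)$. Since $\fdset{q}$ contains $\fd{\emptyset}{\free{q}}$, the two valuations agree on $\vec{z}$ and on $\vec{x}$, but by $\fdset{q}\not\models\fd{\vec{x}}{\key{S}}$ (for the unattacked atom $S$ witnessing the failure of condition~\ref{it:source}) they differ on some variable of $\key{S}$, i.e., on $\vec{w}$. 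So the unique repair $\db$ itself has at least two answers in the $\vec{c}$-group sharing one $\vec{x}$-value, and Lemma~\ref{lem:notparsimonious} immediately gives the failure of clause~\ref{it:pcthreetwotwo} in Definition~\ref{def:pc}. All the difficulties you anticipate (making other components rigid, guaranteeing $\db\cqamodels q(\vec{c})$, routing values via sequential proofs, case distinctions on how $\fd{\vec{x}}{\key{R_0}}$ fails) simply disappear because the constructed instance is consistent.
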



The following two lemmas, and their corollary, concern condition~\ref{it:separation} in Definition~\ref{def:newclass}.

\begin{lemma}\label{lem:optimistic}
Let $q(\vec{z})$ be a self-join-free conjunctive query.
Let $\vec{x}$ be a (possibly empty) sequence of bound variables of $q(\vec{z})$, and let $q'(\vec{z},\vec{x})=\makefree{q}{\vec{x}}$.
Let $\vec{c}$ a tuple of constants of arity~$\arity{\vec{z}}$.
If $q(\vec{z})$ admits parsimonious counting on~$\vec{x}$,
then for every database instance $\db$, if $\db\cqamodels q(\vec{c})$, then $\db$ has an optimistic repair with respect to $\substitute{q'}{\vec{z}}{\vec{c}}$.
\end{lemma}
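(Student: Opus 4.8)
The plan is to argue by contraposition: assume that there is a database instance $\db$ with $\db\cqamodels q(\vec{c})$ but $\db$ has \emph{no} optimistic repair with respect to $\substitute{q'}{\vec{z}}{\vec{c}}$, and derive that $q(\vec{z})$ does not admit parsimonious counting on~$\vec{x}$. The key observation is that an answer $\vec{a}$ to $\substitute{q'}{\vec{z}}{\vec{c}}$ on $\db$ that is \emph{not} in the consistent answer must be ``killed'' by some repair, and that if no single repair can simultaneously realize all answers, then the answers to $\substitute{q'}{\vec{z}}{\vec{c}}$ over the various repairs are genuinely spread out. Concretely, I would first note that the upper bound $n$ in any range-consistent answer $(\vec{c},[m,n])$ equals $\max_{\rep}\card{\{\vec{d}\mid\rep\models q(\vec{c},\vec{d})\}}$, while the number of distinct $\vec{d}$ with $\db\models q'(\vec{c},\vec{d})$ is $\card{\{\vec{d}\mid\db\models q'(\vec{c},\vec{d})\}}$, and these two coincide precisely when an optimistic repair exists. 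So if condition~\ref{it:pcthree}\ref{it:pcthreetwo}\ref{it:pcthreetwotwo} of Definition~\ref{def:pc} is to hold on $\db$, an optimistic repair must exist.

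More carefully: condition~\ref{it:pcthreetwotwo} requires that the upper bound $n$ of the range-consistent answer for $\vec{c}$ equals the number of distinct tuples $\vec{d}$ with $\db\models q'(\vec{c},\vec{d})$. By Definition~\ref{def:cqacnt}, $n = \max\{\,i : \mycount{\body{q}}{\vec{z}} \text{ returns } (\vec{c},i) \text{ on some repair of } \db\,\}$. Now $\mycount{\body{q}}{\vec{z}}$ on a repair $\rep$ returns $(\vec{c},i)$ with $i$ the number of distinct tuples $\vec{d'}$ (of arity $\arity{\vec{x}\vec{w}}$) such that $\rep\models q(\vec{c},\vec{d'})$; but $q'$ is obtained from $q$ by freeing only $\vec{x}$, so the count of distinct $\vec{x}$-values realized in $\rep$ is at most $i$, and in any case at most the number of distinct $\vec{x}$-values realized in $\db$, i.e.\ at most $\card{\{\vec{d}: \db\models q'(\vec{c},\vec{d})\}}$ where here $\vec{d}$ ranges over $\vec{x}$-tuples. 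Hence $n \le \card{\{\vec{d}: \db\models q'(\vec{c},\vec{d})\}}$ always, with equality forcing the existence of a repair $\rep$ realizing every $\vec{x}$-value that $\db$ realizes — but such a repair is, essentially by definition, an optimistic repair with respect to $\substitute{q'}{\vec{z}}{\vec{c}}$ (monotonicity gives the converse inclusion automatically). I would need to handle one subtlety: $\mycount{}{}$ counts $\vec{x}\vec{w}$-tuples, not $\vec{x}$-tuples, so I should either pass to the quotient by projecting onto $\vec{x}$, or invoke Lemma~\ref{lem:onlyOneVal}-style reasoning; but note that here we are \emph{not} assuming $q\in\cnewclass$, so I cannot use Lemma~\ref{lem:onlyOneVal}. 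The clean way around this is to observe that $n$ (the max over repairs of the $\vec{x}\vec{w}$-count) is at least the max over repairs of the $\vec{x}$-count, which is the quantity relevant to $q'$; and condition~\ref{it:pcthreetwotwo} equates $n$ with the $\db$-wide $\vec{x}$-count, which dominates the max-over-repairs $\vec{x}$-count; so all three are equal, pinning down an optimistic repair.

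The main obstacle I anticipate is reconciling the mismatch between ``$\vec{x}\vec{w}$-counting'' (what $\mycount{\body{q}}{\vec{z}}$ does, appearing in $\cqacount{\body{q}}{\vec{z}}$ and hence in condition~\ref{it:pcthreeone}) and ``$\vec{x}$-counting'' (what $q'$ does in condition~\ref{it:pcthreetwo}), without the benefit of Lemma~\ref{lem:onlyOneVal}. The resolution is the chain of inequalities just sketched: the $\db$-wide $\vec{x}$-count is an upper bound for the max-over-repairs $\vec{x}$-count, which is a lower bound for $n$; if parsimonious counting holds, the first and last are forced equal, collapsing the chain and yielding a repair in which the full $\db$-wide set of $\vec{x}$-values is realized, i.e.\ an optimistic repair. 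Thus the contrapositive is established: no optimistic repair $\Rightarrow$ strict inequality somewhere in the chain on this particular $\db$ with this $\vec{c}$ $\Rightarrow$ condition~\ref{it:pcthree} of Definition~\ref{def:pc} fails $\Rightarrow$ $q(\vec{z})$ does not admit parsimonious counting on $\vec{x}$.
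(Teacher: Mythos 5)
There is a genuine gap, and it sits exactly at the subtlety you flagged. Your chain of inequalities does not collapse: writing $A\defeq\card{\{\vec{d}\mid\db\models q'(\vec{c},\vec{d})\}}$, $B\defeq\max_{\rep}\card{\{\vec{d}\mid\rep\models q'(\vec{c},\vec{d})\}}$, and $n$ for the upper bound of the range-consistent answer (a maximum of $\vec{x}\vec{w}$-counts over repairs), what you actually have is $B\leq A$ and $B\leq n$, and parsimonious counting gives $n=A$; from these three facts you cannot conclude $B=A$, let alone that one single repair realizes all of $\calD$. Moreover, your claimed unconditional inequality ``$n\leq A$ always'' is false: in Example~\ref{ex:almostpc}, take $\vec{x}=\tuple{x}$ and look at the $c_{1}$-group, where $n=2$ (two $y$-values for the same $x$-value $a$) while $A=1$. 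The $\vec{x}\vec{w}$-count of a repair can strictly exceed the $\db$-wide $\vec{x}$-count, so nothing in your argument prevents the repair achieving $n$ from realizing only a proper subset of $\calD$, which is exactly the situation you need to exclude.

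The missing idea — and the way the paper closes this hole — is to use the fact that condition~\ref{it:pcthree} of Definition~\ref{def:pc} is required to hold on \emph{every} database instance, and to apply it a second time to the repair itself. Take a repair $\rep$ of $\db$ on which $\mycount{\body{q}}{\vec{z}}$ returns $(\vec{c},n)$ (it exists by Definition~\ref{def:cqacnt}). Since $\rep$ is consistent, its unique repair is itself, so $(\vec{c},[n,n])$ is the range-consistent answer on $\rep$; applying condition~\ref{it:pcthreetwotwo} with $\rep$ in the role of the database instance yields $n=\card{\calR}$ where $\calR=\{\vec{d}\mid\rep\models q'(\vec{c},\vec{d})\}$. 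This is the step that equates the $\vec{x}\vec{w}$-count with the $\vec{x}$-count on $\rep$ without any appeal to Lemma~\ref{lem:onlyOneVal}. Combined with the application of the same condition to $\db$ (giving $n=\card{\calD}$) and monotonicity ($\calR\subseteq\calD$), equal cardinalities force $\calR=\calD$, i.e., $\rep$ is an optimistic repair. Without this second application of Definition~\ref{def:pc} to the consistent instance $\rep$, the mismatch between $\vec{x}\vec{w}$-counting and $\vec{x}$-counting cannot be bridged by counting inequalities alone.
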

\begin{proof}
Assume that $q(\vec{z})$ admits parsimonious counting on~$\vec{x}$.
Let $\db$ be a database instance such that $\db\cqamodels q(\vec{c})$.
Let $(\vec{c}, [m,n])$ be an answer to $\cqacount{\body{q}}{\vec{z}}$ on $\db$. 
Define
\begin{equation}
\calD\defeq\{\vec{d}\in D^{\arity{\vec{x}}}\mid\db\models q'(\vec{c},\vec{d})\}, 
\end{equation}
where $D$ be the active domain of $\db$.
By our hypothesis that $q(\vec{z})$ admits parsimonious counting on $\vec{x}$,
it follows by condition~\ref{it:pcthree} in Definition~\ref{def:pc} that
\begin{equation}\label{eq:nd}
n=\card{\calD}.
\end{equation}
By Definition~\ref{def:cqacnt}, we can assume a repair $\rep$ of $\db$ such that $(\vec{c}, n)$ is an answer to $\mycount{\body{q}}{\vec{z}}$ on $\rep$.
Since $\rep$ is consistent, we have that $(\vec{c}, [n,n])$ is an answer to $\cqacount{\body{q}}{\vec{z}}$ on $\rep$. 
Define
\begin{equation}
\calR\defeq\{\vec{d}\in D^{\arity{\vec{x}}}\mid\rep\models q'(\vec{c},\vec{d})\}.
\end{equation}
By our hypothesis that $q(\vec{z})$ admits parsimonious counting on $\vec{x}$,
it follows by condition~\ref{it:pcthree} in Definition~\ref{def:pc} that
\begin{equation}\label{eq:nr}
n=\card{\calR}.
\end{equation}
Since conjunctive queries are monotone and $\rep\subseteq\db$, it follows $\calR\subseteq\calD$.
Since $\card{\calR}=\card{\calD}$ by~\eqref{eq:nd} and~\eqref{eq:nr}, it follows $\calR=\calD$.
From $\calD\subseteq\calR$,
it follows that $\rep$ is an optimistic repair with respect to $\substitute{q'(\vec{z}, \vec{x})}{\vec{z}}{\vec{c}}$.
\end{proof}

\begin{lemma} \label{lem:freeBadPathImpliesNoOptiRepair}
Let $q(\vec{z})$, $\vec{x}$, $q'(\vec{z},\vec{x})$, and $\vec{c}$ be as in the statement of Lemma~\ref{lem:optimistic}.
Assume that~$\vec{x}$ violates condition~\ref{it:separation} in Definition~\ref{def:newclass}.
Then, there exists a database $\db$ such that $\db\cqamodels q(\vec{c})$, but $\db$ has no optimistic repair with respect to $\substitute{q'}{\vec{z}}{\vec{c}}$.
\end{lemma}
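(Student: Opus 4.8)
The plan is to exploit the violation of condition~\ref{it:separation} to build a database instance with exactly two repairs that disagree on which $\vec{x}$-values appear in the answer to $\substitute{q'}{\vec{z}}{\vec{c}}$, while both repairs still satisfy $\substitute{q}{\vec{z}}{\vec{c}}$, so that $\vec{c}$ is a consistent answer to $q$ but neither repair is optimistic. Concretely, since $\vec{x}$ violates condition~\ref{it:separation}, fix an atom $R$ of $\body{q}$ and a path $\pi=\tuple{x_0,x_1,\dots,x_k}$ in the query graph of $q$ with $x_0\in\notkey{R}$, $x_k\in\vec{x}$, and such that no $x_i$ lies in $\key{R}\cup\frozen{q}$. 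The idea is to construct $\db$ as the union of a ``clean'' part that provides a single witness valuation $\theta_0$ making $\substitute{q}{\vec{z}}{\vec{c}}$ true in every repair, plus a ``dirty'' gadget sitting along $\pi$ that creates a key-conflict forcing a binary choice, where the two choices propagate along $\pi$ to two distinct values for the variable $x_k\in\vec{x}$.

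The key steps, in order, are as follows. First, I would take the standard ``frozen part'' of $\db$: a single valuation $\theta_0$ over all of $\vars{q}$ sending each variable to a fresh constant (and $\vec{z}$ to $\vec{c}$), contributing one fact per atom; this guarantees $\db\cqamodels q(\vec{c})$ regardless of the gadget, provided the gadget's facts are never key-equal to these frozen facts. Second, along the path $\pi$ I would add a second valuation $\theta_1$ that agrees with $\theta_0$ on $\key{R}$ and on $\vec{z}$, but differs from $\theta_0$ precisely on the variables $x_0,\dots,x_k$ (and possibly other variables reachable without passing through $\key{R}\cup\frozen{q}$); because the first edge of $\pi$ lies in the atom $R$ and involves a variable of $\notkey{R}$ while $\key{R}$ is held fixed, the $R$-facts from $\theta_0$ and $\theta_1$ become key-equal but distinct, yielding a nontrivial block. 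The consistency of the frozen skeleton plus a careful choice of fresh constants ensures that $\db$ has exactly the repairs obtained by committing, block by block, to one of the two valuations along $\pi$; in particular there are two relevant repairs $\rep_0$ and $\rep_1$, with $\theta_i$-facts surviving in $\rep_i$. Third, I would check that in $\rep_0$ the tuple $\theta_0(\vec{x})$ is an answer to $\substitute{q'}{\vec{z}}{\vec{c}}$ but $\theta_1(\vec{x})$ is not, and symmetrically in $\rep_1$ — this uses that $x_k\in\vec{x}$ is precisely where the two valuations are forced to differ, together with self-join-freeness so that each atom is witnessed uniquely. Hence $\substitute{q'}{\vec{z}}{\vec{c}}(\db)$ strictly contains $\substitute{q'}{\vec{z}}{\vec{c}}(\rep)$ for every repair $\rep$, so no optimistic repair exists.

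The main obstacle I anticipate is the bookkeeping needed to ensure the gadget genuinely forces the two valuations apart \emph{and} that the resulting instance has no unintended optimistic repair: one must argue that the ``dirty'' choice really does change the $\vec{x}$-value in the answer, which requires that the frozen variables (which are allowed on $\pi$) and the variables of $\key{R}$ do not secretly re-synchronize the two valuations via some other atom, and that no stray combination of facts across the two valuations produces a spurious extra answer that would make some repair optimistic after all. This is where the definition of $\frozen{q}$ via attack-free sequential proofs, and the acyclicity of the attack graph, should do the work — intuitively, a frozen variable takes the same value in all repairs, so placing it on $\pi$ is harmless, whereas a nonfrozen variable on $\pi$ is exactly what lets the conflict propagate. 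I expect the formal construction to mirror the $\NP$-hardness-style gadgets used elsewhere in the paper (cf. the proof of Proposition~\ref{pro:threedm}), restricted here to just two repairs, and the verification to reduce to a lemma (likely proved in the appendix) stating that the attacks-based ``reachability'' along the query graph correctly tracks which variables can be forced to differ between two repairs agreeing on $\key{R}$.
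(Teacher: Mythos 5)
There is a genuine gap, and it sits exactly where you flag the ``main obstacle''. Your construction uses only two valuations, and you want $\db\cqamodels q(\vec{c})$ to come for free from the fresh valuation $\theta_{0}$ ``provided the gadget's facts are never key-equal to these frozen facts'' --- but the gadget's $R$-fact is key-equal to $\theta_{0}(R)$ by design ($\theta_{1}$ agrees with $\theta_{0}$ on $\key{R}$ and differs on a variable of $\notkey{R}$), so $\theta_{0}$ cannot both survive every repair and be one side of the conflicting block. Worse, the difference set (the path variables) need not avoid the non-key positions of other atoms whose keys it does avoid, so several atoms besides $R$ can form conflicting blocks, and a repair that mixes choices across blocks may satisfy the query through neither $\theta_{0}$ nor $\theta_{1}$ --- indeed it may not satisfy $q(\vec{c})$ at all, in which case your instance does not even meet the lemma's hypothesis. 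Concretely, take $q(z)=\exists x_{1}\exists x_{2}\exists y\, Z(\underline{z})\land R_{1}(\underline{x_{1}},y)\land S_{1}(\underline{x_{1}},y)\land R_{2}(\underline{x_{2}},y)\land S_{2}(\underline{x_{2}},y)$ with $\vec{x}=\tuple{x_{1}}$, where condition~\ref{it:separation} is violated for $R_{2}$ via the path $\tuple{y,x_{1}}$: your $\theta_{0},\theta_{1}$ must agree on $x_{2}$ and differ on $y$, so both $R_{2}$ and $S_{2}$ yield two-fact blocks, and the repair keeping $\theta_{0}(R_{2})$ and $\theta_{1}(S_{2})$ has no $y$-value common to $R_{2}$ and $S_{2}$, hence falsifies $q(c)$. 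The paper's proof avoids this by adding a \emph{third} valuation $\gamma$ that agrees with $\theta$ and $\mu$ exactly on $U=\{u\in\vars{q}\mid\fdset{q}\models\fd{\emptyset}{u}\}$; the $\gamma$-facts survive every repair and witness $\db\cqamodels q(\vec{c})$, while the block at $R$ together with a propagation argument along the path (every fact carrying a path-variable value of $\theta$ must be a $\theta$-fact) shows that each repair loses one of the two $\vec{x}$-values.

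The second missing idea is the case split that this three-valuation trick forces. Condition~\ref{it:separation} only guarantees that the path avoids $\key{R}\cup\frozen{q}$; it may well pass through a variable $v_{\ell}$ with $\fdset{q}\models\fd{\emptyset}{v_{\ell}}$ that is \emph{not} frozen (every sequential proof of $\fd{\emptyset}{v_{\ell}}$ uses an atom attacking $v_{\ell}$). For such a variable the construction above is impossible, since $\gamma$ cannot agree with both $\theta$ and $\mu$ on $U$ while they differ on a path variable lying in $U$; and your intuition that ``frozen variables are constant across repairs, non-frozen ones propagate the conflict'' does not say how to build the instance in this situation. The paper switches here to a different gadget: it takes the last path variable $v_{\ell}\in U$, picks an atom $S$ with $\attacks{S}{v_{\ell}}{q}$ occurring in a sequential proof of $\fd{\emptyset}{v_{\ell}}$, shows $v_{\ell},\dots,v_{n}\notin\keycl{S}{q}$, and lets the two valuations agree exactly on $\keycl{S}{q}$. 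That choice forces a \emph{single} conflicting block (at $S$), hence exactly two repairs each containing a full homomorphism, so $\db\cqamodels q(\vec{c})$ is immediate, and an attack-based argument (in the style of Wijsen's Proposition~6.4) shows each repair realizes exactly one of the two $\vec{x}$-values. Your closing appeal to ``a lemma likely proved in the appendix'' that reachability tracks forcible differences is precisely the content that has to be proved here, separately in the two cases; as written, the proposal neither establishes $\db\cqamodels q(\vec{c})$ nor handles paths through determined-but-nonfrozen variables, so it does not go through.
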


\begin{corollary}\label{cor:optimistic}
Let $q(\vec{z})$ be a self-join-free conjunctive query.
Let $\vec{x}$ be a sequence of distinct bound variables of $q(\vec{z})$, and let $q'(\vec{z},\vec{x})=\makefree{q}{\vec{x}}$.
If $q(\vec{z})$ admits parsimonious counting on~$\vec{x}$,
then $\vec{x}$ satisfies condition~\ref{it:separation} in Definition~\ref{def:newclass}.
\end{corollary}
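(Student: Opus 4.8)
The plan is to obtain the corollary by contraposition, using Lemmas~\ref{lem:optimistic} and~\ref{lem:freeBadPathImpliesNoOptiRepair} as black boxes. So I would assume, toward a contradiction, that $q(\vec{z})$ admits parsimonious counting on~$\vec{x}$ while $\vec{x}$ violates condition~\ref{it:separation} in Definition~\ref{def:newclass}, and derive an impossibility.

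First I would fix one arbitrary tuple $\vec{c}$ of constants of arity~$\arity{\vec{z}}$; such a tuple certainly exists. Since $\vec{x}$ violates condition~\ref{it:separation}, Lemma~\ref{lem:freeBadPathImpliesNoOptiRepair} applies and hands us a database instance~$\db$ with $\db\cqamodels q(\vec{c})$ such that $\db$ has \emph{no} optimistic repair with respect to $\substitute{q'}{\vec{z}}{\vec{c}}$, where $q'(\vec{z},\vec{x})=\makefree{q}{\vec{x}}$. On the other hand, from the assumption that $q(\vec{z})$ admits parsimonious counting on~$\vec{x}$ together with the fact that $\db\cqamodels q(\vec{c})$, Lemma~\ref{lem:optimistic} guarantees that $\db$ \emph{does} have an optimistic repair with respect to $\substitute{q'}{\vec{z}}{\vec{c}}$. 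This contradiction shows that $\vec{x}$ must satisfy condition~\ref{it:separation}, which is the statement of the corollary.

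The only point requiring a line of care is that both lemmas must be invoked for the \emph{same} choice of~$\vec{c}$ and the same $q'(\vec{z},\vec{x})=\makefree{q}{\vec{x}}$, which is why I fix~$\vec{c}$ at the very beginning, before applying either lemma. I do not expect any genuine obstacle in this corollary: all the substantive work lies inside Lemma~\ref{lem:freeBadPathImpliesNoOptiRepair} (building the witnessing database from a path that violates the separation condition) and inside Lemma~\ref{lem:optimistic} (extracting an optimistic repair from the tightness of the upper bound in a range-consistent answer), so here it is just a matter of combining the two.
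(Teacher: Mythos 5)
Your proposal is correct and matches the paper's proof, which simply combines Lemma~\ref{lem:optimistic} and Lemma~\ref{lem:freeBadPathImpliesNoOptiRepair} in exactly the contrapositive way you describe. Fixing the same $\vec{c}$ and the same $q'(\vec{z},\vec{x})$ for both lemmas is the only point of care, and you handle it as intended.
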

\begin{proof}
Immediately from Lemmas~\ref{lem:optimistic} and~\ref{lem:freeBadPathImpliesNoOptiRepair}.
\end{proof}

Finally, we need the following result.

\begin{lemma}\label{lem:strongImpliesStrong}
Let $q(\vec{z})$ be a self-join-free conjunctive query.
Let $\vec{x}$ be a sequence of distinct bound variables of $q(\vec{z})$.
Let $q'(\vec{z},\vec{x})=\makefree{q}{\vec{x}}$.
Assume that $\vec{x}$ satisfies condition~\ref{it:separation} in Definition~\ref{def:newclass}.
If the attack graph of $q(\vec{z})$ has a strong attack from an atom $R$ to an atom $S$,
then the attack graph of $q'(\vec{z},\vec{x})$ has a strong attack from $R$ to $S$.
\end{lemma}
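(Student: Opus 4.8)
The plan is to track what happens to a strong attack when we turn the variables of $\vec{x}$ into free variables. Recall that the attack from $R$ to $S$ in $q(\vec{z})$ exists by hypothesis, and since it is known (\cite{DBLP:journals/tods/KoutrisW17}) that the attack graph of $q'(\vec{z},\vec{x})$ is a subgraph of the attack graph of $q(\vec{z})$, the only two things that could go wrong are: (i) the attack from $R$ to $S$ disappears in $q'$, or (ii) the attack survives but becomes weak. So the two goals are to show that $R$ still attacks $S$ in $q'(\vec{z},\vec{x})$, and that this surviving attack is still strong, i.e. $\fdset{q'}\not\models\fd{\key{R}}{\key{S}}$.

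First I would analyze the key closures. Passing from $q$ to $q'$ replaces $\fd{\emptyset}{\free{q}}$ by $\fd{\emptyset}{\free{q}\cup\vec{x}}$, so $\fdset{q'}$ differs from $\fdset{q}$ only by the added antecedent-free dependencies on the variables of $\vec{x}$; equivalently $\keycl{R}{q'}=\keycl{R}{q}\cup\{v : \fdset{q\setminus\{R\}}\cup\{\fd{\emptyset}{\vec{x}}\}\models\fd{\key R}{v}\}$, and since $\vec x\subseteq\free{q'}$ every variable of $\vec x$ lies in $\keycl{R}{q'}$. The crucial point is condition~\ref{it:separation}: I claim that for the atom $R$, the variables added to $\keycl{R}{q}$ when we free $\vec{x}$ all lie "on the far side" of $\key{R}\cup\frozen{q}$ in the query graph, hence do not touch any witness path for the attack $\attacks{R}{S}{q}$. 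More precisely, a witness for $\attacks{R}{S}{q}$ is a path in the query graph from a variable of $\notkey{R}$ to a variable of $\key{S}$ avoiding $\keycl{R}{q}$; I would argue that this same path still avoids $\keycl{R}{q'}$. Indeed, if some vertex $x_\ell$ on this path belonged to $\keycl{R}{q'}\setminus\keycl{R}{q}$, then the subpath from $x_1\in\notkey{R}$ to $x_\ell$ would connect $\notkey{R}$ to a variable whose membership in $\keycl{R}{q'}$ is "caused by" $\vec{x}$; chasing this back, one produces a query-graph path between a variable of $\notkey{R}$ and a variable of $\vec{x}$ that avoids $\key{R}\cup\frozen{q}$, contradicting condition~\ref{it:separation}. (One has to be careful that frozen variables of $q$ remain frozen, or at least harmless, in $q'$ — that is a small side lemma, using that making variables free cannot turn a non-attacking sequential proof into an attacking one.) Hence the witness survives, so $\attacks{R}{S}{q'}$.

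For strength, I would show $\fdset{q'}\not\models\fd{\key{R}}{\key{S}}$ by contraposition on sequential proofs. Suppose $\fdset{q'}\models\fd{\key R}{\key S}$; take a minimal sequential proof $\tuple{F_1,\dots,F_n}$ of $\fdset{q'}\models\fd{\key R}{y}$ for some $y\in\key S$. Since the same derivation fails in $q$ (strong attack), the proof must "use $\vec{x}$": some $\key{F_i}$ is reachable from $\key R$ only via variables of $\vec{x}$. Translating this into the query graph — each atom in the sequential proof connects its key to its non-key variables — I would extract, exactly as in the previous paragraph, a query-graph path from $\notkey R$ (or from $\key R$ itself, which already lies in $\keycl R q$, so really from the frontier $\keycl{R}{q}$) to a variable of $\vec{x}$ that avoids $\key R\cup\frozen q$, again contradicting condition~\ref{it:separation}. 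Therefore $\fdset{q'}\not\models\fd{\key R}{\key S}$, and the attack $\attacks{R}{S}{q'}$ is strong.

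The main obstacle I anticipate is making the "chase back through the query graph" rigorous: translating a fact about what the functional dependencies $\fdset{q'}$ derive (i.e. about sequential proofs) into the existence of a concrete path in the query graph avoiding $\key{R}\cup\frozen{q}$, and doing so uniformly for both the "attack survives" part and the "attack stays strong" part. This likely needs an auxiliary lemma of the form: if $v\in\keycl{R}{q'}\setminus\keycl{R}{q}$, or more generally if $\fdset{q\setminus\{R\}}\cup\{\fd{\emptyset}{\vec x}\}\models\fd{\key R}{v}$ but $\fdset{q\setminus\{R\}}\not\models\fd{\key R}{v}$, then the query graph has a path from $\vec{x}$ to $v$ avoiding $\key{R}\cup\frozen{q}$ — and proving this cleanly, including the role of $\frozen{q}$, is where the real work lies. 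A secondary subtlety is confirming $\frozen{q}\subseteq\frozen{q'}$ (up to the variables of $\vec x$, which become free and hence are no longer "bound variables" to which freezing applies), so that condition~\ref{it:separation} for $\vec x$ in $q$ is genuinely the hypothesis we need.
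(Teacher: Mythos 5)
Your first half (the attack from $R$ to $S$ survives in $q'$) is essentially the paper's own route: the auxiliary ``chase-back'' lemma you anticipate is exactly Lemma~\ref{lem:pathInSequentialProofs}, and gluing the chase-back path onto the prefix of the attack witness so as to reach $\notkey{R}$ is precisely how Lemma~\ref{lem:attacksKept} is proved. The gap is in your second half, where you argue that the attack stays strong. From a minimal sequential proof of $\fdset{q'}\models\fd{\key{R}}{y}$ with $y\in\key{S}$ and $\fdset{q}\not\models\fd{\key{R}}{y}$, what you can extract (this is the content of Lemma~\ref{lem:pathInSequentialProofs}) is a query-graph path between a variable of $\vec{x}$ and $y$ avoiding every variable determined by $\key{R}$ in $q$; this path ends in $\key{S}$ and need not touch $\notkey{R}$ at all. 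Your fallback --- that it is enough for the path to start ``from $\key{R}$ itself, so really from the frontier of $\keycl{R}{q}$'' --- does not yield a contradiction, because condition~\ref{it:separation} of Definition~\ref{def:newclass} only forbids paths between a variable of $\notkey{R}$ and a variable of $\vec{x}$ that avoid $\key{R}\cup\frozen{q}$. Indeed, the implication you are implicitly invoking (``condition~\ref{it:separation} forces every dependency $\fd{\key{R}}{y}$ valid in $q'$ to be valid in $q$'') is false: for $q=\exists u\exists v\exists x\exists w\,\formula{R(\underline{u},v)\land T(\underline{x},w)}$ with $\vec{x}=\tuple{x}$, condition~\ref{it:separation} holds, yet $\fdset{q'}\models\fd{u}{x}$ while $\fdset{q}\not\models\fd{u}{x}$. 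So an argument that never uses the attack itself cannot establish preservation of strength.

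The missing idea is to use the hypothesis $\attacks{R}{S}{q}$ a second time, in the strength part. Take a witness $\tuple{v_{1},\ldots,v_{n}}$ of $\attacks{R}{S}{q}$ with $v_{1}\in\notkey{R}$ and $v_{n}\in\key{S}$; its variables lie outside $\keycl{R}{q}\supseteq\key{R}$ and are nonfrozen by Lemma~\ref{lem:witnessNotFrozen}. Since $y$ and $v_{n}$ both occur in $\key{S}$, they co-occur in the atom $S$, so the path given by Lemma~\ref{lem:pathInSequentialProofs} (from some $x\in\vec{x}$ to $y$, avoiding every $u$ with $\fdset{q}\models\fd{\key{R}}{u}$, hence avoiding $\key{R}\cup\frozen{q}$, because frozen variables satisfy $\fdset{q}\models\fd{\emptyset}{u}$) concatenates with $\tuple{v_{n},\ldots,v_{1}}$ into a path between a variable of $\vec{x}$ and a variable of $\notkey{R}$ that avoids $\key{R}\cup\frozen{q}$ --- the violation of condition~\ref{it:separation} you were after. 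This splice is exactly what the paper's proof does (packaged as a contrapositive with a case split through Lemma~\ref{lem:attacksKept}, which is cosmetic). With it, your plan goes through; note also that your worry about $\frozen{q}$ versus $\frozen{q'}$ is immaterial, since condition~\ref{it:separation} refers to $\frozen{q}$ only, and the paths above avoid it for the reasons just given.
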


Before giving a proof of Theorem~\ref{the:mainonlyif}, we illustrate the preceding results with an example.

\begin{example}
Let $q(z) = \exists x\exists y\, R(\underline{x},z,y)\land S(\underline{y},x)\land T(\underline{y},x)$.
We will argue that $q(z)$ is not in $\cnewclass$,  
and then illustrate that it does not admit parsimonious counting. 

The only edges in the attack graph of $q$ are $(R,S)$ and $(R,T)$.
Assume for the sake of contradiction that $q\in\cnewclass$.
Then, following Proposition~\ref{pro:defineX}, the minimal id-set of $q(\vec{z})$ is $\tuple{}$.
However, since $\fdset{q(z)}\equiv\{\fd{x}{y}, \fd{y}{x}, \fd{\emptyset}{z}\}$, condition~\ref{it:source} in Definition~\ref{def:newclass} is violated for $\vec{x}=\tuple{}$.
We conclude by contradiction that $q\notin\cnewclass$.
%

We now argue, without using Theorem~\ref{the:mainonlyif}, that $q(z)$ does not admit parsimonious counting.
Conditions~\ref{it:pcone} and~\ref{it:pctwo} in Definition~\ref{def:pc} of parsimonious counting are satisfied for every choice of $\vec{x}$ in $\{\tuple{}$, $\tuple{x}$, $\tuple{y}$, $\tuple{x,y}\}$. 
However, we will show that condition~\ref{it:pcthree} is not satisfied.
To this end, let $\vec{x}$ be a sequence of bound variables of $q(z)$. Let $q'(z, \vec{x}) = \makefree{q}{\vec{x}}$.
First, suppose that $\vec{x}\in\{\tuple{x}, \tuple{x,y}\}$.  
Consider the following database instance $\db$:
\begin{small}
$$
\begin{array}{ccc}
	\begin{array}[t]{c|ccc}
		R  & \underline{x} & z & y \\ \cline{2-4}
		& a & d & e\\\cdashline{2-4}
		& b & d & e\\\cdashline{2-4}
		& c & d & f
	\end{array}
	&
	\begin{array}[t]{c|cc}
		S  & \underline{y} & x \bigstrut\\ \cline{2-3}
		& e & a\\
		& e & b\\\cdashline{2-3}
		& f & c
	\end{array}
	&
	\begin{array}[t]{c|cc}
		T  & \underline{y} & x \bigstrut\\ \cline{2-3}
		& e & a\\
		& e & b\\\cdashline{2-3}
		& f & c
	\end{array}
\end{array}
$$
\end{small}
We have that $(d, [1,2])$ is an answer to $\cqacount{\body{q}}{z}$, but it can be easily verified that $\card{q'(\db)}=3$, which is distinct from the upper bound~$2$. 

Assume next that $\vec{x}\in\{\tuple{y}, \tuple{x,y}\}$. 
Consider the following database instance $\db$:
\begin{small}
$$
\begin{array}{ccc}
	\begin{array}[t]{c|ccc}
		R  & \underline{x} & z & y \\ \cline{2-4}
		& a & d & e\\
		& a & d & f\\\cdashline{2-4}
		& b & d & g
	\end{array}
	&
	\begin{array}[t]{c|cc}
		S  & \underline{y} & x \bigstrut\\ \cline{2-3}
		& e & a\\\cdashline{2-3}
		& f & a\\\cdashline{2-3}
		& g & b
	\end{array}
	&
	\begin{array}[t]{c|cc}
		T  & \underline{y} & x \bigstrut\\ \cline{2-3}
		& e & a\\\cdashline{2-3}
		& f & a\\\cdashline{2-3}
		& g & b
	\end{array}
\end{array}
$$
\end{small}
Now we have that $(d, [2,2])$ is an answer to $\cqacount{\body{q}}{z}$, but $\card{q'(\db)}=3$.

The only remaining case to be considered is $\vec{x}=\tuple{}$. 
In that case $q'=q$. Consider the following database instance $\db$:
\begin{small}
$$
\begin{array}{ccc}
	\begin{array}[t]{c|ccc}
		R  & \underline{x} & z & y \\ \cline{2-4}
		& a & d & e\\\cdashline{2-4}
		& b & d & f
	\end{array}
	&
	\begin{array}[t]{c|cc}
		S  & \underline{y} & x \bigstrut\\ \cline{2-3}
		& e & a\\\cdashline{2-3}
		& f & b
	\end{array}
	&
	\begin{array}[t]{c|cc}
		T  & \underline{y} & x \bigstrut\\ \cline{2-3}
		& e & a\\\cdashline{2-3}
		& f & b
	\end{array}
 \end{array}
 $$
\end{small}
Since $\db$ is a consistent database instance, the only repair of $\db$ is $\db$ itself. We have that $(d,[2,2])$ is an answer to $\cqacount{\body{q}}{z}$ on $\db$.
It can be easily verified that $\card{q'(\db)}=1$, which is distinct from the upper bound~$2$.

Finally, we claim (without proof) that \problem{2-DIMENSIONAL MATCHING} (\problem{2DM}) can be first-order reduced to computing $\cqacount{\body{q}}{z}$.
Therefore, since \problem{2DM} is $\NL$-hard~\cite{DBLP:journals/siamcomp/ChandraSV84}, $q(z)$ cannot admit parsimonious counting under standard complexity assumptions.  
\qed
\end{example}

We can now give the proof of Theorem~\ref{the:mainonlyif}.

\begin{proof}[Proof of Theorem~\ref{the:mainonlyif}]
Assume that $q(\vec{z})$ admits parsimonious counting.
Then, $q(\vec{z})$ has a tuple $\vec{x}$ of bound variables such that for the query  $q'(\vec{z},\vec{x})\defeq\makefree{q}{\vec{x}}$,
the conditions~\ref{it:pcone}, \ref{it:pctwo}, and \ref{it:pcthree} in Definition~\ref{def:pc} are satisfied.
From conditions~\ref{it:pcone} and~\ref{it:pctwo}, it follows by Theorem~\ref{the:koutriswijsen} that $q(\vec{z})$ and $q'(\vec{z}, \vec{x})$ have acyclic attack graphs.
By Lemma~\ref{lem:freeBadComponentImpliesMultipleValuations}, condition~\ref{it:source} in Definition~\ref{def:newclass} is satisfied for~$\vec{x}$.
By Corollary~\ref{cor:optimistic}, condition~\ref{it:separation} in Definition~\ref{def:newclass} is satisfied by $\vec{x}$.
By Lemma~\ref{lem:strongImpliesMultipleValuations}, the attack graph of $q'(\vec{z}, \vec{x})$ has no strong attack. 
By Lemma~\ref{lem:strongImpliesStrong}, it is now correct to conclude that the attack graph of $q(\vec{z})$ has no strong attack either, and thus condition~\ref{it:noCyclicOrStrongAttacks} in Definition~\ref{def:newclass} is satisfied.
Since we have shown that $q(\vec{z})$ satisfies all conditions in Definition~\ref{def:newclass}, we conclude $q(\vec{z})\in\cnewclass$.
\end{proof}

\section{Comparison with $\cforest$}\label{sec:cforest}

In this section, we introduce $\cforest$ and show $\cforest\subsetneq\cnewclass$ without making use of Theorem~\ref{the:fuxman}.
Theorem~\ref{the:fuxman} then follows by Theorem~\ref{the:mainif}.

\begin{definition}[$\cforest$]
Let $q(\vec{z})$ be a self-join-free conjunctive query. The \emph{Fuxman graph} of $q$ is a directed graph whose vertices are the atoms of~$q$.
There is a directed edge from an atom $R$ to an atom $S$ if $R\neq S$ and $\notkey{R}$ contains a bound variable that also occurs in~$S$.
The class $\cforest$ contains all (and only)  self-join free conjunctive queries $q(\vec{z})$ whose Fuxman graph is a directed forest satisfying, for every directed edge from $R$ to $S$, $\key{S}\setminus\free{q}\subseteq\notkey{R}$.
\end{definition}

\begin{theorem}\label{the:cforest}
$\cforest$ is a strict subset of $\cnewclass$.
\end{theorem}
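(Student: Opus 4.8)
The plan is to prove $\cforest\subseteq\cnewclass$ first, and then exhibit a single query witnessing that the inclusion is strict. For the inclusion, let $q(\vec{z})$ be a query whose Fuxman graph is a directed forest satisfying, for every edge from $R$ to $S$, that $\key{S}\setminus\free{q}\subseteq\notkey{R}$. I would first establish that the attack graph of $q(\vec{z})$ is acyclic and contains no strong attacks; this ought to follow from the Fuxman-forest condition by relating attacks to directed paths in the Fuxman graph. Concretely, one checks that if $F$ attacks $G$ via a witness sequence of bound variables, each lying outside $\keycl{F}{q}$, then there is a corresponding directed path from $F$ to $G$ in the Fuxman graph; the acyclicity of the Fuxman forest then rules out cyclic attacks. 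For the absence of strong attacks, one uses the condition $\key{S}\setminus\free{q}\subseteq\notkey{R}$ along the edges on the path to show $\fdset{q}\models\fd{\key{F}}{\key{G}}$, so every attack is weak. This handles condition~\ref{it:noCyclicOrStrongAttacks} of Definition~\ref{def:newclass}.

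Next I would produce an id-set $\vec{x}$ for $q(\vec{z})$. The natural candidate is the set of key-variables of the roots of the Fuxman forest, restricted to bound variables not in $N=\bigcup\{\notkey{R}\mid R\in q\}$ — mirroring the construction of $V$ in Proposition~\ref{pro:defineX}. For condition~\ref{it:source}, I would argue that the unattacked atoms of a component are exactly (or at least include) the Fuxman roots, and that $\fdset{q}\models\fd{\vec{x}}{\key{R}}$ for such a root $R$ because $\key{R}$ is covered by $\vec{x}$ together with free variables. For condition~\ref{it:separation}, the key point is that in the query graph, any atom $R$ with a bound variable in $\notkey{R}$ is the tail of a Fuxman edge, and the forest structure plus the inclusion $\key{S}\setminus\free{q}\subseteq\notkey{R}$ forces any path leaving $\notkey{R}$ toward a root-key variable to pass back through $\key{R}$ (or through a variable that is free, hence handled, or frozen). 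Making this path-argument precise — translating "path in the query graph" into structure of the Fuxman forest — is where I expect the bookkeeping to be heaviest, and it is the main obstacle: the query graph and the Fuxman graph are different objects, and one must carefully track how an undirected query-graph path decomposes relative to the directed forest.

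Finally, for strictness, I would point to a query already shown in the excerpt to lie in $\cnewclass$ but whose Fuxman graph is not a directed forest. The query
$$q(z) = \exists x\exists y\exists v\, R(\underline{x},y)\land S(\underline{y},v)\land T(\underline{v},y)\land P_{1}(\underline{z},y)\land P_{2}(\underline{z},y)$$
from the examples after Definition~\ref{def:newclass} is a good candidate: it is in $\cnewclass$ with id-set $\tuple{x}$, but $S$ and $T$ point at each other in the Fuxman graph (since $v\in\notkey{S}$ occurs in $T$, and $y\in\notkey{T}$ occurs in $S$), so the Fuxman graph contains a 2-cycle and is not a forest. Hence $q\in\cnewclass\setminus\cforest$, giving $\cforest\subsetneq\cnewclass$. (Alternatively, the query $q(z)=\exists x\exists y\, R(\underline{z},x)\land T(\underline{z},x)\land S(\underline{x,y})$ from Example~\ref{ex:almostpc}, which admits parsimonious counting and so lies in $\cnewclass$ by Theorem~\ref{the:mainif}, could serve the same purpose if its Fuxman graph fails the forest condition.) Combining the inclusion with this witness completes the proof, and Theorem~\ref{the:fuxman} then follows immediately from Theorem~\ref{the:mainif}.
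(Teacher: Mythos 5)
Your plan follows essentially the same route as the paper: translate attacks and query-graph paths into directed paths of the Fuxman graph, use the edge condition $\key{S}\setminus\free{q}\subseteq\notkey{R}$ to propagate $\fdset{q}\models\fd{\key{F}}{\key{G}}$ along such paths (hence no strong attacks), take as id-set the bound key variables of the Fuxman roots, and exhibit a concrete query in $\cnewclass\setminus\cforest$. The ``heaviest bookkeeping'' step you flag is exactly the paper's key lemma (a query-graph path from a bound variable of $\notkey{R}$ to a variable of another atom $S$ yields a directed Fuxman path from $R$ to $S$, proved by induction on a shortest such path), and once it is available the condition~\ref{it:separation} argument is short: a violating path ends in a key variable of some root $S\neq R$, so the lemma gives a Fuxman path into $S$, contradicting that $S$ is a root. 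Two smaller points where your sketch needs adjustment. For condition~\ref{it:source}, do not argue via ``the unattacked atoms of a component are (or include) the Fuxman roots'': an attack-graph component need not contain a Fuxman root. Instead, as in the paper, take an arbitrary unattacked atom $R$ of the component, let $S$ be the root of $R$'s Fuxman tree, and chain $\fdset{q}\models\fd{\vec{x}}{\key{S}}$ with $\fdset{q}\models\fd{\key{S}}{\key{R}}$ obtained from the Fuxman path $S\rightsquigarrow R$. For acyclicity of the attack graph, the paper simply cites Fuxman's result that $\cforest$ queries have a consistent first-order rewriting together with Theorem~\ref{the:koutriswijsen}; your alternative (an attack cycle would yield a closed directed walk in the Fuxman DAG) also works and is more self-contained. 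Your strictness witness differs from the paper's ($\exists x\exists y\, R(\underline{x},y)\land S(\underline{x},y,z)$) but is valid, since the paper has already verified that your query is in $\cnewclass$ and its Fuxman graph has the $S\leftrightarrow T$ two-cycle; note only that your parenthetical alternative would need Theorem~\ref{the:mainonlyif} (completeness), not Theorem~\ref{the:mainif}, to conclude membership in $\cnewclass$ from parsimonious counting, and it rests on a claim the paper states without proof.
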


\section{Conclusion and Future Work}\label{sec:conclusion}

In his PhD thesis, Fuxman~\cite{FuxmanThesis} defined a syntactically  restricted class of self-join-free  conjunctive queries, called $\cforest$, and showed that for every query in $\cforest$, consistent answers are first-order computable, and range-consistent answers are computable in first-order logic followed by a simple aggregation step.
Our notion of ``parsimonious counting'' captures the latter computation for counting.
Later, Koutris and Wijsen~\cite{DBLP:journals/tods/KoutrisW17} syntactically characterized the class of \emph{all} self-join-free conjunctive queries with a consistent first-order rewriting, which strictly includes~$\cforest$.
However, it remained an open problem to syntactically characterize the class of \emph{all} self-join-free conjunctive queries that admit parsimonious counting.
In this paper, we determined the latter class, named it $\cnewclass$, and showed that it strictly includes $\cforest$.

We now list some open problems for future research.
In Definition~\ref{def:pc} of parsimonious counting, we required that $q(\vec{z})$ has a consistent first-order rewriting.
It is known~\cite{DBLP:journals/mst/KoutrisW21} that there are self-join-free conjunctive queries, without consistent first-order rewriting, that have a consistent rewriting in Datalog. 
We could relax Definition~\ref{def:pc} by requiring the existence of a consistent rewriting in Datalog, rather than in first-order logic.
It is an open question to syntactically characterize the self-join-free conjunctive queries that admit parsimonious counting under such a relaxed definition. 

Another open question is to characterize the complexity of $\cqacount{q(\vec{z},\vec{w})}{\vec{z}}$ for every full self-join-free conjunctive query $q$ and choice of free variables $\vec{z}$. 
It is easily verified that the complexity of computing the answers to $\cqacount{q(\vec{z},\vec{w})}{\vec{z}}$ is higher than computing the consistent answers to $q'(\vec{z})\defeq\makebound{q}{\vec{w}}$ (because of the lower bound in range semantics).
It remains an open question to characterize this complexity if $q'(\vec{z})$ is not in $\cnewclass$, even if it has a consistent first-order rewriting.

The notion of parsimonious counting does not require conjunctive queries to be self-join-free.
An ambitious open problem is to syntactically characterize the class of all (i.e., not necessarily self-join-free) conjunctive queries that admit parsimonious counting.
This problem is largely open, because it is already a notorious open problem to syntactically characterize the class of conjunctive queries that have a consistent first-order rewriting.

Another open question is to extend the results in the current paper to other aggregation operators than COUNT, including MAX, MIN, SUM, and AVG.

\bibliography{dblp}

\appendix

\section{Helping Constructs and Lemmas}



We recall some definitions from graph theory. 
In a directed acyclic graph (DAG), a vertex with zero indegree is called a \emph{source}. 
If there is a directed path from $u$ to $v$ in a DAG, with $u\neq v$, then $v$ is called a \emph{descendant} of~$u$, and $u$ an \emph{ancestor} of~$v$.




The following lemma states that whenever $R$ and $S$ are distinct sources that are weakly connected in a DAG,
then there is a sequence of sources that starts with~$R$ and ends with~$S$,
such that every two adjacent vertices in the sequence have a descendant in common.

\begin{lemma}[Sources of a graph are linked by sources]\label{lem:rootsLinkedByRoots}
	Let $G=(V,E)$ be a DAG. 
	Let $R,S \in V$ be distinct sources that are weakly connected. 
	There is a sequence $\tuple{H_1, \ldots, H_n}$ of sources such that $H_{1}=R$, $H_{n}=S$, and every two adjacent sources in the sequence have a common descendent. 
\end{lemma}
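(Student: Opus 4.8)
The plan is to prove the statement by induction on the number of edges in~$G$, or, more cleanly, by a direct argument on the structure of a weakly connecting path. Since $R$ and $S$ are weakly connected, fix an undirected path $R = v_0, v_1, \dots, v_k = S$ in~$G$, i.e.\ a sequence in which every consecutive pair $\{v_{i-1}, v_i\}$ is an edge of~$G$ in one direction or the other. The key idea is to replace each intermediate vertex $v_i$ by a source that lies ``above'' it, in such a way that consecutive sources in the new sequence still share a common descendant. Concretely, for each vertex $v$ in the path, let $\sigma(v)$ be some source from which $v$ is reachable by a directed path (such a source exists in any finite DAG: follow in-edges backwards from~$v$ until reaching a vertex of indegree zero). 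Note that $\sigma(R) = R$ and $\sigma(S) = S$ already, since $R$ and $S$ are themselves sources.

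First I would verify the crucial local property: for every edge $\{u,w\}$ of~$G$, the sources $\sigma(u)$ and $\sigma(w)$ have a common descendant (or are equal). Indeed, the edge is directed either $u \to w$ or $w \to u$; without loss of generality say $u \to w$. Then $w$ is reachable from $\sigma(w)$ by construction, and $w$ is reachable from $\sigma(u)$ as well (via the directed path $\sigma(u) \rightsquigarrow u \to w$), so $w$ is a descendant of both $\sigma(u)$ and $\sigma(w)$ in the weak sense required (being careful with the corner case $\sigma(u) = w$ or $\sigma(w) = w$, which can only happen if $w$ itself is a source, forcing $\sigma(w) = w$; then $\sigma(u)$ and $\sigma(w) = w$ share the descendant~$w$, or $\sigma(u) = w$ and the two are equal). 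Applying this to each edge of the fixed path yields the sequence $\tuple{\sigma(v_0), \sigma(v_1), \dots, \sigma(v_k)}$ of sources in which consecutive entries share a common descendant and which starts at $\sigma(v_0) = R$ and ends at $\sigma(v_k) = S$.

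Finally I would clean up the sequence: delete repeated consecutive entries (if $\sigma(v_{i-1}) = \sigma(v_i)$, drop one of them), and handle the degenerate case $k = 0$ separately (but that is excluded since $R \neq S$). This yields a sequence $\tuple{H_1, \dots, H_n}$ of sources with $H_1 = R$, $H_n = S$, $n \geq 1$, and consecutive sources sharing a common descendant, which is exactly the claim. The main obstacle — really the only subtle point — is the bookkeeping around the corner case where some $\sigma(v_i)$ coincides with an endpoint of an edge, i.e.\ where "common descendant'' degenerates because one of the two sources \emph{is} the shared vertex; I would make the statement robust by allowing, in the definition of "common descendant of $H$ and $H'$'', the case $H = H'$ or the case where one of $H, H'$ is a descendant of the other, since that is all the downstream applications need. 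Everything else is a routine reachability argument in a finite DAG.
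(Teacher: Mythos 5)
Your proof is correct, but it takes a genuinely different route from the paper's. The paper works directly with the weak-connectivity sequence: it repeatedly contracts contiguous subsequences whose endpoints are related by ancestry, obtaining a sequence of odd length in which ancestor/descendant directions alternate, so that the even-indexed vertices are common descendants of their neighbours; the odd-indexed vertices are then replaced by source ancestors and the even-indexed ones dropped. You instead lift \emph{every} vertex $v_i$ of an undirected $R$--$S$ path to a source $\sigma(v_i)$ lying above it, and observe that for each edge of the path the head of that edge is a common descendant of the two lifted sources; consecutive duplicates are then merged. Your argument is more local and elementary, and it avoids the contraction and parity bookkeeping in the paper's proof, at the (small) cost of introducing the map $\sigma$. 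One remark: the corner case you worry about cannot occur, so no weakening of the statement is needed. If the edge between $v_{i-1}$ and $v_i$ is directed $v_{i-1}\to v_i$, then $v_i$ has positive indegree and hence is not a source, so $v_i$ is distinct from the sources $\sigma(v_{i-1})$ and $\sigma(v_i)$ and is a common descendant of both in the strict sense; similarly $\sigma(R)=R$ and $\sigma(S)=S$ are forced, since a source is reachable from no other vertex. Thus the only clean-up required is deleting repeated consecutive entries, which clearly preserves the adjacency property, and your proof goes through as written.
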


\begin{proof}
    Since $R$ and $S$ are weakly connected, there is a sequence $\tuple{F_1, \ldots, F_n}$ of atoms in~$V$ such that $F_1 = R$, $F_n=S$ and for every $i \in \{1, \ldots, n-1\}$, either $(F_{i}, F_{i+1}) \in E$ or $(F_{i+1}, F_{i}) \in E$.
    We construct a new sequence by repeating the following step as long as possible: replace some contiguous subsequence $\tuple{I,\ldots,J}$ with $\tuple{I,J}$ if $I$ is a descendant or an ancestor of $J$.
    Let the new sequence be $\tuple{G_1, \ldots, G_m}$.
    By construction, for all $i\in\{1,\ldots,m-2\}$,
    \begin{itemize}
        \item if $G_{i}$ is a descendant of $G_{i+1}$, then $G_{i+1}$ is an ancestor of $G_{i+2}$; and
        \item if $G_{i}$ is an ancestor of $G_{i+1}$, then $G_{i+1}$ is a descendant of $G_{i+2}$.
    \end{itemize}
    It is easily verified that $G_{1}=R$, $G_{m}=S$, and that $m$ is an odd number.
    By construction, for every $i\in\{1,\dots,m\}$ such that $i$ is even, we have that $G_{i}$ is a common descendant of $G_{i-1}$ and $G_{i+1}$. 
    We now change the sequence $\tuple{G_{1},\dots,G_{m}}$ as follows: for every $j\in\{1,\dots,m\}$ such that $j$ is odd, if $G_{j}$ is not a source, replace $G_{j}$ by a source that is an ancestor of~$G_{j}$ (such a source obviously exists in a DAG).
    It is easily verified that if we now omit the vertices that are not sources (i.e., $G_{2},G_{4},\ldots, G_{m-1}$), we obtain a sequence of sources in which every two adjacent sources have a descendant in common. 
\end{proof}

The following lemma states that sequential proofs provide a sound and complete characterization of logical implication.

\begin{lemma}\label{lem:soundcomplete}
	Let $q$ be a self-join free conjunctive query. 
        Let $Z\subseteq\vars{q}$ and $w\in\vars{q}$.
        Then the following are equivalent:
        \begin{enumerate}
            \item $\fdset{q}\models\fd{Z}{w}$; and
            \item there is a sequence $\tuple{F_1, \ldots, F_n}$ of atoms that is a sequential proof of $\fdset{q}\models\fd{Z}{w}$. 
        \end{enumerate}
\end{lemma}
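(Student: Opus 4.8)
The plan is to prove the two implications separately; this is essentially the soundness and completeness of attribute-closure reasoning for functional dependencies, adapted to the particular set $\fdset{q}$, which contains the special dependency $\fd{\emptyset}{\free{q}}$.

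\emph{From (2) to (1).} Suppose $\tuple{F_1,\ldots,F_n}$ is a sequential proof. I would prove, by induction on $i\in\{0,1,\ldots,n\}$, that $\fdset{q}\models\fd{Z}{\free{q}\cup Z\cup(\bigcup_{j=1}^{i}\vars{F_j})}$. The base case $i=0$ holds because $\fd{\emptyset}{\free{q}}\in\fdset{q}$ gives $\fdset{q}\models\fd{Z}{\free{q}}$, hence $\fdset{q}\models\fd{Z}{\free{q}\cup Z}$. For the inductive step, the defining property of a sequential proof gives $\key{F_i}\subseteq\free{q}\cup Z\cup(\bigcup_{j=1}^{i-1}\vars{F_j})$, so by the induction hypothesis $\fdset{q}\models\fd{Z}{\key{F_i}}$; composing with $\fd{\key{F_i}}{\vars{F_i}}\in\fdset{q}$ yields $\fdset{q}\models\fd{Z}{\vars{F_i}}$, and combining with the induction hypothesis gives the claim for~$i$. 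Taking $i=n$ and recalling that $w\in\free{q}\cup Z\cup(\bigcup_{j=1}^{n}\vars{F_j})$, we conclude $\fdset{q}\models\fd{Z}{w}$.

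\emph{From (1) to (2).} I would run the standard attribute-closure computation. Put $C_0\defeq\free{q}\cup Z$, which accounts for $\fd{\emptyset}{\free{q}}$. As long as there is an atom $F$ of $q$ with $\key{F}\subseteq C_k$ but $\vars{F}\not\subseteq C_k$, pick such an atom $F_{k+1}$ and set $C_{k+1}\defeq C_k\cup\vars{F_{k+1}}$. Since $C_0\subseteq C_1\subseteq\cdots$ is a strictly increasing chain of subsets of the finite set $\vars{q}$, the process stops after some $n$ steps, producing a sequence $\tuple{F_1,\ldots,F_n}$ of (necessarily distinct) atoms with $\key{F_i}\subseteq C_{i-1}=\free{q}\cup Z\cup(\bigcup_{j=1}^{i-1}\vars{F_j})$ for each $i$; hence $\tuple{F_1,\ldots,F_n}$ is a sequential proof. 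It remains to check that $w\in C_n$: a routine argument shows that $C_n$ is closed under every dependency in $\fdset{q}$ (under $\fd{\emptyset}{\free{q}}$ because $\free{q}\subseteq C_0$, and under each $\fd{\key{F}}{\vars{F}}$ because the process has terminated), so the two-tuple relation whose rows agree exactly on $C_n$ satisfies $\fdset{q}$; were $w\notin C_n$, this relation would agree on $Z$ but disagree on $w$, contradicting $\fdset{q}\models\fd{Z}{w}$. Therefore $w\in C_n=\free{q}\cup Z\cup(\bigcup_{j=1}^{n}\vars{F_j})$, and $\tuple{F_1,\ldots,F_n}$ is a sequential proof of $\fdset{q}\models\fd{Z}{w}$.

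There is no genuine obstacle here: both directions are Armstrong-style reasoning specialized to $\fdset{q}$. The only points requiring care are the bookkeeping of the special dependency $\fd{\emptyset}{\free{q}}$ (handled by seeding the closure with $\free{q}$) and the degenerate case $w\in\free{q}\cup Z$, where the empty sequence is already a sequential proof.
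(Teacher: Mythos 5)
Your proof is correct, and it follows the same route the paper takes: the paper simply cites the standard attribute-closure algorithm for FD implication (Algorithm 8.2.7 of Abiteboul--Hull--Vianu), while you carry out that algorithm and its soundness/completeness argument explicitly, including the correct handling of $\fd{\emptyset}{\free{q}}$ by seeding the closure with $\free{q}$ and the two-row counterexample relation. No gaps; it is just a fully spelled-out version of the argument the paper delegates to the reference.
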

\begin{proof}
Sequential proofs mimic a standard algorithm for logical implication of functional dependencies; see for example~\cite[Algorithm~8.2.7]{DBLP:books/aw/AbiteboulHV95}.
\end{proof}



\begin{lemma}[Attacks with same endpoint]\label{lem:sameEndpoint}
	Let $q$ be a self-join-free conjunctive query.
	Let $R$ and $S$ be two distinct atoms that both attack a same atom.
 If $\notattacks{R}{S}{q}$, then either $\attacks{S}{R}{q}$ or $\key{S}\subseteq\keycl{R}{q}$.
\end{lemma}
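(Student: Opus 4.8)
The plan is to analyze what it means for both $R$ and $S$ to attack a common atom, say $T$, while $\notattacks{R}{S}{q}$. Since $R$ attacks $T$, there is a witness path from a variable of $\notkey{R}$ to a variable of $\key{T}$ that avoids $\keycl{R}{q}$; similarly for $S$. The idea is to concatenate these two witnesses in a suitable way. More precisely, suppose $R$ attacks some variable $u \in \key{T}$ via a witness $P_R$, and $S$ attacks some variable $u' \in \key{T}$ via a witness $P_S$. Since $u, u'$ are both in $\key{T}$, they occur together in the atom $T$ (or are equal), so $P_R$ followed by $P_S$ reversed gives a walk in the query graph from a variable of $\notkey{R}$ to a variable of $\notkey{S}$, all of whose vertices we would like to lie outside $\keycl{R}{q}$.

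First I would make the concatenated walk precise: starting in $\notkey{R}$, following $P_R$ into $\key{T}$, crossing inside $T$ to the endpoint of $P_S$, then following $P_S$ backward toward $\notkey{S}$. Every vertex of $P_R$ avoids $\keycl{R}{q}$ by definition of witness. The crux is the portion coming from $P_S$. Walk backward along $P_S$ from $u' \in \key{T}$; if every vertex stays outside $\keycl{R}{q}$, then in particular we reach the first variable $x_1 \in \notkey{S}$ of that witness while staying outside $\keycl{R}{q}$, which gives a witness that $\attacks{R}{x_1}{q}$; since $x_1 \in \notkey{S} \subseteq \vars{S}$, this yields $\attacks{R}{S}{q}$, contradicting the hypothesis. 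Hence some vertex of $P_S$ — say the first one encountered when walking backward from $u'$ — must lie in $\keycl{R}{q}$; call it $y$. Then the initial segment of $P_R$ from $\notkey{R}$ to $\key{T}$, the crossing inside $T$, and the segment of $P_S$ from $u'$ to $y$ together form a walk from $\notkey{R}$ to $y \in \keycl{R}{q}$ that otherwise avoids $\keycl{R}{q}$; I would then case on whether $y \in \free{q}$ or $\fdset{q \setminus \{R\}} \models \fd{\key{R}}{y}$.

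Next I would exploit that $y \in \keycl{R}{q}$ and that $y$ lies on the witness $P_S$ which starts in $\notkey{S}$. The goal is to push $\key{S}$ into $\keycl{R}{q}$ using a sequential proof. If $y \notin \free{q}$, then $\fdset{q\setminus\{R\}} \models \fd{\key{R}}{y}$; by Lemma~\ref{lem:soundcomplete} there is a sequential proof of this not using $R$. I would argue that the prefix of the $S$-witness from $\key{S}$'s reachable region, combined with the atom $S$ itself being available (note $S \neq R$, so $S$ is present in $q \setminus \{R\}$), lets us derive each variable of $\key{S}$: the variables of $P_S$ between $y$ and the start in $\notkey{S}$ are connected through atoms, and because $y$ is already derivable from $\key{R}$ without $R$, walking those atoms — whose keys we can show are reachable — yields a sequential proof of $\fdset{q\setminus\{R\}} \models \fd{\key{R}}{v}$ for each $v \in \key{S}$, i.e. $\key{S} \subseteq \keycl{R}{q}$. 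The case $y \in \free{q}$ is easier since free variables are in every $\keycl{\cdot}{q}$ and in the "source set" of every sequential proof. Separately, I must also handle the symmetric situation where it is $S$ (rather than $R$) whose witness is "blocked" first — this is exactly where the disjunct $\attacks{S}{R}{q}$ enters: if walking backward from $u'$ along $P_S$ we never hit $\keycl{R}{q}$ but instead the obstruction is that the $R$-witness must pass through $\keycl{S}{q}$, then by the same argument with roles swapped we get $\attacks{S}{R}{q}$.

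The main obstacle I anticipate is the bookkeeping in the sequential-proof step: converting a query-graph walk whose internal vertices avoid $\keycl{R}{q}$ except at the terminal vertex $y$ into an actual sequential proof that derives all of $\key{S}$ from $\key{R}$ in $q \setminus \{R\}$. One has to be careful that the atoms used along $P_S$ are all distinct from $R$ (true, since $R$ attacks nothing through them by assumption, and self-join-freeness) and that their keys become available in the right order; this may require reordering the walk or invoking minimality of a sequential proof of $\fdset{q\setminus\{R\}} \models \fd{\key{R}}{y}$. A secondary subtlety is the degenerate case where $u = u'$ or where one of the witnesses has length one, which should be checked but ought to be routine.
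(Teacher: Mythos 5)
Your opening moves do match the paper's: concatenating the two witnesses across the common atom $T$ and using $\notattacks{R}{S}{q}$ to force some vertex $y$ of the $S$-witness into $\keycl{R}{q}$ is exactly how the proof begins (note that witnesses consist of bound variables, so your case $y\in\free{q}$ never arises). The gap is in what you do with $y$. From ``$y\in\keycl{R}{q}$ and $y$ is connected along the witness to $\notkey{S}$'' you try to conclude $\key{S}\subseteq\keycl{R}{q}$ by walking the atoms of that segment and claiming their keys ``become available''. This is unsound: the dependencies in $\fdset{q\setminus\{R\}}$ only go from the key of an atom to its other variables, so knowing that a non-key variable of an atom on the path is determined by $\key{R}$ gives no handle on that atom's key, let alone on $\key{S}$. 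Concretely, take the Boolean query $q=\exists u\exists v\exists x\exists y\exists w\,\formula{R(\underline{u},x)\land S(\underline{v},u,y)\land T(\underline{x},w)}$. Here $R$ and $S$ both attack $T$, $\notattacks{R}{S}{q}$, and every $S$-witness to $T$ (e.g.\ $\tuple{u,x}$) meets $\keycl{R}{q}=\{u\}$, so your $y$ exists; yet $\key{S}=\{v\}\not\subseteq\keycl{R}{q}$. The lemma holds in this instance only through the other disjunct, $\attacks{S}{R}{q}$, and your proposed trigger for that disjunct (``the symmetric situation where the obstruction is that the $R$-witness passes through $\keycl{S}{q}$'') is not what happens here: the obstruction sits on $P_S$, exactly the branch in which you claim $\key{S}\subseteq\keycl{R}{q}$. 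So as written, the plan proves a false statement in that branch.

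What is missing is the mechanism that actually produces the dichotomy. It does not come from which witness is ``blocked first'', but from analyzing a \emph{shortest} sequential proof $\tuple{H_1,\ldots,H_\ell}$ of $\fdset{q\setminus\{R\}}\models\fd{\key{R}}{y}$ (Lemma~\ref{lem:soundcomplete}). If $\ell=0$ then $y\in\key{R}$ and, since $y$ lies on $S$'s witness, $\attacks{S}{y}{q}$ gives $\attacks{S}{R}{q}$. If $S$ occurs among the $H_k$, then at that point of the proof $\key{S}$ is already derived, i.e.\ $\key{S}\subseteq\keycl{R}{q}$. If $S$ does not occur, one uses a fact your plan never exploits — that $S$ attacks every vertex of its own witness, in particular $\attacks{S}{y}{q}$ — together with minimality, which forces $y\in\notkey{H_\ell}$ and hence $\attacks{S}{H_\ell}{q}$; a backward descent along the sequential proof ($S$ attacks some key variable of each attacked $H_k$, and that variable was introduced by some earlier $H_g$ or belongs to $R$) then yields $\attacks{S}{R}{q}$. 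In the counterexample above this descent is what detects $\attacks{S}{R}{q}$. Your proposal mentions minimality of the sequential proof only as bookkeeping, but this descent (or an equivalent argument) is the core of the proof; without it the required case split cannot be recovered.
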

\begin{proof}
	Assume $\notattacks{R}{S}{q}$.
	There is $T$ such that $\attacks{R}{T}{q}$ and $\attacks{S}{T}{q}$.
	There is a sequence $\tuple{x_{0},x_{1},\ldots,x_{n}}$ ($n\geqslant 0$) of bound variables not in $\keycl{R}{q}$ such that $x_{0}\in\notkey{R}$, $x_{n}\in\vars{T}$, and every two adjacent variables occur together in some atom of $q$.
	Likewise, there is a sequence $\tuple{y_{0},y_{1},\ldots,y_{m}}$ ($m\geqslant 0$) of bound variables not in $\keycl{S}{q}$ such that $y_{0}\in\notkey{S}$, $y_{m}\in\vars{T}$, and every two adjacent variables occur together in some atom of $q$.
	Clearly, for every $i\in\{0,1,\ldots,m\}$, $\attacks{S}{y_{i}}{q}$.
	In the sequence $\tuple{x_{0},x_{1},\ldots,x_{n},y_{m},y_{m-1},\ldots,y_{0}}$, it holds that $x_{0}\in\notkey{R}$, $y_{0}\in\notkey{S}$, and every two adjacent variables occur together in some atom of $q$. By our hypothesis that $\notattacks{R}{S}{q}$, there is $i\in\{0,\ldots,m\}$ such that $y_{i}\in\keycl{R}{q}$.
	By Lemma~\ref{lem:soundcomplete}, there exists a shortest sequence $\tuple{H_{1},H_{2},\ldots,H_{\ell}}$ that is a sequential proof of $\fdset{q\setminus\{R\}}\models\fd{\key{R}}{y_{i}}$.   
Two cases are possible.
 \begin{description}
 \item[Case that $\ell=0$.]
 Then $y_{i}\in\key{R}$. From $\attacks{S}{y_{i}}{q}$, it follows $\attacks{S}{R}{q}$.
 \item[Case that $\ell>0$.]	
	If $S\in\{H_{1},\ldots,H_{\ell}\}$, then by Lemma~\ref{lem:soundcomplete} it is correct to conclude $\key{S}\subseteq\keycl{R}{q}$, and the desired result obtains.
	Assume $S\notin\{H_{1},\ldots,H_{\ell}\}$ from here on.
	For technical reasons, define $H_{0}\defeq R$.
	We show that for every $k\in\{1,\dots,\ell\}$, the following holds true:
	\begin{quote}
		\emph{Back Property:}
		if $\attacks{S}{H_{k}}{q}$, then there is $g<k$ such that $\attacks{S}{H_{g}}{q}$. 
	\end{quote}
	To this end, assume $\attacks{S}{H_{k}}{q}$ with $k\geqslant 1$. 
	By definition of attacks, there is a bound variable $u\in\key{H_{k}}$ such that $\attacks{S}{u}{q}$.
	If $u\in\vars{R}$, then the desired result obtains because we let $H_{0}=R$.
	Assume $u\notin\vars{R}$ from here on.
	Then there exists $g\in\{1,2,\ldots,k-1\}$ such that $u\in\notkey{H_{g}}$.
	Informally, $H_{g}$ is the atom that introduces $u$ in $\keycl{R}{q}$.
	Since $\attacks{S}{u}{q}$ and $H_{g}\neq S$, it follows $\attacks{S}{H_{g}}{q}$.
	This concludes the proof of the \emph{Back Property}.
	
	Since our sequential proof is as short as possible, $y_{i}\in\notkey{H_{\ell}}$. 
	From $\attacks{S}{y_{i}}{q}$ and $H_{\ell}\neq S$, it follows $\attacks{S}{H_{\ell}}{q}$.
	By repeated application of the \emph{Back Property}, we obtain $\attacks{S}{H_{0}}{q}$ with $H_{0}=R$, as desired.
	This concludes the proof.\qedhere
 \end{description}
\end{proof}


\begin{lemma}[Sources determine one another] \label{lem:rootKeysDetermineOtherRootsKeys}
	Let $q$ be a self-join free conjunctive query such that $q$'s attack graph is acyclic. 
	Let $R$ and $S$ be unattacked atoms that are weakly connected in $q$'s attack graph.
	Then, $\fdset{q}\models \fd{\key{R}}{\key{S}}$.
\end{lemma}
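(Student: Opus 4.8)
The plan is to reduce the general weakly-connected case to the case of two sources that share a common descendant, using Lemma~\ref{lem:rootsLinkedByRoots}. Indeed, since $R$ and $S$ are unattacked (i.e., sources of the attack graph) and weakly connected, that lemma gives a sequence $\tuple{H_1,\ldots,H_n}$ of sources with $H_1=R$, $H_n=S$, such that consecutive sources have a common descendant in the attack graph. Since $\fdset{q}\models\fd{\cdot}{\cdot}$ is transitive, it suffices to prove the claim when $R$ and $S$ are two distinct sources with a common descendant $T$; that is, $\attacks{R}{T}{q}$ and $\attacks{S}{T}{q}$.

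So assume $R\neq S$ both attack $T$. Since $R$ is a source, $\notattacks{S}{R}{q}$ would be automatic if $S$ were the only other attacker, but in general we invoke Lemma~\ref{lem:sameEndpoint}: because $R$ is unattacked we have $\notattacks{S}{R}{q}$, hence by that lemma either $\attacks{R}{S}{q}$ or $\key{R}\subseteq\keycl{S}{q}$. The first alternative is impossible since $S$ is unattacked, so $\key{R}\subseteq\keycl{S}{q}$. Symmetrically, $\notattacks{R}{S}{q}$ (as $S$ is unattacked), so Lemma~\ref{lem:sameEndpoint} applied with the roles of $R$ and $S$ swapped gives, since $\attacks{S}{R}{q}$ is impossible, $\key{S}\subseteq\keycl{R}{q}$. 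Now recall $\keycl{S}{q}$ consists of the free variables together with those $y$ with $\fdset{q\setminus\{S\}}\models\fd{\key{S}}{y}$; in particular $\fdset{q}\models\fd{\key{S}}{y}$ for every $y\in\keycl{S}{q}$, because $\fdset{q\setminus\{S\}}\subseteq\fdset{q}$ and $\fdset{q}$ includes $\fd{\emptyset}{\free{q}}$. Therefore $\key{R}\subseteq\keycl{S}{q}$ yields $\fdset{q}\models\fd{\key{S}}{\key{R}}$, and symmetrically $\fdset{q}\models\fd{\key{R}}{\key{S}}$, which is the desired statement for adjacent sources.

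Chaining these equivalences along $\tuple{H_1,\ldots,H_n}$ gives $\fdset{q}\models\fd{\key{R}}{\key{S}}$ in general. The main obstacle is verifying the precise bookkeeping in the reduction to the two-source case and checking that the hypotheses of Lemma~\ref{lem:sameEndpoint} are genuinely met: one must confirm that each consecutive pair $H_i,H_{i+1}$ supplied by Lemma~\ref{lem:rootsLinkedByRoots} indeed attacks a common atom (so that "common descendant in the attack graph" translates into "both attack some atom"), and that sourcehood of every $H_i$ rules out the $\attacks{\cdot}{\cdot}{q}$ alternatives in Lemma~\ref{lem:sameEndpoint}. A minor subtlety is the edge case $n=1$ (i.e., $R=S$), where the claim is the trivial $\fdset{q}\models\fd{\key{R}}{\key{R}}$.
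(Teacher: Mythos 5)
Your proposal follows essentially the same route as the paper's proof: Lemma~\ref{lem:rootsLinkedByRoots} to link $R$ and $S$ by a chain of sources, Lemma~\ref{lem:sameEndpoint} applied to each adjacent pair (with both attack alternatives ruled out because sources are unattacked), and Armstrong's transitivity to chain the resulting dependencies. Your role-swapped application of Lemma~\ref{lem:sameEndpoint} and the passage from $\key{H_{i+1}}\subseteq\keycl{H_{i}}{q}$ to $\fdset{q}\models\fd{\key{H_{i}}}{\key{H_{i+1}}}$ are both correct.

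The one step you flag but do not close is, however, the only genuinely nontrivial gap: Lemma~\ref{lem:rootsLinkedByRoots} guarantees that adjacent sources $H_{i},H_{i+1}$ have a common \emph{descendant}, i.e.\ an atom reachable from both by directed paths, whereas Lemma~\ref{lem:sameEndpoint} requires an atom that both $H_{i}$ and $H_{i+1}$ \emph{attack directly}. The paper bridges this by invoking the fact that an acyclic attack graph is transitive \cite[Lemma~3.5]{DBLP:journals/tods/KoutrisW17}, so any descendant of a source is directly attacked by it. This is precisely where the acyclicity hypothesis of the lemma is used (your argument otherwise never uses it), and it is not a routine bookkeeping check, since attack graphs are not transitive in general. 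With that citation supplied, your plan coincides with the paper's proof.
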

\begin{proof}
    It follows from~\cite[Lemma~3.5]{DBLP:journals/tods/KoutrisW17}  that if an attack graph is acyclic, then it is transitive.
	By Lemma~\ref{lem:rootsLinkedByRoots}, we can assume a sequence of unattacked atoms $\tuple{H_1, \ldots, H_n}$ such that $H_{1}=R$, $H_{n}=S$, and every two adjacent atoms in the sequence attack a same atom. 
	From Lemma~\ref{lem:sameEndpoint}, it follows that for every $i\in\{1,\ldots,n-1\}$, $\fdset{q}\models \fd{\key{H_{i}}}{\key{H_{i+1}}}$. By Armstrong's transitivity axiom, $\fdset{q}\models\fd{\key{R}}{\key{S}}$.
\end{proof}

\begin{lemma}
	\label{lem:notAttackedKeyDetermineComponent}
	Let $q$ be a self-join free conjunctive query with an acyclic attack graph containing no strong attacks. 
	For every unattacked atom $R$ in $q$'s attack graph, if $w$ is a variable that occurs in an atom that is weakly connected to $R$,
	then  $\fdset{q} \models \fd{\key{R}}{w}$.
\end{lemma}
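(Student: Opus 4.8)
The plan is to proceed by induction on the distance (in the query graph, or in the weak component of the attack graph) from the witnessing atom of $w$ to the unattacked atom $R$. First I would fix an unattacked atom $R$ and a variable $w$ occurring in some atom $G$ that is weakly connected to $R$ in the attack graph. The key structural fact I want to exploit is Lemma~\ref{lem:rootKeysDetermineOtherRootsKeys} (``Sources determine one another''): it already gives $\fdset{q}\models\fd{\key{R}}{\key{S}}$ for every unattacked atom $S$ in the same weak component. So the real content is to lift this from the keys of \emph{unattacked} atoms to \emph{all} variables of \emph{all} atoms in the component.

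The core step is to show that if $G$ is any atom in the weak component of $R$, then $\fdset{q}\models\fd{\key{R}}{\key{G}}$, from which $\fdset{q}\models\fd{\key{R}}{\vars{G}}$ follows immediately via the FD $\fd{\key{G}}{\vars{G}}$ in $\fdset{q}$. To prove $\fdset{q}\models\fd{\key{R}}{\key{G}}$ I would argue as follows. If $G$ is unattacked we are done by Lemma~\ref{lem:rootKeysDetermineOtherRootsKeys}. Otherwise $G$ is attacked, so there is an atom $F$ with $\attacks{F}{G}{q}$; since the attack graph is transitive (\cite[Lemma~3.5]{DBLP:journals/tods/KoutrisW17}), we may take $F$ to be a source (an unattacked atom) that attacks $G$, and $F$ lies in the same weak component as $R$. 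The hypothesis that the attack graph has \emph{no strong attacks} means exactly that $\fdset{q}\models\fd{\key{F}}{\key{G}}$. Combining with $\fdset{q}\models\fd{\key{R}}{\key{F}}$ from Lemma~\ref{lem:rootKeysDetermineOtherRootsKeys} and transitivity of FDs, we get $\fdset{q}\models\fd{\key{R}}{\key{G}}$, hence $\fdset{q}\models\fd{\key{R}}{\vars{G}}$. Since $w$ occurs in such a $G$, this yields $\fdset{q}\models\fd{\key{R}}{w}$.

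The step I expect to require the most care is the reduction ``$G$ attacked $\Rightarrow$ some \emph{source} attacks $G$ within the same weak component.'' This needs transitivity of the acyclic attack graph plus the observation that a source ancestor of an attacker of $G$ still attacks $G$ (again by transitivity) and is weakly connected to $R$ (being weakly connected to $G$, which is weakly connected to $R$). One subtlety: an unattacked atom $F$ attacking $G$ is automatically weakly connected to $G$, hence to $R$, so Lemma~\ref{lem:rootKeysDetermineOtherRootsKeys} applies to the pair $(R,F)$. I would also double-check the degenerate cases: $w\in\free{q}$ (then $\fdset{q}\models\fd{\emptyset}{w}$ directly from the definition of $\fdset{q}$, so nothing to prove), and $w\in\key{G}$ itself for an attacked $G$ (covered, since we derive $\fd{\key{R}}{\key{G}}$, not merely $\fd{\key{R}}{\notkey{G}}$). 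Modulo these checks, the argument is a short chain of FD implications resting on Lemma~\ref{lem:rootKeysDetermineOtherRootsKeys}, transitivity of the attack graph, and the definition of a weak (i.e.\ non-strong) attack.
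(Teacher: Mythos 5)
Your argument is correct and follows essentially the same route as the paper's proof: pass from the atom containing $w$ to an unattacked atom attacking it (using transitivity of the acyclic attack graph), use the absence of strong attacks to get $\fd{\key{F}}{\key{G}}$, and chain with Lemma~\ref{lem:rootKeysDetermineOtherRootsKeys} via Armstrong transitivity. The ``induction on distance'' framing at the start is unnecessary---your actual argument, like the paper's, is a direct two-step FD chain---but this is a stylistic quibble, not a gap.
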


\begin{proof}
Let $R$ be an unattacked atom in $\body{q}$.
    Let $S$ be a atom in $\body{q}$ that is weakly connected to $R$, and $w \in \vars{S}$.
	Clearly, $\fdset{q} \models \fd{\key{S}}{w}$.
	It follows from~\cite[Lemma~3.5]{DBLP:journals/tods/KoutrisW17}  that if an attack graph is acyclic, then it is transitive.
	Since $q$'s attack graph is acyclic, we can assume an unattacked atom $S'$ such that either $S=S'$ or $\attacks{S'}{S}{q}$.
	Since the attack graph of $q$ contains no strong attacks, it follows $\fdset{q} \models \fd{\key{S'}}{\key{S}}$.
	By Lemma~\ref{lem:rootKeysDetermineOtherRootsKeys}, $\fdset{q} \models \fd{\key{R}}{\key{S'}}$. 
	By Armstrong's transitivity axiom, we obtain $\fdset{q} \models \fd{\key{R}}{w}$.
\end{proof}

\section{Proofs for Section \ref{sec:newclass}}

We present a number of helping lemmas.

\begin{lemma}\label{lem:xMinimal}
Let $q(\vec{z})$ be self-join-free conjunctive query.
Let $\vec{x}$ be a minimal id-set for~$q(\vec{z})$.
Then, for every variable $v$ in $\vec{x}$, we have $\fdset{q}\not\models\fd{\emptyset}{v}$.
Consequently, a minimal id-set contains no frozen variables.
\end{lemma}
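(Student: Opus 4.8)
The plan is to prove the first assertion by contradiction against the minimality of the id-set, and then obtain the ``consequently'' clause as a one-line corollary. So I would suppose that $\vec{x}$ is a minimal id-set for $q(\vec{z})$, yet some variable $v$ occurring in $\vec{x}$ satisfies $\fdset{q}\models\fd{\emptyset}{v}$. Let $\vec{x}'$ be the sequence obtained from $\vec{x}$ by deleting~$v$. By the definition of minimality it suffices to show that $\vec{x}'$ is still an id-set for $q(\vec{z})$, i.e., that $\vec{x}'$ satisfies conditions~\ref{it:source} and~\ref{it:separation} of Definition~\ref{def:newclass}; this contradicts minimality and hence refutes the existence of such a~$v$.

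Condition~\ref{it:separation} transfers for free: since $\vars{\vec{x}'}\subseteq\vars{\vec{x}}$, every path in the query graph of $q(\vec{z})$ between a variable of $\notkey{R}$ and a variable of $\vec{x}'$ is also a path between a variable of $\notkey{R}$ and a variable of $\vec{x}$, hence already uses a variable of $\key{R}\cup\frozen{q}$ because $\vec{x}$ satisfies~\ref{it:separation}. For condition~\ref{it:source}, I would fix a component of $q(\vec{z})$'s attack graph and let $R$ be the unattacked atom it contains with $\fdset{q}\models\fd{\vec{x}}{\key{R}}$, which exists because $\vec{x}$ satisfies~\ref{it:source}. From $\fdset{q}\models\fd{\emptyset}{v}$ together with reflexivity one gets $\fdset{q}\models\fd{\vec{x}'}{\vec{x}}$ (recall $\vars{\vec{x}}=\vars{\vec{x}'}\cup\{v\}$), and then transitivity yields $\fdset{q}\models\fd{\vec{x}'}{\key{R}}$. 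As the component was arbitrary, $\vec{x}'$ satisfies~\ref{it:source}, so $\vec{x}'$ is an id-set, contradicting minimality.

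For the ``consequently'' clause: by Definition~\ref{def:frozen}, every frozen variable $y$ of $q(\vec{z})$ admits a sequential proof of $\fdset{q}\models\fd{\emptyset}{y}$, and in particular $\fdset{q}\models\fd{\emptyset}{y}$; so by the first part, $y$ cannot occur in a minimal id-set. I expect the whole argument to be essentially routine. The only point requiring a moment of attention is the asymmetry between the two conditions: condition~\ref{it:separation} is monotone under deletion of variables from~$\vec{x}$ and therefore transfers automatically, whereas condition~\ref{it:source} is the one place where the hypothesis $\fdset{q}\models\fd{\emptyset}{v}$ genuinely enters, through the short Armstrong-axioms computation above.
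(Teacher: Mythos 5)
Your proposal is correct and follows essentially the same route as the paper, whose proof simply asserts that $\vec{x}\setminus\{v\}$ ``still satisfies conditions~\ref{it:source} and~\ref{it:separation} in Definition~\ref{def:newclass}''; you have merely spelled out the two easy verifications (monotonicity of the path condition under deleting variables from $\vec{x}$, and the Armstrong-axiom step $\fdset{q}\models\fd{\vec{x}'}{\vec{x}}$ giving $\fdset{q}\models\fd{\vec{x}'}{\key{R}}$), together with the observation that frozen variables satisfy $\fdset{q}\models\fd{\emptyset}{y}$ by soundness of sequential proofs. Nothing further is needed.
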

\begin{proof}
Assume for the sake of contradiction that there is a variable $v$ in $\vec{x}$ such that $\fdset{q}\models\fd{\emptyset}{v}$.
It can be easily verified that $\vec{x}\setminus\{v\}$ still satisfies conditions~\ref{it:source} and~\ref{it:separation} in Definition~\ref{def:newclass}, contradicting that $\vec{x}$ is minimal.
\end{proof}

\begin{lemma}\label{lem:witnessNotFrozen}
Let $q(\vec{z})$ be a self-join-free conjunctive query.
If a variable $v$ of~$q$ is attacked in~$q$, then $v\notin\frozen{q}$.
\end{lemma}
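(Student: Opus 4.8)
The plan is to prove the contrapositive, in a slightly strengthened form: if $v\in\frozen{q}$, then $v$ is unattacked in $q$, i.e.\ $\notattacks{G}{v}{q}$ for \emph{every} atom $G$ of $q$. So fix a variable $v\in\frozen{q}$. By Definition~\ref{def:frozen} there is a sequential proof $P=\tuple{F_{1},\dots,F_{n}}$ of $\fdset{q}\models\fd{\emptyset}{v}$ such that $\notattacks{F_{i}}{v}{q}$ for every $i\in\{1,\dots,n\}$. Since $\frozen{q}$ contains only bound variables, $v\notin\free{q}$, so the defining property of a sequential proof forces $v\in\bigcup_{j=1}^{n}\vars{F_{j}}$.

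Next I would fix an arbitrary atom $G$ of $q$ and split into two cases. If $G$ occurs in $P$, then $\notattacks{G}{v}{q}$ holds immediately by the choice of $P$. Otherwise $G\notin\{F_{1},\dots,F_{n}\}$, so every $F_{i}$ belongs to $\body{q\setminus\{G\}}$, and I claim that $P$ is (essentially) a sequential proof of $\fdset{q\setminus\{G\}}\models\fd{\key{G}}{v}$. To see this, run the closure computation behind Lemma~\ref{lem:soundcomplete}: start from $\key{G}$ together with the free variables of $q\setminus\{G\}$ and apply the dependencies $\fd{\key{F_{i}}}{\vars{F_{i}}}$ for $i=1,\dots,n$ in order; at step $i$ the sequential-proof condition $\key{F_{i}}\subseteq\free{q}\cup\bigcup_{j<i}\vars{F_{j}}$ together with $\key{F_{i}}\subseteq\vars{F_{i}}\subseteq\vars{q\setminus\{G\}}$ shows that $\key{F_{i}}$ is already in the current closure, so $\vars{F_{i}}$ gets added, and after step $n$ the closure contains $\bigcup_{j}\vars{F_{j}}$, which includes $v$. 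Hence $\fdset{q\setminus\{G\}}\models\fd{\key{G}}{v}$, i.e.\ $v\in\keycl{G}{q}$. But $\attacks{G}{v}{q}$ would require a witness, namely a sequence of bound variables ending at $v$ and lying entirely outside $\keycl{G}{q}$ --- which is impossible once $v\in\keycl{G}{q}$. So $\notattacks{G}{v}{q}$ in this case as well, and since $G$ was arbitrary, $v$ is unattacked.

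The one point that needs care, and which I expect to be the main (though minor) obstacle, is the bookkeeping in the transition from $q$ to $q\setminus\{G\}$: the sequential-proof conditions for $q$ mention $\free{q}$, those for $q\setminus\{G\}$ mention $\free{q\setminus\{G\}}$. This is harmless because every variable occurring in some $F_{i}$ that is free in $q$ still occurs in an atom of $q\setminus\{G\}$ and is therefore free in $q\setminus\{G\}$, while the target variable $v$, being bound, enters through $\bigcup_{j}\vars{F_{j}}$ rather than through a free-variable clause. Apart from this routine verification, the whole argument reduces to the observation that a sequential proof certifying that $v$ is frozen must avoid every atom attacking $v$, and already avoiding an atom $G$ puts $v$ into $\keycl{G}{q}$, which is exactly what forbids $G$ from attacking $v$.
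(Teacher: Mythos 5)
Your proposal is correct and follows essentially the same route as the paper: the paper assumes $\attacks{R}{v}{q}$ with $v$ frozen, notes that $R$ cannot occur in the chosen sequential proof, concludes $\fdset{q\setminus\{R\}}\models\fd{\emptyset}{v}$ and hence $v\in\keycl{R}{q}$, contradicting the attack --- which is exactly your closure argument in contrapositive form, with your case distinction (attacker in or out of the proof) and free-variable bookkeeping being routine elaborations of the same idea.
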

\begin{proof}
Assume $\attacks{R}{v}{q}$ for some atom $R$.
Assume for the sake of contradiction that $v\in\frozen{q}$.
Then there is a sequential proof of $\fdset{q}\models\fd{\emptyset}{v}$ such that no atom in the sequential proof attacks $v$.
Since $\attacks{R}{v}{q}$, we have that $R$ is not used in the sequential proof.
Consequently, $\fdset{q\setminus\{R\}}\models\fd{\emptyset}{v}$, and hence $v\in\keycl{R}{q}$.
Consequently, $\notattacks{R}{v}{q}$, a contradiction.
\end{proof}

\begin{lemma}\label{lem:xIsKeyofNA}
Let $q(\vec{z})$ be a self-join-free conjunctive query with an acyclic attack graph.
Let $\vec{x}$ be a minimal id-set for $q(\vec{z})$.
For every variable $x$ of $\vec{x}$, the following hold:
\begin{enumerate}[label=(\alph*)]
    \item\label{it:idsetpk} if $x$ occurs in an atom $R$ of $q(\vec{z})$, then $x$ occurs at a primary-key position of~$R$; 
    \item\label{it:idsetunattacked} $x$ is unattacked; and
    \item\label{it:idsetsource} $x$ occurs at a primary-key position in an unattacked atom of $q$.
\end{enumerate}
\end{lemma}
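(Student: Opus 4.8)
The plan is to prove the three items in the order~\ref{it:idsetpk}, then~\ref{it:idsetunattacked}, then~\ref{it:idsetsource}, since each builds on the previous one. Throughout, fix a variable $x$ of the minimal id-set~$\vec{x}$. By Lemma~\ref{lem:xMinimal}, $\fdset{q}\not\models\fd{\emptyset}{x}$, so in particular $x\notin\free{q}$ and $x\notin\frozen{q}$; I will use this repeatedly.

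For item~\ref{it:idsetpk}, suppose toward a contradiction that $x$ occurs in some atom $R$ but only at non-primary-key positions, so $x\in\notkey{R}$. Then the empty path is a path in the query graph between a variable of $\notkey{R}$ (namely $x$ itself) and the variable $x$ of $\vec{x}$; by condition~\ref{it:separation} in Definition~\ref{def:newclass}, this path must use a variable of $\key{R}\cup\frozen{q}$. The empty path uses only $x$, so $x\in\key{R}\cup\frozen{q}$. But $x\notin\key{R}$ (we assumed $x\in\notkey{R}$, which by the definition of $\notkey{\cdot}$ excludes $\key{R}$), and $x\notin\frozen{q}$ by the remark above. This contradiction establishes~\ref{it:idsetpk}.

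For item~\ref{it:idsetunattacked}, suppose $\attacks{F}{x}{q}$ for some atom $F$. Take a witness $\tuple{x_1,\ldots,x_n}$ with $x_1\in\notkey{F}$, $x_n=x$, and no $x_\ell$ in $\keycl{F}{q}$. Since the $x_\ell$ are bound variables and adjacent ones co-occur in an atom of $q$, this is also a path in the query graph; note $n\geq 1$ and $x_1\in\notkey{F}$. If $n=1$, then $x=x_1\in\notkey{F}$, contradicting~\ref{it:idsetpk}. If $n\geq 2$, consider the atom $F$: the sequence $\tuple{x_1,\ldots,x_n}$ is a path in the query graph between $x_1\in\notkey{F}$ and $x_n=x\in\vec{x}$, so by condition~\ref{it:separation} some $x_\ell$ lies in $\key{F}\cup\frozen{q}$. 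But $x_\ell\notin\key{F}$, since $\key{F}\subseteq\keycl{F}{q}$ and no witness variable is in $\keycl{F}{q}$; and $x_\ell\notin\frozen{q}$ by Lemma~\ref{lem:witnessNotFrozen}, because $\attacks{F}{x_\ell}{q}$ holds (every prefix of a witness witnesses an attack on its last variable). This contradiction gives~\ref{it:idsetunattacked}.

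For item~\ref{it:idsetsource}, by condition~\ref{it:source} in Definition~\ref{def:newclass} the component of $x$'s atom—more carefully, pick any atom $S$ in which $x$ occurs, which by~\ref{it:idsetpk} has $x\in\key{S}$—lies in some weakly connected component of the attack graph, and that component contains an unattacked atom $R$ with $\fdset{q}\models\fd{\vec{x}}{\key{R}}$. I then want an unattacked atom whose key contains $x$. Here I would invoke Lemma~\ref{lem:notAttackedKeyDetermineComponent} or Lemma~\ref{lem:rootKeysDetermineOtherRootsKeys}: since $x$ is unattacked by~\ref{it:idsetunattacked}, the atom $S$ is weakly connected (through $x$) to an unattacked atom $R'$ with $\fdset{q}\models\fd{\key{R'}}{\key{S}}$, hence $\fdset{q}\models\fd{\key{R'}}{x}$. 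If $x\in\key{R'}$ we are done; otherwise I use minimality of $\vec{x}$, together with Proposition~\ref{pro:defineX}\ref{it:defineXone} which identifies $\vars{\vec{x}}$ with the set $V$ built from bound variables of $\key{R}\setminus N$ over unattacked atoms $R$, to locate an unattacked atom $R$ with $x\in\key{R}$ and $x\notin\notkey{R}$ for any atom. The main obstacle I anticipate is item~\ref{it:idsetsource}: extracting an unattacked atom whose primary key literally contains $x$ requires tying $\vec{x}$ to the concrete construction of Proposition~\ref{pro:defineX}, and one must be careful that a variable derivable from an unattacked atom's key need not itself appear in that key—so the argument genuinely needs the minimality of $\vec{x}$ (a non-minimal id-set could contain, e.g., a variable lying only at non-key positions of every atom it touches, which~\ref{it:idsetpk} already forbids, but the subtlety resurfaces when arguing it sits in a \emph{source} atom's key).
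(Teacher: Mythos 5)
Your arguments for items~\ref{it:idsetpk} and~\ref{it:idsetunattacked} are correct and essentially identical to the paper's: the empty path, respectively the attack witness, violates condition~\ref{it:separation} of Definition~\ref{def:newclass} once you rule out frozen variables via Lemma~\ref{lem:xMinimal} and Lemma~\ref{lem:witnessNotFrozen}.

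Item~\ref{it:idsetsource}, however, has a genuine gap, and you have correctly sensed where it is. Your plan leans on two things that are not available. First, Proposition~\ref{pro:defineX}\ref{it:defineXone} cannot be invoked: in the paper it is proved \emph{using} Lemma~\ref{lem:xIsKeyofNA} (both items~\ref{it:idsetpk} and~\ref{it:idsetsource} are cited in its proof), so your route is circular. Second, Lemma~\ref{lem:notAttackedKeyDetermineComponent} assumes an attack graph with \emph{no strong attacks}, a hypothesis that Lemma~\ref{lem:xIsKeyofNA} does not grant (it only assumes acyclicity and a minimal id-set, and an id-set is defined by condition~\ref{it:newclasstwo} alone); and even where such FD reasoning applies, $\fdset{q}\models\fd{\key{R'}}{x}$ never yields the needed \emph{membership} $x\in\key{R'}$, as you yourself note. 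The missing idea is a direct climbing argument in the attack graph that reuses condition~\ref{it:separation}: by item~\ref{it:idsetpk} pick any atom $R$ with $x\in\key{R}$; if $R$ is attacked by some $S$, take a witness $\tuple{v_1,\ldots,v_n}$ of $\attacks{S}{R}{q}$ with $v_1\in\notkey{S}$ and $v_n\in\key{R}$. No $v_i$ lies in $\key{S}$ (they avoid $\keycl{S}{q}$) or is frozen (Lemma~\ref{lem:witnessNotFrozen}), and $x$ is not frozen (Lemma~\ref{lem:xMinimal}); so if $x\notin\key{S}$, then $\tuple{v_1,\ldots,v_n,x}$ (the last step inside $R$) is a path from $\notkey{S}$ to a variable of $\vec{x}$ avoiding $\key{S}\cup\frozen{q}$, contradicting condition~\ref{it:separation}. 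Hence $x\in\key{S}$, and you repeat with $S$ in place of $R$; since the attack graph is finite and acyclic, this terminates at an unattacked atom whose primary key contains~$x$. Note that condition~\ref{it:source}, which your sketch starts from, plays no role in this item.
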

\begin{proof}
\framebox{Proof of~\ref{it:idsetpk}}
Assume for the sake of contradiction that for some atom $R$ in $q(\vec{z})$,
$\vec{x}$ contains some variable $x$ of $\notkey{R}\setminus\vec{z}$.
Since $x$ is a bound variable, the query graph of $q(\vec{z})$ contains an empty path from~$x$ to~$x$,
which is a path from a variable in $\notkey{R}$ to a variable of $\vec{x}$ that uses no variable of $\key{R}$.
From Lemma~\ref{lem:xMinimal}, it follows that $x$ is not frozen.
It follows that condition~\ref{it:separation} in Definition~\ref{def:newclass} is violated, a contradiction.

\framebox{Proof of~\ref{it:idsetunattacked}}
Assume for the sake of contradiction that some variable $x$ of $\vec{x}$ is attacked.
There is an atom $F$ in $q$ that attacks~$x$, with a witness $\tuple{x_{1}, \ldots, x_{n}}$ of the form~\eqref{eq:witness} in the definition of attack graphs, where $x_{n}=x$.
It is easily verified that the witness is a path from $x_{1}$ to $x$ in the query graph of~$q(\vec{z})$.
Since $\key{F}\subseteq\keycl{F}{q}$, no $x_{i}$ in this sequence occurs in $\key{F}$.
By Lemma~\ref{lem:witnessNotFrozen}, no $x_{i}$ is frozen.
It follows that condition~\ref{it:separation} in Definition~\ref{def:newclass} is violated, a contradiction.

\framebox{{Proof of~\ref{it:idsetsource}}}
By item~\ref{it:idsetpk}, we can assume an atom $R$ such that $x\in\key{R}$.
If $R$ is unattacked, then the desired result obtains. Assume from here on that $R$ is attacked by an atom $S$ in $\body{q}$.
By definition of an attack graph, there is a witness $\tuple{v_{1}, \ldots, v_{n}}$ of this attack such that $v_{1} \in\notkey{S}$, $v_{n}\in\key{R}$, and for every $i \in \{1, \ldots, n\}$, $v_{i} \not \in \keycl{S}{q}$ and therefore $v_{i}\not\in\key{S}$. 
By Lemma~\ref{lem:witnessNotFrozen}, no $v_{i}$ is frozen.
From Lemma~\ref{lem:xMinimal}, it follows that $x$ is not frozen.
If $x\notin\key{S}$, then $\tuple{v_{1}, \ldots, v_{n}, x}$ is a (not necessarily simple) path in the query graph from a variable in $\notkey{S}$ to a variable of $\vec{x}$ that uses no variable of $\key{S}$, contradicting condition~\ref{it:separation} in Definition~\ref{def:newclass}.
We conclude by contradiction that $x\in\key{S}$.
If $S$ is unattacked, then the desired result obtains. 
Otherwise there is an atom $T$ that attacks~$S$, and we can repeat the same reasoning as before. 
The same reasoning can however not be applied forever, since the attack graph of $q$ is finite and contains no cycles. 
Therefore, at some point we will find an unattacked atom whose primary key contains~$x$.
\end{proof}

We can now give the proof of Proposition~\ref{pro:defineX}.

\begin{proof}[Proof of Proposition~\ref{pro:defineX}]
\framebox{$V\subseteq\vars{\vec{x}}$.}
Let $v\in V$.
By the construction of $V$, we can assume an unattacked atom~$R$ such that $v$ is a bound variable in $\key{R}\setminus N$.
By condition~\ref{it:source} in Definition~\ref{def:newclass}, there is an unattacked atom~$S$ such that $S$ is weakly connected to~$R$ and $\fdset{q}\models\fd{\vec{x}}{\key{S}}$.
By Lemma~\ref{lem:notAttackedKeyDetermineComponent}, $\fdset{q}\models\fd{\key{S}}{v}$.
From $\fdset{q}\models\fd{\vec{x}}{\key{S}}$ and $\fdset{q}\models\fd{\key{S}}{v}$, it follows $\fdset{q}\models\fd{\vec{x}}{v}$.
Since $v\not\in N$, $\fdset{q}\models\fd{\vec{x}}{v}$ implies that $v$ belongs to~$\vec{x}$.

\framebox{$\vars{\vec{x}}\subseteq V$.}
Let $x$ be an arbitrary variable in $\vec{x}$.
By item~\ref{it:idsetsource} in Lemma~\ref{lem:xIsKeyofNA}, we can assume an unattacked atom~$R$ such that $x\in\key{R}$.
By item~\ref{it:idsetpk} in Lemma~\ref{lem:xIsKeyofNA}, $x\not\in N$.
Then, by construction of $V$, we have $x\in V$.

\framebox{Proof of \ref{it:defineXtwo}.}
Assume $R,S$ are unattacked atoms that are weakly connected in $q(\vec{z})$'s attack graph.
We show $\key{R}\cap\vec{x}\subseteq\key{S}$.
The desired result then follows by symmetry.
By Lemma~\ref{lem:rootsLinkedByRoots},
there is a sequence of weakly connected, unattacked atoms $\tuple{H_{1},H_{2},\ldots,H_{n}}$ such that $H_{1}=R$, $H_{n}=S$, and every two adjacent atoms have a descendant in common.
We will show that $\key{H_{1}}\cap\vec{x}\subseteq\key{H_{2}}$.
We can assume that $T$ is a descendant shared by $H_{1}$ and $H_{2}$.
Since the attack graph of $q$ is acyclic (because $q\in\cnewclass$), it is transitive by~\cite[Lemma~3.5]{DBLP:journals/tods/KoutrisW17}.
Assume $x \in\key{H_{1}}\cap{\vec{x}}$.
Let $\tuple{x_{1}, \ldots, x_{n}}$ be a witness that $\attacks{H_{1}}{T}{q}$. 
It follows that $H_{1}$ attacks $x_{i}$ for every $i \in \{1, \ldots, n\}$.
Let $\tuple{y_{1}, \ldots, y_m}$ be a witness that $\attacks{H_{2}}{T}{q}$.
We have $y_{1},\ldots,y_{m}\notin\key{H_{2}}$.
By Lemma~\ref{lem:witnessNotFrozen}, no variable in $\{x_{1},\dots,x_{n}, y_{1},\dots,y_{m}\}$ is frozen.
Then $\tuple{x,x_{1}, \ldots, x_{n}, y_m, \ldots, y_{1}}$ is a path in the query graph of $q$ between $x$ and $y_{1}$.
For every $i\in\{1,\dots,n\}$, $x_{i}\notin\key{H_{2}}$, or else $\attacks{H_{1}}{H_{2}}{q}$, a contradiction.
Assume towards a contradiction that $x\notin\key{H_{2}}$.
The reverse path is a path from the variable~$y_{1}\in\notkey{H_{2}}$ to a variable of $\vec{x}$ that uses no variable of $\key{H_{2}}\cup\frozen{q}$.
By condition~\ref{it:separation} in Definition~\ref{def:newclass}, $\vec{x}$ is not an id-set, a contradiction.
We conclude by contradiction that $x\in\key{H_{2}}$.

By repeating the same reasoning, we obtain $\key{H_{2}}\cap\vec{x}\subseteq\key{H_{3}}$,  $\key{H_{3}}\cap\vec{x}\subseteq\key{H_{4}}$,\dots\ It follows  $\key{H_{1}}\cap\vec{x}\subseteq\key{H_{n}}$.
\end{proof}

Finally, we give the proof of Proposition~\ref{pro:testparsimony}.

\begin{proof}[Proof of Proposition~\ref{pro:testparsimony}]
Let $q(\vec{z})$ be a self-join-free conjunctive query.
It is possible, in quadratic time in the size of~$q$, to compute the attack graph of 
 $q(\vec{z})$~\cite{DBLP:journals/tods/Wijsen12}, test whether condition~\ref{it:noCyclicOrStrongAttacks} in Definition~\ref{def:newclass} is satisfied, and construct~$V$ as in the statement of Proposition~\ref{pro:defineX}.
 By adding to $q(\vec{z})$, for every variable $x$, a fresh atom~$P_{x}(\underline{x})$, the attack graph allows determining the set $n_{x}$ of all atoms not attacking~$x$.  
 Membership of $x$ in $\frozen{q}$ is determined by testing whether $\fdset{n_{x}}\models\fd{\emptyset}{x}$, which is in linear time in the size of $n_{x}$. 
If condition~\ref{it:noCyclicOrStrongAttacks} is not satisfied, answer ``no''; otherwise proceed as follows.
For every atom~$R$ in $\body{q(\vec{z})}$, test whether, in $q(\vec{z})$'s query graph, some variable of $V$ is reachable from some variable in $\notkey{R}$ without using variables from $\key{R}\cup\frozen{q}$.
This can be done in quadratic time.
If any such reachability test succeeds, then it is correct to answer ``no'', because $q(\vec{z})$ has no id-set; otherwise answer~``yes''.
The correctness follows from Proposition~\ref{pro:defineX}: if $q(\vec{z})$ has an id-set, then $V$ is an id-set.  
Therefore, if $V$ falsifies condition~\ref{it:separation} in Definition~\ref{def:newclass}, then $V$ is not an id-set, and hence $q(\vec{z})$ has no id-set.
Finally, we note that, by construction, $V$ will satisfy condition~\ref{it:source} in Definition~\ref{def:newclass}.
\end{proof}

\section{Proofs for Section \ref{sec:soundness}}


\begin{proof}[Proof of Lemma~\ref{lem:onlyOneVal}]
Assume $\db \models q(\vec{a}, \vec{b}, \vec{c}_{1})$ and $\db \models q(\vec{a}, \vec{b}, \vec{c}_{2})$.

Let $M$ be a component in the attack graph of $q$. 
By Definition~\ref{def:newclass}, there is an unattacked atom $R$ in $M$ such that $\fdset{q}\models\fd{\vec{x}}{\key{R}}$. 
By Lemma~\ref{lem:notAttackedKeyDetermineComponent}, for every variable $v$ occurring in $M$, $\fdset{q}\models\fd{\key{R}}{v}$.
It follows that for every variable $v$ occurring in $M$, $\fdset{q} \models \fd{\vec{z}\cdot\vec{x}}{v}$.
Since $\vec{z}$ and $\vec{x}$ do not depend on~$M$, it follows  $\fdset{q} \models \fd{\vec{z}\cdot\vec{x}}{\vec{w}}$.
Since $\db$ is a consistent database instance, from $\fdset{q} \models \fd{\vec{z}\cdot\vec{x}}{\vec{w}}$, $\db \models q(\vec{a}, \vec{b}, \vec{c}_{1})$, and $\db \models q(\vec{a}, \vec{b}, \vec{c}_{2})$, it follows that $\vec{c}_{1} = \vec{c}_{2}$ by Lemma~4.3 in~\cite{DBLP:journals/tods/Wijsen12}.
\end{proof}


\begin{lemma}\label{lem:valsForOpti}
Let $q(\vec{z})$ be self-join free conjunctive query in $\cnewclass$, and let $\vec{x}$ be an id-set for it.
Let $q'(\vec{z}, \vec{x})$ be the query $\makefree{q}{\vec{x}}$.
Let $\db$ be a database instance, and $\vec{c}$ a tuple of constants, of size $\arity{\vec{z}}$, such that $\db\cqamodels q(\vec{c})$.
Let $\tuple{\vec{d}_{1}, \ldots, \vec{d}_{n}}$ be a sequence that contains all (and only) tuples of constants $\vec{d}_{i}$, of size $\arity{\vec{x}}$, such that $\db \models q'(\vec{c}, \vec{d}_{i})$.
There is a sequence $\tuple{\theta_{1},\ldots,\theta_{n}}$ of valuations over $\vars{q}$ such that for every $i,j\in\{1,\dots,n\}$, for every atom $R \in \body{q}$:
\begin{enumerate}[label=(\alph*)]
    \item\label{it:valOptiCond1} $\theta_{i}(R) \in \db$; 
    \item\label{it:valOptiCond2} $\theta_{i}(\vec{z}) = \vec{c}$ and $\theta_{i}(\vec{x}) = \vec{d}_{i}$; and
    \item\label{it:valOptiCond3} 
    if $\theta_{i}(R)$ and $\theta_{j}(R)$ are key-equal, then they are equal.
\end{enumerate}
\end{lemma}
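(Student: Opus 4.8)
The plan is to build the required valuations directly, by ensuring that for every atom $R$ of $q$ the fact assigned to $R$ depends only on the key-value that $R$ receives. Concretely, I will produce, for each $i$, a valuation $\theta_i$ witnessing $\db\models q'(\vec c,\vec d_i)$ (so $\theta_i(R)\in\db$ for all $R\in\body q$, $\theta_i(\vec z)=\vec c$, $\theta_i(\vec x)=\vec d_i$), with the extra guarantee that $\theta_i(\key R)=\theta_j(\key R)$ implies $\theta_i(R)=\theta_j(R)$ for all $i,j$ and all atoms $R$. Then items~\ref{it:valOptiCond1} and~\ref{it:valOptiCond2} are immediate, and item~\ref{it:valOptiCond3} holds because, $R$ being a fixed atom, ``$\theta_i(R)$ and $\theta_j(R)$ key-equal'' means exactly $\theta_i(\key R)=\theta_j(\key R)$. (Within a single index there is nothing to check: $\theta_i(\body q)$ is automatically consistent because $q$ is self-join-free, so the only possible conflicts are across indices for the same atom.) Equivalently, this amounts to exhibiting a consistent subinstance $I\subseteq\db$ on which $q'(\vec c,\cdot)$ has exactly the same answers as on $\db$, and letting $\theta_i$ be the unique witness of $I\models q'(\vec c,\vec d_i)$ furnished by Lemma~\ref{lem:onlyOneVal}.

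First I would extract the structural facts implied by $q(\vec z)\in\cnewclass$ with id-set $\vec x$. Using condition~\ref{it:source} of Definition~\ref{def:newclass} together with Lemmas~\ref{lem:rootKeysDetermineOtherRootsKeys} and~\ref{lem:notAttackedKeyDetermineComponent} (applicable since the attack graph is acyclic and strong-attack-free), one obtains $\fdset q\models\fd{\vec x}{\key R}$ for \emph{every} atom $R$ of $q$, hence $\fdset q\models\fd{\vec z\cdot\vec x}{\vars q}$. By Lemma~\ref{lem:soundcomplete} this implication has a sequential proof, which I would extend to an ordering $\tuple{F_1,\dots,F_t}$ of \emph{all} atoms of $q$ with $\key{F_k}\subseteq\free q\cup\vec x\cup\bigcup_{\ell<k}\vars{F_\ell}$ for each $k$; so the key-value of $F_k$ is already determined by the time $F_k$ is processed. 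I would also show that a frozen variable $v$ receives a $\vec c$-forced value in every witness: its defining sequential proof of $\fdset q\models\fd\emptyset v$ avoids atoms that attack $v$, and together with $\db\cqamodels q(\vec c)$ this forces the blocks that proof traverses to be singletons (the phenomenon already noted at the end of Example~\ref{ex:almostpc}). Hence $\key{F_k}\cup\frozen q$ behaves as a rigid boundary at step $k$.

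The heart of the argument is an induction on $k$ that processes $F_1,\dots,F_t$ in order. The invariant: there is a partial choice function $\mathsf{pick}$ defined on $F_1,\dots,F_k$, and for each $i$ a nonempty set $\Theta_i$ of witnesses of $\db\models q'(\vec c,\vec d_i)$ consistent with $\mathsf{pick}$ (that is, $\mu(F_\ell)=\mathsf{pick}(F_\ell,\mu(\key{F_\ell}))$ for $\ell\le k$). To pass from $k-1$ to $k$ I must, for every key-value $\vec k$ realized as $\mu(\key{F_k})$ by some surviving witness, commit to one fact $g$ of $\db$ of relation $F_k$ and key $\vec k$ and shrink the $\Theta_i$'s accordingly while keeping each nonempty. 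The main obstacle is precisely the existence of this $g$: if $\mu_i\in\Theta_i$ and $\mu_j\in\Theta_j$ (with $\vec d_i$, $\vec d_j$ possibly distinct) satisfy $\mu_i(\key{F_k})=\mu_j(\key{F_k})$, I must be able to redirect $\mu_i$ onto $\mu_j$'s fact for $F_k$ without leaving $\Theta_i$. This is where condition~\ref{it:separation} is used: it guarantees that the variables of $\notkey{F_k}$ lie in a connected region of the query graph that is cut off from $\vec x$ by $\key{F_k}\cup\frozen q$; since the two witnesses already agree on $\vec c$, on $\key{F_k}$, and on all frozen variables, one can transplant $\mu_j$'s values on that whole region into $\mu_i$ and still obtain a valid witness of $\db\models q'(\vec c,\vec d_i)$. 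Iterating the transplant across all affected indices produces the common fact $g$, establishing the step.

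Once $F_t$ has been processed, choosing any $\theta_i\in\Theta_i$ finishes the proof: $\theta_i(R)=\mathsf{pick}(R,\theta_i(\key R))$ for every atom, which gives~\ref{it:valOptiCond3}, while~\ref{it:valOptiCond1} and~\ref{it:valOptiCond2} hold because each $\theta_i$ is a witness of $\db\models q'(\vec c,\vec d_i)$. I expect the genuine difficulty to be twofold: (i) formulating the induction invariant precisely enough that the transplant is always legal --- in particular, checking that the ``region'' separated from $\vec x$ at step $k$ cannot contain a variable that an earlier commitment already pinned to a conflicting value, which should follow from the sequential-proof ordering and acyclicity --- and (ii) the self-contained verification that frozen variables have $\vec c$-forced values. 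Acyclicity and the absence of strong attacks are what make keys propagate only forward and stay under the control of $\vec x$, so that the inductive scheme is well-founded.
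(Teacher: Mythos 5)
Your core argument is the same as the paper's, just organized differently. The paper starts from arbitrary witnesses $\theta_{1},\dots,\theta_{n}$ of $\db\models q'(\vec{c},\vec{d}_{i})$, and whenever $\theta_{i}(R)$ and $\theta_{j}(R)$ are key-equal but distinct it forms the closure $M$ of the atoms on which the two valuations disagree and overwrites $\theta_{j}$ with $\theta_{i}$'s values on the variables of $M$; the legitimacy of that overwrite is proved exactly as in your transplant step, namely by condition~\ref{it:separation} of Definition~\ref{def:newclass} together with the facts that both valuations send $\free{q}$ to $\vec{c}$ and agree on every frozen variable. Your ``region cut off from $\vec{x}$ by $\key{F_k}\cup\frozen{q}$'' is the paper's $M$, and the verification that the modified valuation is still a witness with the same $\vec{x}$-value is identical. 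The differences are bookkeeping: the paper repairs violating pairs directly and gets termination by always modifying the valuation with the larger index, whereas you process atoms along a sequential-proof ordering and maintain per-key commitments. Note that the ordering, and the preliminary facts $\fdset{q}\models\fd{\vec{x}}{\key{R}}$, are never actually used in your inductive step; moreover your worry (i) dissolves without them, because the transplanted valuation coincides with the donor on every atom meeting the region and with the recipient elsewhere, so it inherits all earlier per-atom commitments from two valuations that already satisfy them.

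The one genuine soft spot is your justification of the frozen-variable step. The intermediate claim that $\db\cqamodels q(\vec{c})$ ``forces the blocks that the sequential proof traverses to be singletons'' is false: for $q(z)=\exists x\exists y\, R(\underline{z},x,y)\land G(\underline{z},x)\land G'(\underline{z},x)$ and $\db=\{R(c,a,1),R(c,a,2),G(c,a),G'(c,a)\}$, the variable $x$ is frozen via the proof $\tuple{R}$, we have $\db\cqamodels q(c)$, and the traversed $R$-block is not a singleton. What your construction actually needs is only the conclusion that, when $\db\cqamodels q(\vec{c})$, all witnesses of $q(\vec{c})$ in $\db$ assign the same value to each frozen variable; this is true, but your sketch does not prove it, and the paper itself does not reprove it either—it imports it from (the proof of) Lemma~11 of~\cite{DBLP:journals/mst/KoutrisW21}. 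With that ingredient supplied by citation or by a correct argument, your proof goes through and is essentially the paper's.
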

\begin{proof}
Since $\db\models q'(\vec{c},\vec{d_{i}})$ for every $i\in\{1,\ldots,n\}$, there is a sequence $\tuple{\theta_{1}, \ldots, \theta_{n}}$ of valuations over $\vars{q}$ such that for every~$i\in\{1,\ldots,n\}$ and every atom~$R\in\body{q}$, conditions~\ref{it:valOptiCond1} and \ref{it:valOptiCond2} are satisfied.
Assume that condition~\ref{it:valOptiCond3} is violated. Then there are $i,j\in\{1,\ldots,n\}$ such that for some atom $R\in\body{q}$, $\theta_{i}(R)$ and $\theta_{j}(R)$ are key-equal but distinct. 
Let $M$ be the smallest subset of $\body{q}$ such that:
\begin{itemize}
    \item for every atom $R \in\body{q}$, if $\theta_{i}(R)$ and $\theta_{j}(R)$ are key-equal but distinct, then $R \in M$ (hence $M\neq\emptyset$); and
    \item 
          \emph{Closure Property:} if $M$ contains $S$ and $\theta_{i}(v)\neq\theta_{j}(v)$ for some $v\in\vars{S}$, then $M$ contains every atom of $q$ in which $v$ occurs.
\end{itemize}

We show that $\theta_{i}$ and $\theta_{j}$ agree on every variable of~$\vec{x}$ that occurs in~$M$.
Assume for the sake of contradiction that some variable $x$ of $\vec{x}$ satisfies $\theta_{i}(x)\neq\theta_{j}(x)$ and occurs in some atom~$M$. 
By construction of $M$, there is an atom $R_{0}$ in $\body{q}$, and a sequence of variables $\tuple{v_{1},\ldots,v_{n}}$, with $n\geq 1$, such that:
\begin{itemize}
    \item $v_{n}=x$;
    \item for every $k\in\{1,\ldots,n\}$, $\theta_{i}(v_{k})\neq\theta_{j}(v_{k})$;
    \item $\theta_{i}(R_{0})$ and $\theta_{j}(R_{0})$ are key-equal but distinct, and $v_{1}\in\notkey{R_{0}}$;
    \item every two adjacent variables in the sequence occur together in some atom of~$M$.
\end{itemize}
Note that the sequence may reduce to $\tuple{v_{1}}$ with $v_{1}=x$.
Then $\tuple{v_{1},\ldots,v_{n}}$ is a path in the query graph of~$q$ that uses no variable of $\key{R_0}$.

We show that no variable among $v_{1},\ldots,v_{n}$ is frozen.
Assume for the sake of contradiction that some $v_{k}$ is frozen ($1\leq k\leq n$).
Let $q^{*}(\vec{z},v_{k})\defeq\makefree{q}{v_{k}}$.
Since $\db\cqamodels q(\vec{c})$, it follows from (the proof of) \cite[Lemma~11]{DBLP:journals/mst/KoutrisW21} that for all constants $f_{1},f_{2},$ if $\db\models q^{*}(\vec{c},f_{1})$ and $\db\models q^{*}(\vec{c},f_{2})$, then $f_{1}=f_{2}$.
Since $\db\models q^{*}(\vec{c},\theta_{i}(v_{k}))$ and $\db\models q^{*}(\vec{c},\theta_{j}(v_{k}))$, it follows $\theta_{i}(v_{k})=\theta_{j}(v_{k})$, a contradiction.
We conclude by contradiction that no variable among $v_{1},\ldots,v_{n}$ is frozen.

Then $\tuple{v_{1},\ldots,v_{n}}$ is a path in the query graph of~$q$ between a variable of $\notkey{R_{0}}$ and a variable of~$\vec{x}$ that uses no variable of $\key{R_0}\cup\frozen{q}$.
Consequently, $\vec{x}$ violates condition~\ref{it:separation} in Definition~\ref{def:newclass}.
Then $\vec{x}$ is not an id-set, a contradiction.
We conclude by contradiction that $\theta_{i}$ and $\theta_{j}$ agree on every variable of $\vec{x}$ that occurs in~$M$.

Assume $i<j$. Let $\theta_{j}^{*}$ be a valuation over $\vars{q}$ such that for every $v\in\vars{q}$,
$$\theta_{j}^{*}(v) =  
\begin{cases}
	\theta_{i}(v) & \textnormal{if $x$ occurs in some atom of $M$;}\\
	\theta_{j}(v) & \textnormal{otherwise.}
\end{cases}$$
Since $\theta_{i}$ and $\theta_{j}$ agree on every variable of $\vec{x}$ that occurs in~$M$, it follows $\theta_{j}^{*}(\vec{x})=\theta_{j}(\vec{x})$, and thus $\theta_{j}^{*}(\vec{x})=\vec{d}_{j}$.
We now argue that $\theta_{j}^{*}(R)\in\db$ for every $R\in\body{q}$. We consider two cases:
\begin{description}
\item[Case $R\in M$.]
Then $\theta_{j}^{*}(R)=\theta_{i}(R)$, and we have that $\theta_{i}(R)\in\db$;
\item[Case $R\notin M$.]
Then $\theta_{j}^{*}(R)=\theta_{j}(R)$,  and we have that $\theta_{j}(R)\in\db$.
Note here that if some variable $u$ of $\vars{R}$ occurs in~$M$, then $R\not\in M$ implies $\theta_{i}(u)=\theta_{j}(u)$.
\end{description}

The sequence $\tuple{\theta_{1},\ldots,\theta_{i},\ldots,\theta_{j-1},\theta_{j}^{*},\theta_{j+1},\ldots,\theta_{n}}$ therefore still satisfies conditions~\ref{it:valOptiCond1} and \ref{it:valOptiCond2}.
By our construction, for every atom $R\in\body{q}$, if the facts $\theta_{j}^{*}(R)$ and $\theta_{i}(R)$ are key-equal, then they are equal. 
By repeatedly removing primary-key violations in this way, we eventually obtain a sequence that also satisfies~\ref{it:valOptiCond3}.
Note that our procedure will not be trapped in an infinite loop, because whenever two distinct key-equal facts have to be made equal, say $\theta_{i}(R)$ and $\theta_{j}(R)$, then only the valuation with the larger index will be modified.
This means that $\theta_{1}$ will never be modified; $\theta_{2}$ will only be modified with respect to $\theta_{1}$; and so on.
\end{proof}

We can now give the proof of Lemma~\ref{lem:hasOptiRepair}.

\begin{proof}[Proof of Lemma~\ref{lem:hasOptiRepair}]
Let $\tuple{\vec{d}_{1}, \ldots, \vec{d}_{n}}$ be a sequence that contains all (and only) tuples of constants $\vec{d}_{i}$, of size $\arity{\vec{x}}$, such that $\db \models q'(\vec{c}, \vec{d}_{i})$.

By Lemma~\ref{lem:valsForOpti}, there is a sequence $\tuple{\theta_{1}, \ldots, \theta_{n}}$ of valuations over $\vars{q}$ such that for every $i,j\in\{1,\dots,n\}$, for every atom $R \in \body{q}$:
\begin{enumerate}[label=(\alph*)]
    \item $\theta_{i}(R) \in \db$; 
    \item $\theta_{i}(\vec{z}) = \vec{c}$ and $\theta_{i}(\vec{x}) = \vec{d}_{i}$; and
    \item if $\theta_{i}(R)$ and $\theta_{j}(R)$ are key-equal, then $\theta_{i}(R)=\theta_{j}(R)$.
\end{enumerate}
From the last property, it follows that there is a repair $\rep$ of $\db$ such that for every $i\in\{1,\ldots,n\}$, for every atom $R \in\body{q}$, we have $\theta_{i}(R)\in\rep$. 
Since $\rep\models q'(\vec{c},\vec{d}_{i})$ for every $i\in\{1,\ldots,n\}$, the repair $\rep$ is an optimistic repair of $\db$ with respect to $\substitute{q'(\vec{z}, \vec{x})}{\vec{z}}{\vec{c}}$.
\end{proof}

Finally, we show Lemma~\ref{lem:hasPessRepair}.

\begin{proof}[Proof of Lemma~\ref{lem:hasPessRepair}]
From~\cite{DBLP:journals/tods/KoutrisW17}, it follows that whenever $\vec{x}$ is a sequence of unattacked variables of a self-join-free Boolean conjunctive query,
then every database instance $\db$ has a repair $\rep$ such that for every sequence $\vec{d}$ of constants, of arity~$\arity{\vec{x}}$,
$\rep\models\substitute{q}{\vec{x}}{\vec{d}}$ implies $\db\cqamodels\substitute{q}{\vec{x}}{\vec{d}}$.
The proof is now straightforward by using that all variables of an id-set are unattacked (by item~\ref{it:idsetunattacked} in Lemma~\ref{lem:xIsKeyofNA}), and that free variables can be treated as constants.
\end{proof}

\section{Proofs for Section \ref{sec:completeness}}

\begin{lemma}\label{lem:notparsimonious}
Let $q^{*}(\vec{z},\vec{x},\vec{w})$ be a full self-join-free conjunctive query. 
Let $q'(\vec{z},\vec{x})=\makebound{q^{*}}{\vec{w}}$ and $q(\vec{z})=\makebound{q'}{\vec{x}}$.
If there exists a consistent database instance $\db$ such that 
$\db\models q(\vec{a},\vec{b},\vec{c_{1}})$ and $\db\models q(\vec{a},\vec{b},\vec{c_{2}})$ with $c_{1}\neq c_{2}$,
then $q(\vec{z})$ does not admit parsimonious counting on~$\vec{x}$.
Here, $\vec{a}$, $\vec{b}$, $\vec{c}_{1}$, and $\vec{c}_{2}$ are understood to be tuples of constants, where $\vec{a}$ is of arity $\arity{\vec{z}}$, $\vec{b}$ of arity $\arity{\vec{y}}$,
and $\vec{c_{i}}$ of arity $\arity{\vec{z}}$ for $i\in\{1,2\}$.
\end{lemma}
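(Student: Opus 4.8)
The plan is a short proof by contradiction that exploits the single most useful feature of the hypothesis: $\db$ is \emph{consistent}, hence it is its own unique repair, so on $\db$ the range-consistent answer to $\cqacount{\body{q}}{\vec z}$ is simply an exact count. Suppose, towards a contradiction, that $q(\vec z)$ admits parsimonious counting on $\vec x$. Then conditions~\ref{it:pctwo} and~\ref{it:pcthree} of Definition~\ref{def:pc} hold for the query $\makefree{q}{\vec x}$, which is exactly the query $q'(\vec z,\vec x)=\makebound{q^{*}}{\vec w}$ of the statement since $q=\makebound{q'}{\vec x}$; fix a consistent first-order rewriting $\varphi(\vec z,\vec x)$ of $q'(\vec z,\vec x)$ as guaranteed by condition~\ref{it:pctwo}.

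First I would identify the true answer on $\db$. Since $\db$ is consistent, its only repair is $\db$ itself. Let $N$ be the number of distinct pairs $(\vec d,\vec e)$, with $\vec d$ of arity $\arity{\vec x}$ and $\vec e$ of arity $\arity{\vec w}$, such that $\db\models q^{*}(\vec a,\vec d,\vec e)$. From the hypothesis $\db\models q^{*}(\vec a,\vec b,\vec c_{1})$ we get $N\ge 1$, and, because $\db$ is its own unique repair, $(\vec a,[N,N])$ is the answer to $\cqacount{\body{q}}{\vec z}$ on $\db$ whose first component is $\vec a$. Thus condition~\ref{it:pcthreeone} of Definition~\ref{def:pc} holds for $(\vec a,[N,N])$, so condition~\ref{it:pcthree} forces condition~\ref{it:pcthreetwo} to hold as well; in particular, item~\ref{it:pcthreetwotwo} gives $N=M$, where $M:=\card{\{\vec d : \db\models q'(\vec a,\vec d)\}}$ counts $\vec x$-tuples.

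Then I would derive $N>M$, contradicting $N=M$. Consider the projection $\pi$ that sends each pair $(\vec d,\vec e)$ with $\db\models q^{*}(\vec a,\vec d,\vec e)$ to its first component $\vec d$. Since $q'(\vec z,\vec x)=\makebound{q^{*}}{\vec w}$, the image of $\pi$ is precisely $\{\vec d : \db\models q'(\vec a,\vec d)\}$, so $\pi$ is a surjection from a set of size $N$ onto a set of size $M$, whence $N\ge M$. But $\pi$ is not injective: the pairs $(\vec b,\vec c_{1})$ and $(\vec b,\vec c_{2})$ both lie in its domain, are distinct because $\vec c_{1}\ne\vec c_{2}$, and are both mapped to $\vec b$. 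Hence $N>M$, the desired contradiction, and we conclude that $q(\vec z)$ does not admit parsimonious counting on $\vec x$.

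I do not anticipate a genuine obstacle here; this lemma is the counterpart, at the level of parsimonious counting, of Lemma~\ref{lem:onlyOneVal}, and the argument reduces to ``a consistent database is its own repair'' together with the elementary fact that a non-injective surjection between finite sets strictly decreases cardinality. The only point that needs a little care is matching the notation of Definition~\ref{def:pc}: checking that its $q'=\makefree{q}{\vec x}$ agrees with the $q'=\makebound{q^{*}}{\vec w}$ of the statement, that the ``number of distinct $\vec d$ with $\db\models q'(\vec c,\vec d)$'' in item~\ref{it:pcthreetwotwo} ranges over tuples of arity $\arity{\vec x}$ (so that it equals $M$ above), and that $(\vec a,[N,N])$ really is a range-consistent answer on the consistent instance $\db$.
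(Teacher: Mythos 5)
Your proof is correct and follows essentially the same route as the paper's: both exploit that a consistent $\db$ is its own unique repair, so the range-consistent answer is the exact count of $\vec{x}\vec{w}$-valuations (your $N$, the paper's $\card{\calD^{*}}$), and both derive the strict inequality with the count of $\vec{x}$-valuations (your $M$, the paper's $\card{\calD}$) from the two witnesses sharing $\vec{b}$ but differing on $\vec{w}$. The only difference is presentational — you phrase it as a contradiction with condition~\ref{it:pcthreetwotwo} of Definition~\ref{def:pc}, while the paper states the mismatch directly.
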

\begin{proof}
Let $\db$ be a consistent database instance such that
$\db\models q(\vec{a},\vec{b},\vec{c_{1}})$ and $\db\models q(\vec{a},\vec{b},\vec{c_{2}})$ with $c_{1}\neq c_{2}$.
Let $D$ be the active domain of~$\db$.
Define:
\begin{eqnarray*}
\calD & \defeq & \{\vec{b}\in D^{\arity{\vec{x}}}\mid\db\models q'(\vec{a},\vec{b})\},\\
\calD^{*} & \defeq & \{\vec{b}\cdot\vec{c}\in D^{\arity{\vec{x}}}\cdot D^{\card{\vec{w}}}\mid\db\models q^{*}(\vec{a},\vec{b},\vec{c})\}.
\end{eqnarray*}
Clearly, for every $\vec{b}\in\calD$, there exists $\vec{c}\in D^{\arity{\vec{w}}}$ such that $\vec{b}\cdot\vec{c}\in\calD^{*}$.
Therefore, $\card{\calD}\leq\card{\calD^{*}}$.
Moreover, from $\db\models q^{*}(\vec{a},\vec{b},\vec{c}_{1})$ and $\db\models q^{*}(\vec{a},\vec{b},\vec{c}_{2})$, it follows $\card{\calD}<\card{\calD^{*}}$.
The query $\mycount{q^{*}}{\vec{z}}$ on $\db$ returns $(\vec{a},\card{\calD^{*}})$.
Since $\db$ is consistent, the query $\cqacount{q^{*}}{\vec{z}}$ on $\db$ returns $(\vec{a},[\card{\calD^{*}},\card{\calD^{*}}])$.
From $\card{\calD}\neq\card{\calD^{*}}$, it is correct to conclude that $q(\vec{z})$ does not admit parsimonious counting on~$\vec{x}$.
\end{proof}

\begin{proof}[Proof of Lemma~\ref{lem:strongImpliesMultipleValuations}]
The proof is by contraposition.
Assume that the attack graph of  $q'(\vec{z}, \vec{x})$ has a strong attack.
We can assume atoms $R$ and $S$ such that the attack graph of $q'(\vec{z},\vec{x})$ has a strong attack from $R$ to~$S$.
Let $\theta,\mu$ be two valuations over $\vars{q}$ such that for every variable $u\in\vars{q}$,
\begin{equation}\label{eq:thetamu}
\textnormal{$\theta(u)=\mu(u)$ if and only if $\fdset{q'}\models\fd{\key{R}}{u}$.}   
\end{equation}
Since $\fdset{q'}$ contains $\fd{\emptyset}{\free{q}}$ by definition, we have $\fdset{q'}\models\fd{\key{R}}{\key{R}\cup\free{q}}$ by Armstrong's augmentation axiom.
Consequently, if $\fdset{q'}\models\fd{\key{R}\cup\free{q}}{u}$, then $\fdset{q'}\models\fd{\key{R}}{u}$ (the converse is obvious).
Thus, $\fdset{q'}\models\fd{\key{R}}{u}$ if and only if $\fdset{q'}\models\fd{\key{R}\cup\free{q}}{u}$.

Let $\db=\theta(\body{q})\cup\mu(\body{q})$.
We show that $\db$ is a consistent database instance, i.e., whenever two facts in $\db$ are key-equal, then they are equal.
Since $q'$ is self-join-free, it suffices to consider any atom $F$ in $\body{q}$ such that for every $u\in\key{F}$, $\theta(u)=\mu(u)$.
By Eq.~\eqref{eq:thetamu}, it follows $\fdset{q'}\models\fd{\key{R}}{\key{F}}$.
Since $\fdset{q'}$ contains $\fd{\key{F}}{\vars{F}}$, it follows $\fdset{q'}\models\fd{\key{R}}{\vars{F}}$ by Armstrong's transitivity axiom.
By Eq.~\eqref{eq:thetamu}, $\theta(F)=\mu(F)$.

Let $\vec{w}$ be a tuple of the bound variables of $q'(\vec{z},\vec{x})$, and let $q^{*}(\vec{z},\vec{x},\vec{w})=\makefree{q'}{\vec{w}}$.
Since $\fdset{q'}\models\fd{\key{R}}{\free{q}}$, we have $\theta(\vec{z})=\mu(\vec{z})$ and $\theta(\vec{x})=\mu(\vec{x})$ by Eq.~\eqref{eq:thetamu}.
Since the attack from $R$ to~$S$ is strong, there exists $v\in\key{S}$ such that $\fdset{q'}\not\models\fd{\key{R}}{v}$,
and hence $\theta(v)\neq\mu(v)$ by Eq.~\eqref{eq:thetamu}.
Since $v$ is necessarily in $\vec{w}$, we have $\theta(\vec{w})\neq\mu(\vec{w})$.
Let $\vec{a}=\theta(\vec{z})=\mu(\vec{z})$, $\vec{b}=\theta(\vec{x})=\mu(\vec{x})$, $\vec{c_{1}}=\theta(\vec{w})$, and $\vec{c_{2}}=\mu(\vec{w})$.
Thus, $\vec{c}_{1}\neq\vec{c}_{2}$.
Clearly, $\db\models q^{*}(\vec{a},\vec{b},\vec{c}_{1})$ and $\db\models q^{*}(\vec{a},\vec{b},\vec{c}_{2})$.
By Lemma~\ref{lem:notparsimonious}, it is correct to conclude that $q(\vec{z})$ does not admit parsimonious counting on~$\vec{x}$.
%
\end{proof}

\begin{proof}[Proof of Lemma~\ref{lem:freeBadComponentImpliesMultipleValuations}]
The proof is by contraposition.
Assume that $\vec{x}$ does not satisfy condition~\ref{it:source} in Definition~\ref{def:newclass}.
We can assume that the attack graph of $q(\vec{z})$ contains an unattacked atom~$S$ such that $\fdset{q}\not\models\fd{\vec{x}}{\key{S}}$.  
Let $\theta,\mu$ be two valuations over $\vars{q}$ such that for every variable $u\in\vars{q}$,
\begin{equation}\label{eq:thetamubis}
\textnormal{$\theta(u)=\mu(u)$ if and only if $\fdset{q}\models\fd{\vec{x}}{u}$.}   
\end{equation}
Let $\db=\theta(\body{q})\cup\mu(\body{q})$.
It can be seen that $\db$ is consistent; the reasoning is similar to that in the proof of Lemma~\ref{lem:strongImpliesMultipleValuations}.

Let $\vec{w}$ be a tuple of the bound variables of $q'(\vec{z},\vec{x})$, and let $q^{*}(\vec{z},\vec{x},\vec{w})=\makefree{q'}{\vec{w}}$.
Since $\fdset{q}\models\fd{\vec{x}}{\vec{x}\cdot\vec{z}}$, we have $\theta(\vec{z})=\mu(\vec{z})$ and $\theta(\vec{x})=\mu(\vec{x})$ by Eq.~\eqref{eq:thetamubis}.
Since $\fdset{q}\not\models\fd{\vec{x}}{\key{S}}$, there exists $v\in\key{S}$ such that $\fdset{q}\not\models\fd{\vec{x}}{v}$,
and hence $\theta(v)\neq\mu(v)$ by Eq.~\eqref{eq:thetamubis}.
Since $v$ is necessarily in $\vec{w}$, we have $\theta(\vec{w})\neq\mu(\vec{w})$.
Let $\vec{a}=\theta(\vec{z})=\mu(\vec{z})$, $\vec{b}=\theta(\vec{x})=\mu(\vec{x})$, $\vec{c_{1}}=\theta(\vec{w})$, and $\vec{c_{2}}=\mu(\vec{w})$.
Thus, $\vec{c}_{1}\neq\vec{c}_{2}$.
Clearly, $\db\models q^{*}(\vec{a},\vec{b},\vec{c}_{1})$ and $\db\models q^{*}(\vec{a},\vec{b},\vec{c}_{2})$.
By Lemma~\ref{lem:notparsimonious}, it is correct to conclude that $q(\vec{z})$ does not admit parsimonious counting on~$\vec{x}$.
\end{proof}

\begin{proof}[Proof of Lemma~\ref{lem:freeBadPathImpliesNoOptiRepair}]
We can assume an atom $R$ in $q(\vec{z})$ such that the query graph contains a path $\pi=\tuple{v_{1},v_{2},\ldots,v_{n}}$ from some variable $v_{1}$ in $\notkey{R}$ to a variable $v_{n}$ in $\vec{x}$ such that $v_{1},\ldots,v_{n}\notin\key{R}\cup\frozen{q}$.
Let $U=\{u\in\vars{q}\mid \fdset{q}\models\fd{\emptyset}{u}\}$.
We distinguish two cases.
\begin{description}
\item[Case that $v_{1},\ldots,v_{n}\notin U$.]
Let $\theta,\mu,\gamma$ be three valuations over $\vars{q}$ such that for every variable $u\in\vars{q}$,
\begin{eqnarray}
\theta(u)\neq\mu(u) & \textnormal{if and only if} & u\in\{v_{1},\dots,v_{n}\};\label{eq:valthree}\\
\gamma(u)=\theta(u) & \textnormal{if and only if} & u\in U;\label{eq:valone1}\\
\gamma(u)=\mu(u) & \textnormal{if and only if} & u\in U.\label{eq:valone2}
\end{eqnarray}
We have that $\vars{\vec{z}}\subseteq U$.
The valuations are well-defined because $v_{1},\ldots,v_{n}\notin U$.
Moreover, we can choose $\gamma(\vec{z})=\vec{c}$, as follows:
$$
\begin{array}{|*{10}{c}c}
\multicolumn{3}{c}{} & \multicolumn{4}{c}{\overbrace{\rule{80pt}{0pt}}^{\textnormal{variables of $U$}}} & \multicolumn{3}{c}{\overbrace{\rule{70pt}{0pt}}^{\textnormal{other variables}}}\\[-3pt]
v_{1} & \dotsm & v_{n} & \vec{z} & v_{n+1} & \dotsm & v_{m} & v_{m+1} & \dotsm & v_{\ell}\\\cline{1-10}
0 & \dotsm & 0 & \vec{c} & 0 & \dotsm & 0 & 0 & \dotsm & 0 & (\theta)\\
1 & \dotsm & 1 & \vec{c} & 0 & \dotsm & 0 & 0 & \dotsm & 0 & (\mu)\\
2 & \dotsm & 2 & \vec{c} & 0 & \dotsm & 0 & 2 & \dotsm & 2 & (\gamma)
\end{array}
$$

Let $\db=\theta(\body{q})\cup\mu(\body{q})\cup\gamma(\body{q})$.
We show that every repair of $\db$ includes $\gamma(\body{q})$, and hence $\db\cqamodels q(\vec{c})$.
To this end, let $F$ be an atom of~$q$.
If $\key{F}\nsubseteq U$, then by Eq.~\eqref{eq:valone1} and Eq.~\eqref{eq:valone2}, $\gamma(F)$ is key-equal to neither $\theta(F)$ nor $\mu(R)$, and, consequently, every repair of $\db$ will contain~$\gamma(F)$.
Assume next that $\key{F}\subseteq U$.
Since $\fdset{q}$ contains $\fd{\key{F}}{\vars{F}}$, it follows that  $\vars{F}\subseteq U$.
Then, $\gamma(F)=\theta(F)=\mu(F)$ by~Eq.~\eqref{eq:valone1} and Eq.~\eqref{eq:valone2}.

Note that $\theta(R)$ and $\mu(R)$ are key-equal, because the path $\pi$ contains no variable of~$\key{R}$.
Also $\theta(R)\neq\mu(R)$, because $v_{1}\in\notkey{R}$ and $\theta(v_{1})\neq\mu(v_{1})$.
It follows that every repair must contain either $\theta(R)$ or $\mu(R)$, but not both.

Since $\theta(v_{n})\neq\mu(v_{n})$, we have $\theta(\vec{x})\neq\mu(\vec{x})$.
Let $\vec{b}_{\theta}=\theta(\vec{x})$ and $\vec{b}_{\mu}=\mu(\vec{x})$.
Clearly, $\db\models q'(\vec{c},\vec{b}_{\theta})$ and $\db\models q'(\vec{c},\vec{b}_{\mu})$.

Let $\rep$ be a repair that contains $\mu(R)$, and hence $\theta(R)\notin\rep$. We show $\rep\not\models q'(\vec{c},\vec{b_{\theta}})$. 
Assume for the sake of contradiction that there is a valuation $\nu$ over $\vars{q}$ such that $\nu(\body{q})\subseteq\rep$ and $\nu(\vec{x})=\vec{b}_{\theta}$. 
We have that $\nu$ maps the sequence $\pi$ to $\tuple{\nu(v_{1}),\dots,\nu(v_{n})}$.
We have $\nu(v_{n})=\theta(v_{n})$.
Since every two adjacent variables occur together in some atom,
it follows from Eq.~\eqref{eq:valthree} that $\nu(v_{1})=\theta(v_{1})$, hence $\nu(R)=\theta(R)$, a contradiction. 
We conclude by contradiction that $\rep\not\models q'(\vec{c},\vec{b_{\theta}})$.
By similar reasoning, for a repair $\rep$ that contains $\theta(R)$, we have $\rep\not\models q'(\vec{c},\vec{b_{\mu}})$.
It follows that for every repair $\rep$ of $\db$, either $\rep\not\models q'(\vec{c},\vec{b_{\theta}})$ or $\rep\not\models q'(\vec{c},\vec{b_{\mu}})$.

\item[Case that $\{v_{1},\ldots,v_{n}\}\cap U\neq\emptyset$.]
We can assume a greatest integer $\ell\in\{1,\dots,n\}$ such that $v_{\ell}\in U$.
That is, $v_{\ell+1}, v_{\ell+2}, \dots, v_{n}\notin U$.
We can assume a sequence $\sigma$ of atoms that is a sequential proof of $\fdset{q}\models\fd{\emptyset}{v_{\ell}}$.
Since~$v_{\ell}$ is not frozen, there is some atom $S$ in the sequential proof~$\sigma$ such that $\attacks{S}{v_{\ell}}{q}$, and hence $v_{\ell}\notin\keycl{S}{q}$.

We show $v_{\ell+1},\ldots,v_{n}\notin\keycl{S}{q}$.
Assume for the sake of contradiction $j\in\{\ell+1,\ldots,n\}$ such that $\fdset{q\setminus\{S\}}\models\fd{\key{S}}{v_{j}}$.
Since $\fdset{q\setminus\{S\}}\models\fd{\emptyset}{\key{S}}$ (because $S$ occurs in the sequential proof $\sigma$),
we obtain $\fdset{q\setminus\{S\}}\models\fd{\emptyset}{v_{j}}$, hence $v_{j}\in U$, a contradiction.
Consequently, $\ell\leq n$ and
$v_{\ell},v_{\ell+1}, \dots, v_{n}\notin\keycl{S}{q}$.

Let $\theta$, $\mu$ be two valuations over $\vars{q}$ such that for every $u\in\vars{q}$,
$\theta(u)=\mu(u)$ if and only if $u\in\keycl{S}{q}$.
Since $\vars{\vec{z}}\subseteq\keycl{S}{q}$ by definition (even though some variables of~$\vec{z}$ may not occur in $q\setminus\{S\}$), we can choose  $\theta(\vec{z})=\mu(\vec{z})=\vec{c}$.

Let $\db=\theta(q)\cup\mu(q)$.
The only facts in $\db$ that are key-equal but distinct are~$\theta(S)$ and~$\mu(S)$.
Note here that since $S$ attacks some variable, there is a variable of $\notkey{S}$ that is not in~$\keycl{S}{q}$.
Consequently, $\db$ has exactly two repairs, denoted $\rep_{1}\defeq\db\setminus\{\mu(S)\}$ and $\rep_{2}\defeq\db\setminus\{\theta(S)\}$.

Let $\vec{d}_{1}=\theta(\vec{x})$ and $\vec{d}_{2}=\mu(\vec{x})$.
Since $v_{n}\notin\keycl{S}{q}$, we have $\theta(v_{n})\neq\mu(v_{n})$. 
Since $v_{n}\in\vars{\vec{x}}$, $\vec{d}_{1}\neq\vec{d}_{2}$.
For the query $q'(\vec{z},\vec{x})\defeq\makefree{q}{\vec{x}}$,  we obviously have $\db\models q'(\vec{c},\vec{d}_{1})$ and $\db\models q'(\vec{c},\vec{d}_{2})$.
Using the same proof as~\cite[Proposition~6.4]{DBLP:journals/tods/Wijsen12},
we have $\rep_{1}\models q'(\vec{c},\vec{d}_{1})$ and $\rep_{1}\not\models q'(\vec{c},\vec{d}_{2})$.
Symmetrically, $\rep_{2}\not\models q'(\vec{c},\vec{d}_{1})$ and $\rep_{2}\models q'(\vec{c},\vec{d}_{2})$.
Clearly, $\db\cqamodels q(\vec{c})$.
\end{description}
In both cases, there is a database $\db$ such that $\db\cqamodels q(\vec{c})$, and $\db$ has no optimistic repair with respect to $\substitute{q'}{\vec{z}}{\vec{c}}$.
\end{proof}

\begin{lemma}\label{lem:pathInSequentialProofs}
Let $q(\vec{z})$ be a self-join-free conjunctive query.
Let $\vec{x}$ be a sequence of distinct bound variables of $q(\vec{z})$.
Let $q'(\vec{z},\vec{x})=\makefree{q}{\vec{x}}$.
Let $Y\cup\{w\}$ be a set of bound variables of~$q(\vec{z})$.
If $\fdset{q}\not\models\fd{Y}{w}$ and $\fdset{q'}\models\fd{Y}{w}$, then there is a path in the query graph of $q$ between some variable of $\vec{x}$ and $w$ such that for every variable~$u$ on the path, $\fdset{q}\not\models\fd{Y}{u}$.
\end{lemma}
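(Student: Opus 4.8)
The plan is to exploit that $\fdset{q'}$ carries exactly one extra fact over $\fdset{q}$: since $\free{q'}=\free{q}\cup\vec{x}$ and $\body{q'}=\body{q}$, the set $\fdset{q'}$ is equivalent to $\fdset{q}\cup\{\fd{\emptyset}{\vec{x}}\}$, so $\fdset{q'}\models\fd{Y}{w}$ is equivalent to $\fdset{q}\models\fd{Y\cup\vec{x}}{w}$, i.e.\ $w$ lies in the $\fdset{q}$-closure of $Y\cup\vec{x}$, while $\fdset{q}\not\models\fd{Y}{w}$ says $w\notin Z^{+}$, where $Z^{+}:=\{u\in\vars{q}\mid\fdset{q}\models\fd{Y}{u}\}$. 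Note that $\vec{z}\cup Y\subseteq Z^{+}$, and that $Z^{+}$ is ``closed'': if $\key{F}\subseteq Z^{+}$ then $\vars{F}\subseteq Z^{+}$, using $\fd{\key{F}}{\vars{F}}\in\fdset{q}$ and transitivity. If $w\in\vec{x}$, the one-vertex path $\tuple{w}$ already satisfies the claim (by the hypothesis $\fdset{q}\not\models\fd{Y}{w}$), so I assume $w\notin\vec{x}$, whence $w\notin\vec{z}\cup Y\cup\vec{x}$. The task is then to trace, inside the query graph, a chain of variables from $\vec{x}$ to $w$ that never enters $Z^{+}$.

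By Lemma~\ref{lem:soundcomplete} I would take a sequential proof $\tuple{F_{1},\dots,F_{n}}$ of $\fdset{q}\models\fd{Y\cup\vec{x}}{w}$; truncating it after the first atom that contains $w$, I may assume $w\in\vars{F_{n}}$ and $w\notin G_{n-1}$, where $G_{i}:=\vec{z}\cup Y\cup\vec{x}\cup\bigcup_{j\le i}\vars{F_{j}}$. Call a variable $u$ \emph{introduced} if $u\notin G_{0}$, and for such $u$ let $\mathrm{intro}(u)$ be the least $j$ with $u\in\vars{F_{j}}$; since $\key{F_{j}}\subseteq G_{j-1}$ and $u\notin G_{\mathrm{intro}(u)-1}$, this forces $u\notin\key{F_{j}}$ for $j=\mathrm{intro}(u)$.

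Then I would build a walk $v_{0}=w,v_{1},v_{2},\dots$ backwards: given a current $v_{t}\notin Z^{+}$, stop if $v_{t}\in\vec{x}$; otherwise $v_{t}$ is introduced (it is not in $\vec{z}\cup Y$ since $v_{t}\notin Z^{+}$, and not in $\vec{x}$), set $j:=\mathrm{intro}(v_{t})$, and note $\key{F_{j}}\not\subseteq Z^{+}$ — otherwise $\vars{F_{j}}\subseteq Z^{+}$ and hence $v_{t}\in Z^{+}$, a contradiction. Pick $v_{t+1}\in\key{F_{j}}\setminus Z^{+}$; then $v_{t+1}$ and $v_{t}$ occur together in $F_{j}$ and $v_{t+1}\ne v_{t}$ (as $v_{t}\notin\key{F_{j}}$), so $\{v_{t},v_{t+1}\}$ is a query-graph edge and $v_{t+1}$ is bound (since $Z^{+}\supseteq\free{q}$). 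If $v_{t+1}\in\vec{x}$, stop; otherwise $v_{t+1}\notin G_{0}$, so $v_{t+1}\in\vars{F_{i}}$ for some $i<j$, giving $\mathrm{intro}(v_{t+1})<\mathrm{intro}(v_{t})$. The strict decrease of $\mathrm{intro}$ forces the walk to halt at some $v_{m}\in\vec{x}$, and $\tuple{v_{m},v_{m-1},\dots,v_{0}}$ is a path in the query graph between $v_{m}\in\vec{x}$ and $w$, each of whose vertices was kept outside $Z^{+}$, i.e.\ $\fdset{q}\not\models\fd{Y}{v_{t}}$ for all $t$; if a simple path is wanted, shortcutting repeated vertices preserves this. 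I expect the only delicate point to be precisely this bookkeeping with $\mathrm{intro}$ — guaranteeing that the back-step lands on a \emph{distinct} variable whose introduction index is \emph{strictly} smaller, so the walk cannot cycle; the rest is routine closure/transitivity manipulation.
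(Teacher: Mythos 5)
Your proposal is correct and follows essentially the same route as the paper's proof: reduce $\fdset{q'}\models\fd{Y}{w}$ to $\fdset{q}\models\fd{Y\cup\vec{x}}{w}$, form the $\fdset{q}$-closure of $Y$, dispose of the case $w\in\vec{x}$ with the trivial path, and extract from a sequential proof a query-graph path to $\vec{x}$ avoiding that closure. The only difference is cosmetic: you trace the path by an explicit backward walk with a strictly decreasing introduction index, whereas the paper packages the same idea as a minimal backward-closed set $M$ of atoms plus a contradiction argument.
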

\begin{proof}
Assume $\fdset{q}\not\models\fd{Y}{w}$ and $\fdset{q'}\models\fd{Y}{w}$.
It is easily verified that $\fdset{q'}\equiv\fdset{q}\cup\{\fd{\emptyset}{\vec{x}}\}$.
From $\fdset{q'}\models\fd{Y}{w}$, it follows $\fdset{q}\models\fd{Y\cup\vec{x}}{w}$. 
Let $U=\{u\in\vars{q}\mid\fdset{q}\models\fd{Y}{u}\}$.
Clearly, $Y\cup\free{q}\subseteq U$ and $w\not\in U$.
If $w\in\vars{\vec{x}}$, then the empty path between $w$ and $w$ is a path between some variable of~$\vec{x}$ and~$w$ that uses no variable of~$U$, and the desired result obtains.
From here on, assume $w\notin\vars{\vec{x}}$.
We can assume a sequence of atoms  $\tuple{F_{1}, \ldots, F_{n}}$ that is a minimal sequential proof of $\fdset{q} \models \fd{Y\cup\vec{x}}{w}$. 
Note that $w \in \notkey{F_{n}}$. 
Let $M$ be a minimal subset of $\{F_{1}, \ldots, F_{n}\}$ such that:
\begin{itemize}
    \item $F_{n} \in M$; and
    \item for all $i,j\in\{1, \ldots, n-1\}$ such that $i<j$, if $F_{j}\in M$ and some variable of $\key{F_{j}}\setminus U$ occurs in $F_{i}$, then $F_{i}\in M$.
\end{itemize}

By construction of $M$, if some variable of $\vec{x}$ occurs in $M$, 
then there is a path in the query graph between a variable of~$\vec{x}$ and $w$ that uses no variable in~$U$, and hence the desired result obtains.
Therefore, to conclude the proof, it suffices to show that some variable of $\vec{x}$ occurs in~$M$.

Assume for the sake of contradiction that no variable of $\vec{x}$ occurs in $M$.
By construction of~$M$, it holds that for every atom $F_{j}\in M$, for every variable $v$ in $\key{F_{j}}$, either $v\in U$, or $v$ appears in some other atom $F_{i}\in M$ with $i<j$.
Consequently, if we restrict $\tuple{F_{1},\ldots,F_{n}}$ to the atoms of~$M$, we obtain a sequential proof of $\fdset{q}\models\fd{U}{w}$.
Since $\fdset{q}\models\fd{Y}{U}$ is obvious, it is correct to conclude $\fdset{q}\models\fd{Y}{w}$, a contradiction.
\end{proof}


\begin{lemma}\label{lem:attacksKept}
Let $q(\vec{z})$ be a self-join-free conjunctive query.
Let $\vec{x}$ be a sequence of distinct bound variables of $q(\vec{z})$.
Let $q'(\vec{z},\vec{x})=\makefree{q}{\vec{x}}$.
Assume that $\vec{x}$ satisfies condition~\ref{it:separation} in Definition~\ref{def:newclass} of $\cnewclass$.
For all atoms $R$ and $S$ of $q$,
if $\attacks{R}{S}{q}$, then $\attacks{R}{S}{q'}$.
\end{lemma}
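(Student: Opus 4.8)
The plan is to prove that a single witness for $\attacks{R}{S}{q}$ is also a witness for $\attacks{R}{S}{q'}$. Fix a sequence $\tuple{x_{1},\dots,x_{n}}$ that is a witness of $\attacks{R}{x_{n}}{q}$ for some $x_{n}\in\vars{S}$; such a sequence exists by the definition of $\attacks{R}{S}{q}$, which also gives $R\neq S$. Since $q'$ has the same body as $q$, two variables occur together in an atom of $q'$ exactly when they do in $q$, so it suffices to check that every $x_{\ell}$ is a bound variable of $q'$ and that $x_{\ell}\notin\keycl{R}{q'}$ for every $\ell\in\{1,\dots,n\}$. Two facts will be used repeatedly. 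First, for each $\ell$ the prefix $\tuple{x_{1},\dots,x_{\ell}}$ is again a witness, so $\attacks{R}{x_{\ell}}{q}$, hence $x_{\ell}\notin\keycl{R}{q}$ (in particular $x_{\ell}\notin\key{R}$) and, by Lemma~\ref{lem:witnessNotFrozen}, $x_{\ell}\notin\frozen{q}$. Second, by condition~\ref{it:separation} there is no path in the query graph of $q$ from a variable of $\notkey{R}$ to a variable of $\vec{x}$ that avoids $\key{R}\cup\frozen{q}$; note also that the query graph of $q\setminus\{R\}$ is a subgraph of that of $q$.

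For the claim that each $x_{\ell}$ is bound in $q'$: suppose $x_{\ell}\in\vec{x}$ for some $\ell$. Then $\tuple{x_{1},\dots,x_{\ell}}$ is a path (extract a simple subpath if needed) in the query graph of $q$ from $x_{1}\in\notkey{R}$ to $x_{\ell}\in\vec{x}$ all of whose variables, by the first fact above, lie outside $\key{R}\cup\frozen{q}$, contradicting condition~\ref{it:separation}. Hence no $x_{\ell}$ belongs to $\vec{x}$, so $x_{\ell}\notin\free{q'}=\free{q}\cup\vec{x}$, and since $x_{\ell}$ is bound in $q$ it is bound in $q'$.

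For the main claim, suppose towards a contradiction that $x_{\ell}\in\keycl{R}{q'}$ for some $\ell$. As $x_{\ell}\notin\free{q'}$, this means $\fdset{q'\setminus\{R\}}\models\fd{\key{R}}{x_{\ell}}$. Writing $\vec{x}'=\vec{x}\cap\vars{q\setminus\{R\}}$, one checks that $q'\setminus\{R\}=\makefree{q\setminus\{R\}}{\vec{x}'}$, so $\fdset{q'\setminus\{R\}}\equiv\fdset{q\setminus\{R\}}\cup\{\fd{\emptyset}{v}\mid v\in\vec{x}'\}$, and the displayed implication is therefore equivalent to $\fdset{q\setminus\{R\}}\models\fd{\key{R}\cup\vec{x}}{x_{\ell}}$; moreover $x_{\ell}\in\vars{q\setminus\{R\}}$, and since $x_{\ell}\notin\keycl{R}{q}$ and $x_{\ell}$ is bound in $q$ we have $\fdset{q\setminus\{R\}}\not\models\fd{\key{R}}{x_{\ell}}$. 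Now apply Lemma~\ref{lem:pathInSequentialProofs} with the query $q\setminus\{R\}$ in the role of $q$, the sequence $\vec{x}'$ in the role of $\vec{x}$, and $Y=\key{R}\cap(\vars{q\setminus\{R\}}\setminus\free{q\setminus\{R\}})$, $w=x_{\ell}$; its two hypotheses follow from the two facts just obtained, since $Y$ and $\key{R}$ have the same closure under $\fdset{q\setminus\{R\}}$. This produces a path $P$ in the query graph of $q\setminus\{R\}$ between some $v^{*}\in\vec{x}'\subseteq\vec{x}$ and $x_{\ell}$ along which every variable $u$ satisfies $\fdset{q\setminus\{R\}}\not\models\fd{\key{R}}{u}$; each such $u$ is bound in $q$, so $u\notin\keycl{R}{q}$.

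Concatenating $\tuple{x_{1},\dots,x_{\ell}}$ with $P$ traversed from $x_{\ell}$ to $v^{*}$ gives a walk in the query graph of $q$ from $x_{1}\in\notkey{R}$ to $v^{*}\in\vec{x}$; every variable on it is one of the $x_{j}$ ($j\le\ell$) or one of the $u$ on $P$, hence outside $\keycl{R}{q}\supseteq\key{R}$, and therefore attacked by $R$ (the initial segment of the walk up to that variable is a witness of the attack), so outside $\frozen{q}$ by Lemma~\ref{lem:witnessNotFrozen}. Extracting a simple subpath, this contradicts condition~\ref{it:separation}. Hence $x_{\ell}\notin\keycl{R}{q'}$ for all $\ell$, so $\tuple{x_{1},\dots,x_{n}}$ witnesses $\attacks{R}{x_{n}}{q'}$, and since $R\neq S$ and $x_{n}\in\vars{S}$ we conclude $\attacks{R}{S}{q'}$. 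The only delicate point I anticipate is the bookkeeping in the main claim: confirming that membership in $\keycl{R}{q'}$ unwinds to precisely the pair of functional-dependency statements about $q\setminus\{R\}$ needed to invoke Lemma~\ref{lem:pathInSequentialProofs}; everything else is a mechanical combination of the witness definition, Lemma~\ref{lem:witnessNotFrozen}, and condition~\ref{it:separation}.
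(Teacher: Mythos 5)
Your proof is correct and follows essentially the same route as the paper's: take a witness of $\attacks{R}{S}{q}$, suppose some variable on it lands in $\keycl{R}{q'}$ (or in $\vec{x}$), invoke Lemma~\ref{lem:pathInSequentialProofs} on $q\setminus\{R\}$ to extract a path to a variable of $\vec{x}$ avoiding variables determined by $\key{R}$, and concatenate it with the witness prefix to contradict condition~\ref{it:separation}, using Lemma~\ref{lem:witnessNotFrozen} to exclude frozen variables. Your bookkeeping (ruling out witness variables in $\vec{x}$ separately, and trimming $Y$ to the bound variables of $\key{R}$ occurring in $q\setminus\{R\}$ so the hypotheses of Lemma~\ref{lem:pathInSequentialProofs} literally apply) is in fact slightly more careful than the paper's own application of that lemma.
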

\begin{proof}
Assume that $\attacks{R}{S}{q}$.
We can assume a witness $\tuple{v_{1}, \ldots, v_{n}}$ that $R$ attacks $S$ in $q(\vec{z})$.
Clearly, for every $i\in\{1,\dots,n\}$, we have $v_{i}\notin\key{R}$, and, by Lemma~\ref{lem:witnessNotFrozen}, $v_{i}\notin\frozen{q}$. 
Assume for the sake of contradiction that $\notattacks{R}{S}{q'}$.
It suffices to show that $\vec{x}$ does not satisfy condition~\ref{it:separation} in Definition~\ref{def:newclass}, which is a contradiction with the lemma's hypothesis.

Since $\tuple{v_{1}, \ldots, v_{n}}$ is not a witness that $R$ attacks $S$ in $q'(\vec{z},\vec{x})$,
we can assume some $i\in\{1, \ldots, n\}$ such that $v_{i}\notin\keycl{R}{q}$ and $v_{i}\in\keycl{R}{q'}$. 
Consequently, $\fdset{q\setminus\{R\}}\not\models\fd{\key{R}}{v_{i}}$ and $\fdset{q'\setminus\{R\}}\models\fd{\key{R}}{v_{i}}$.
By Lemma~\ref{lem:pathInSequentialProofs}, we can assume that the query graph of $q\setminus\{R\}$ has a path $\pi=\tuple{x,\ldots,v_{i}}$ between some variable~$x$ in $\vars{\vec{x}}\cap\vars{q'\setminus\{R\}}$ and~$v_{i}$ such that for every variable~$u\in\vars{\pi}$, we have $\fdset{q\setminus\{R\}}\not\models\fd{\key{R}}{u}$.
Since $\vars{\pi}\cap\keycl{R}{q}=\emptyset$ and $\attacks{R}{v_{i}}{q}$, it follows that $\attacks{R}{u}{q}$ for every $u\in\vars{\pi}$.
From Lemma~\ref{lem:witnessNotFrozen}, it follows $\vars{\pi}\cap\frozen{q}=\emptyset$.

Then, the path $\pi\cdot\tuple{v_{i-1}, v_{i-2},\ldots v_{1}}$ is a path in the query graph of~$q$ between a variable of~$\vec{x}$ and a variable of $\notkey{R}$ that uses no variable of $\key{R}\cup\frozen{q}$.
It follows that~$\vec{x}$ violates condition~\ref{it:separation} in Definition~\ref{def:newclass} of $\cnewclass$.
\end{proof}

The statement of Lemma~\ref{lem:strongImpliesStrong} is almost identical to Lemma~\ref{lem:attacksKept};
the only difference is that it deals with strong attacks.

\begin{proof}[Proof of Lemma~\ref{lem:strongImpliesStrong}]
Assume that $\vec{x}$ satisfies condition~\ref{it:separation} in Definition~\ref{def:newclass}.
The proof is by contraposition:
we assume that the attack graph of $q'(\vec{z},\vec{x})$ has no strong attack from $R$ to $S$;
we need to show that the attack graph of $q(\vec{z})$ has no strong attack from $R$ to $S$.

If $\notattacks{R}{S}{q'}$, then $\notattacks{R}{S}{q}$ by Lemma~\ref{lem:attacksKept}, and the desired result obtains.
Assume from here on that $\attacks{R}{S}{q'}$, which must be a weak attack.
Assume for the sake of contradiction that $q(\vec{z})$'s attack graph has a strong attack $\attacks{R}{S}{q}$.

Consequently, $\fdset{q}\not\models\fd{\key{R}}{\key{S}}$ and $\fdset{q'}\models \fd{\key{R}}{\key{S}}$.
We can assume a variable $y\in\key{S}$ such that  $\fdset{q}\not\models\fd{\key{R}}{y}$ and $\fdset{q'}\models\fd{\key{R}}{y}$.
By Lemma~\ref{lem:pathInSequentialProofs}, for some variable $x$ in~$\vec{x}$, there is a path $\pi=\tuple{x,\ldots,y}$ in the query graph of~$q$ such that for every $u\in\vars{\pi}$, $\fdset{q}\not\models\fd{\key{R}}{u}$.
Consequently, for every $u\in\vars{\pi}$, $u\not\in\key{R}\cup\frozen{q}$.

There is a witness $\tuple{v_{1},\ldots,v_{n}}$ of $\attacks{R}{S}{q}$ with $v_{n}\in\key{S}$.  It will be the case that $v_{1}\in\notkey{R}$ and $v_{1},\ldots,v_{n}\not\in\key{R}$.
By Lemma~\ref{lem:witnessNotFrozen}, $v_{1},\ldots,v_{n}\notin\frozen{q}$.
Then $\tuple{x,\ldots,y}\cdot\tuple{v_{n},v_{n-1},\ldots,v_{1}}$ is a (not necessarily simple) path between a variable of $\vec{x}$ and a variable of $\notkey{R}$ that uses no variable in $\key{R}\cup\frozen{q}$.
It follows that $\vec{x}$ violates condition~\ref{it:separation} of Definition~\ref{def:newclass}, contradicting the lemma's hypothesis. 
\end{proof}

\section{Proofs for Section~\ref{sec:cforest}}

\begin{lemma} \label{lem:pathInFuxmanKeyDetermined}
Let $q(\vec{z})$ be a query in $\cforest$, and let $\tuple{F_{1}, \ldots, F_{n}}$ be a path in the Fuxman graph of $q(\vec{z})$.
Then, $\fdset{q} \models \fd{\key{F_{1}}}{\key{F_{n}}}$.
\end{lemma}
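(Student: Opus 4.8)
The plan is to prove this by a straightforward induction on the length $n$ of the path $\tuple{F_1,\ldots,F_n}$, using only the edge-labelling condition that defines $\cforest$ (the fact that the Fuxman graph is a \emph{forest} will not be needed here). For the base case $n=1$, the claim $\fdset{q}\models\fd{\key{F_1}}{\key{F_1}}$ holds by Armstrong's reflexivity axiom.

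For the inductive step, I would focus on the last edge of the path, from $F_{n-1}$ to $F_n$. By the defining property of $\cforest$, this edge satisfies $\key{F_n}\setminus\free{q}\subseteq\notkey{F_{n-1}}$, and since $\notkey{F_{n-1}}\subseteq\vars{F_{n-1}}$ by definition, we get $\key{F_n}\subseteq\vars{F_{n-1}}\cup\free{q}$. Now $\fdset{q}$ contains both $\fd{\key{F_{n-1}}}{\vars{F_{n-1}}}$ (the key dependency of the atom $F_{n-1}$) and $\fd{\emptyset}{\free{q}}$ (by the definition of $\fdset{q}$), so by Armstrong's augmentation and union axioms $\fdset{q}\models\fd{\key{F_{n-1}}}{\vars{F_{n-1}}\cup\free{q}}$, and hence $\fdset{q}\models\fd{\key{F_{n-1}}}{\key{F_n}}$ since $\key{F_n}$ is a subset of the right-hand side. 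Combining this with the induction hypothesis $\fdset{q}\models\fd{\key{F_1}}{\key{F_{n-1}}}$ via Armstrong's transitivity axiom yields $\fdset{q}\models\fd{\key{F_1}}{\key{F_n}}$, completing the induction.

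I do not expect any genuine obstacle here; the only point requiring a modicum of care is bookkeeping about which variables are free versus bound when applying the $\cforest$ condition, i.e.\ recalling that the condition only guarantees that the \emph{bound} variables of $\key{F_n}$ lie in $\notkey{F_{n-1}}$, with the free ones absorbed by $\fd{\emptyset}{\free{q}}$. Everything else is a routine manipulation of functional dependencies, and I would present it in the two-or-three-line inductive form sketched above.
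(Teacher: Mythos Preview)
Your proposal is correct and is essentially the same argument as the paper's: both use the $\cforest$ edge condition $\key{F_{i+1}}\setminus\free{q}\subseteq\notkey{F_i}$ together with $\fd{\emptyset}{\free{q}}\in\fdset{q}$ to obtain $\fdset{q}\models\fd{\key{F_i}}{\key{F_{i+1}}}$ for each edge, and then chain these via Armstrong's transitivity. The only cosmetic difference is that you phrase it as an explicit induction on~$n$ while the paper states the per-edge implication for all~$i$ at once and then invokes transitivity; the bookkeeping about free versus bound variables that you flag is exactly the point the paper handles (implicitly) in the same step.
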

\begin{proof}
By definition of a Fuxman graph, we have that for every $i \in \{1, \ldots, n-1\}$, $\key{F_{i+1}}\setminus\free{q}\subseteq \notkey{F_{i}}$. 
This means that, for every $i \in \{1, \ldots, n-1\}$, $\fdset{q}\models\fd{\key{F_{i}}}{\vars{F_{i+1}}}$.
By Armstrong's transitivity axiom, $\fdset{q} \models \fd{\key{F_{1}}}{\key{F_{n}}}$.
\end{proof}

\begin{lemma}\label{lem:queryPathImpliesFuxmanPath}
Let $q(\vec{z})$ be a query in $\cforest$.
Let $x,y\in\vars{q}$ be distinct variables that are connected in the query graph of~$q$.
Let $R,S$ be two distinct atoms such that $x\in\notkey{R}$ and $y\in\vars{S}$.
Then, the Fuxman graph of $q$ has a directed path from $R$ to $S$.
\end{lemma}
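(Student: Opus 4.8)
The plan is to induct on the length of a path $\pi=\tuple{u_{0},\ldots,u_{k}}$ in the query graph of $q$ from $x=u_{0}$ to $y=u_{k}$ (with $k\geq 1$ since $x\neq y$), where consecutive $u_{i-1},u_{i}$ occur together in some atom $T_{i}$ of $\body{q}$ and, as the argument will need, $\pi$ uses no variable of $\key{R}$ (note $u_{0}=x\notin\key{R}$ already, since $x\in\notkey{R}$). What I would actually prove, by induction on $i$, is the stronger statement $(\star_{i})$: every atom of $q$ that contains $u_{i}$ is $R$ itself or a descendant of $R$ in the Fuxman graph of $q$. Granting $(\star_{k})$, the lemma is immediate: $y$ occurs in $S$ and $S\neq R$, so $S$ is a descendant of $R$, i.e.\ there is a directed path from $R$ to $S$.

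Two observations about $\cforest$ drive the induction, and I would establish them first. (i) If $v$ is a \emph{bound} variable and $v\in\notkey{A}$ for an atom $A$, then by definition of the Fuxman graph every atom $B\neq A$ containing $v$ is a child of $A$; hence every atom containing $v$ is $A$ or a descendant of $A$. (ii) Since the Fuxman graph is a directed forest, every non-root atom $T$ has a unique parent $P$; this $P$ is the penultimate vertex on the directed path from any ancestor of $T$ down to $T$, so $P$ lies on that path; and the $\cforest$ edge condition on $P\to T$ gives $\key{T}\setminus\free{q}\subseteq\notkey{P}$, so every bound variable of $\key{T}$ is a non-key variable of $P$. The base case $i=0$ is then just (i) applied with $A=R$ and $v=x\in\notkey{R}$. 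For the inductive step, $T_{i+1}$ contains $u_{i}$, so by $(\star_{i})$ it is $R$ or a descendant of $R$; if $u_{i+1}\in\notkey{T_{i+1}}$ then (i) finishes it, and if $u_{i+1}\in\key{T_{i+1}}$ then $T_{i+1}\neq R$ (as $u_{i+1}\notin\key{R}$), hence $T_{i+1}$ is a \emph{proper} descendant of $R$ and has a parent $P$, which by (ii) is $R$ or a descendant of $R$ and satisfies $u_{i+1}\in\notkey{P}$, after which (i) applied at $P$ finishes it.

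The step I expect to be the crux is the case $u_{i+1}\in\key{T_{i+1}}$: once a path variable reaches a key position one cannot continue downward in the forest and must instead climb to the parent atom, and this is exactly where both defining properties of $\cforest$ are needed — forest-ness, to get a \emph{unique} parent that still lies below $R$, and the edge condition, to turn the key variable back into a non-key variable one level up. It is also precisely the point at which one uses that $\pi$ avoids $\key{R}$: without it $T_{i+1}$ could equal $R$, the ``parent'' would then lie above $R$ rather than below it, and the conclusion genuinely breaks. Everything else is routine bookkeeping about ancestors and descendants in a DAG, using transitivity of ancestry.
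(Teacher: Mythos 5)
As a proof of the lemma exactly as stated, your proposal has a hypothesis mismatch: you work with a path $\pi$ from $x$ to $y$ that uses no variable of $\key{R}$, but the lemma only assumes that $x$ and $y$ are connected in the query graph, and mere connectedness does not supply such a path. However, your own caveat (``without it \dots the conclusion genuinely breaks'') is correct and is the real issue: the lemma as literally stated is false. Take $q=\exists u\exists x\exists y\,\bigl(R(\underline{u},x)\land S(\underline{u},y)\bigr)$: its Fuxman graph has no edges, so $q\in\cforest$; we have $x\in\notkey{R}$, $y\in\vars{S}$, $R\neq S$, and $x,y$ are connected through $u\in\key{R}$, yet there is no directed path from $R$ to $S$. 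In both places where the paper invokes the lemma the hypothesis you assume is in fact available: in Lemma~\ref{lem:attackImpliesFuxmanPath} the path is an attack witness and hence avoids $\keycl{R}{q}\supseteq\key{R}$, and in Lemma~\ref{lem:cForestGoodComponents} avoidance of $\key{R}$ is explicit. So what you proved is the corrected statement that the paper actually needs, and, for that statement, your argument checks out: observations (i) and (ii), the invariant $(\star_i)$, and the parent-climbing step are all sound.

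Your route is also genuinely different from the paper's. The paper takes a \emph{shortest} connecting path $\tuple{v_1,\dots,v_n}$, pairs it with atoms $F_1=R,\dots,F_{n+1}=S$ with $v_i\in\vars{F_i}\cap\vars{F_{i+1}}$, and shows inductively that $v_i\notin\key{F_i}$, the crux being a chord argument: if $v_i\in\key{F_i}$, the $\cforest$ edge condition puts $v_i$ into $F_{i-1}$, shortcutting the path and contradicting minimality. You never need minimality: you maintain the invariant that every atom containing the current path variable is $R$ or a Fuxman-descendant of $R$, and when the variable lands in a key position you climb to the unique parent, using forestness plus $\key{T}\setminus\free{q}\subseteq\notkey{P}$ to turn it back into a non-key variable one level up. This is more robust precisely where the paper's write-up is thin (the step $i=2$, where the cited $v_{i-2}$ does not exist, and the degenerate case where the atom carrying consecutive path variables is $R$ itself, so that no Fuxman edge is available), at the price of making the $\key{R}$-avoidance hypothesis explicit --- which, as the counterexample shows, has to be added to the statement anyway.
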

\begin{proof}
We can assume a shortest path $\tuple{v_{1},\ldots,v_{n}}$ in the query graph, where $v_{1}=x$ and $v_{n}=y$.
We can assume a sequence $\tuple{F_{1},\ldots,F_{n+1}}$ of atoms in $\body{q}$ such that:
\begin{itemize}
    \item $F_{1}=R$; 
    \item $F_{n+1}=S$; and
    \item for every $i\in\{1,\ldots,n\}$, $v_{i}\in\vars{F_{i}}\cap\vars{F_{i+1}}$.
\end{itemize}

We will show by induction on increasing $i\in\{1,\ldots,n\}$, that $\tuple{F_{1},\ldots,F_{i+1}}$ is a path in the Fuxman graph of $q$.

\framebox{\emph{Basis $i = 1$.}} 
Since $v_{1}\notin\key{F_{1}}$ and $v_{1}\in\vars{F_{1}}\cap\vars{F_{2}}$, there is a directed edge from $F_{1}$ to $F_{2}$ in the Fuxman graph of $q$.
\framebox{\emph{Induction step $i-1\rightarrow i$.}} 
The induction hypothesis is that $\tuple{F_{1},\ldots,F_{i}}$ is a path in the Fuxman graph of $q$. 
We have that $v_{i}\in\vars{F_{i}}\cap\vars{F_{i+1}}$. 
Assume for the sake of contradiction that $v_{i}\in\key{F_{i}}$.
Then, by definition of $\cforest$, $v_{i}\in\vars{F_{i-1}}$.
Since we also have $v_{i-2}\in\vars{F_{i-1}}$, the query graph of $q(\vec{z})$ contains an edge between $v_{i-2}$ and $v_{i}$.
Then, $\tuple{v_{1},\ldots,v_{i-2},v_{i},\ldots,v_{n}}$ is a shorter path in the query graph between~$v_{1}$ and $v_{n}$, a contradiction.
We conclude by contradiction that $v_{i}\notin\key{F_{i}}$.
It follows that the Fuxman graph of~$q$ has a directed edge from $F_{i}$ to $F_{i+1}$ and, consequently, $\tuple{F_{1},\ldots,F_{i+1}}$ is a path in the Fuxman graph of~$q$.
By induction, we obtain that $\tuple{F_{1}, \ldots, F_{n+1}}$ is a path in the Fuxman graph of $q$ from $R$ to $S$.
\end{proof}

\begin{lemma}\label{lem:attackImpliesFuxmanPath}
Let $q(\vec{z})$ be a query in $\cforest$.
Let $R,S$ be two distinct atoms such that $\attacks{R}{S}{q}$.
Then, the Fuxman graph of $q$ has a directed path from $R$ to $S$.
\end{lemma}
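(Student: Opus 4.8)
The plan is to derive the statement directly from Lemma~\ref{lem:queryPathImpliesFuxmanPath}, which already turns a query-graph connection into a directed path in the Fuxman graph. First I would unfold $\attacks{R}{S}{q}$: since $R\neq S$ and $R$ attacks $S$, there is a bound variable $y\in\vars{S}$ with $\attacks{R}{y}{q}$, and hence a witness $\tuple{x_{1},\ldots,x_{n}}$ (with $n\geq 1$) consisting of bound variables of $q$ such that $x_{1}\in\notkey{R}$, $x_{n}=y$, every two adjacent variables occur together in some atom of $q$, and $x_{\ell}\notin\keycl{R}{q}$ for all $\ell$.

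I would then distinguish two cases. If $x_{1}=x_{n}$, then $y=x_{1}\in\notkey{R}$ is a bound variable occurring in $S$; since $R\neq S$, the definition of the Fuxman graph provides a directed edge from $R$ to $S$, which is already the required path. If $x_{1}\neq x_{n}$, then, after deleting any consecutive repetitions, the witness becomes a walk of positive length between $x_{1}$ and $x_{n}$ in the query graph of $q$ (whose vertices are precisely the bound variables), so $x_{1}$ and $x_{n}$ are distinct variables that are connected in the query graph. Since $x_{1}\in\notkey{R}$, $x_{n}=y\in\vars{S}$, and $R\neq S$, Lemma~\ref{lem:queryPathImpliesFuxmanPath}, applied with $x_{1}$ in the role of $x$ and $x_{n}$ in the role of $y$, yields a directed path from $R$ to $S$ in the Fuxman graph.

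I do not anticipate a real obstacle here: the argument is essentially a routing of the attack witness through Lemma~\ref{lem:queryPathImpliesFuxmanPath}. The only points that need a little care are that the attacked variable $y$ is guaranteed to be bound (it is an endpoint of a witness, which by definition consists of bound variables of $q$), and the degenerate case $x_{1}=x_{n}$, which must be treated separately because Lemma~\ref{lem:queryPathImpliesFuxmanPath} needs two distinct endpoints; in that case the single bound variable $y$ already lies in $\notkey{R}\cap\vars{S}$, which directly gives the edge from $R$ to $S$.
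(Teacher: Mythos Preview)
Your proposal is correct and follows essentially the same route as the paper: take a witness of the attack, view it as a walk in the query graph, and invoke Lemma~\ref{lem:queryPathImpliesFuxmanPath}. You are in fact more careful than the paper, which applies Lemma~\ref{lem:queryPathImpliesFuxmanPath} directly without isolating the degenerate case $x_{1}=x_{n}$ that you handle separately via a direct Fuxman edge.
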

\begin{proof}
We can assume a sequence $\tuple{v_{1},\ldots,v_{n}}$ that is a witness of $\attacks{R}{S}{q}$, where $v_{1}\in\notkey{R}$ and $v_{n}\in\key{S}$.
This sequence is obviously a path in the query graph of $q(\vec{z})$.
The desired result then follows by Lemma~\ref{lem:queryPathImpliesFuxmanPath}.
\end{proof}

\begin{lemma} \label{lem:cForestAcyclicGraph}
Every query in $\cforest$ has an acyclic attack graph.
\end{lemma}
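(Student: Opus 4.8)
The plan is to derive this immediately from Lemma~\ref{lem:attackImpliesFuxmanPath} together with the defining property of $\cforest$ that the Fuxman graph is a directed forest. I would argue by contraposition: suppose, for the sake of contradiction, that some query $q(\vec{z})\in\cforest$ has a cyclic attack graph.

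A cycle in the attack graph is a sequence of atoms $\tuple{F_{1},F_{2},\ldots,F_{n}}$ with $n\geq 2$ (recall that $\attacks{F}{G}{q}$ requires $F\neq G$, so no self-loops exist) such that $\attacks{F_{i}}{F_{i+1}}{q}$ for every $i\in\{1,\ldots,n-1\}$ and $\attacks{F_{n}}{F_{1}}{q}$. By Lemma~\ref{lem:attackImpliesFuxmanPath}, since each $F_{i}$ attacks the next atom in the cycle, the Fuxman graph of $q$ contains a directed path from $F_{i}$ to $F_{i+1}$ for each $i\in\{1,\ldots,n-1\}$, and a directed path from $F_{n}$ to $F_{1}$. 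Concatenating these directed paths yields a directed closed walk of positive length in the Fuxman graph of $q$.

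However, a directed forest is in particular a directed acyclic graph, and hence contains no directed closed walk of positive length. This contradicts the definition of $\cforest$, according to which the Fuxman graph of $q$ is a directed forest. We conclude that the attack graph of $q$ must be acyclic.

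I do not anticipate any real obstacle here; the only point requiring a little care is the observation that attacks hold only between distinct atoms (so cycles have length at least $2$), and the elementary graph-theoretic fact that a closed walk of positive length in a digraph witnesses a directed cycle, which is excluded in a forest.
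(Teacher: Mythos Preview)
Your proof is correct, but it takes a genuinely different route from the paper's own argument. The paper does not use Lemma~\ref{lem:attackImpliesFuxmanPath} here at all; instead it invokes Fuxman's original result that every query in $\cforest$ has a consistent first-order rewriting, and then applies Theorem~\ref{the:koutriswijsen} (the Koutris--Wijsen characterization) to conclude that the attack graph must be acyclic.

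Your approach is more self-contained and elementary: it stays entirely within the combinatorics of the Fuxman graph and the attack graph, avoiding any appeal to Fuxman's rewritability theorem or to the nontrivial direction of Theorem~\ref{the:koutriswijsen}. This is arguably cleaner, since the paper's route imports two heavy external results to establish a purely graph-theoretic fact that, as you show, follows directly from Lemma~\ref{lem:attackImpliesFuxmanPath} and the definition of a directed forest. The paper's approach, on the other hand, is shorter to state and highlights the connection to first-order rewritability, which is thematically relevant. There is no circularity in your argument: Lemma~\ref{lem:attackImpliesFuxmanPath} (via Lemma~\ref{lem:queryPathImpliesFuxmanPath}) is proved independently of Lemma~\ref{lem:cForestAcyclicGraph}.
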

\begin{proof}
Fuxman~\cite{FuxmanThesis} has shown that every query in $\cforest$ has a consistent first-order rewriting. 
From Theorem~\ref{the:koutriswijsen}, it follows that every query in $\cforest$ has an acyclic attack graph.
\end{proof}

\begin{lemma} \label{lem:cForestWeakAttacks}
If $q(\vec{z}) \in \cforest$, then every attack in $q(\vec{z})$'s attack graph is weak.
\end{lemma}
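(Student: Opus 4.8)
The plan is to obtain this statement as an immediate consequence of the two preceding lemmas about Fuxman graphs, namely Lemma~\ref{lem:attackImpliesFuxmanPath} and Lemma~\ref{lem:pathInFuxmanKeyDetermined}. Recall that an attack from $R$ to $S$ is weak precisely when $\fdset{q}\models\fd{\key{R}}{\key{S}}$, so it suffices to establish this functional dependency for every attack of $q(\vec{z})$.

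First I would fix an arbitrary attack $\attacks{R}{S}{q}$. By the definition of attacks we have $R\neq S$, so the hypotheses of Lemma~\ref{lem:attackImpliesFuxmanPath} are met (using $q(\vec{z})\in\cforest$), and that lemma yields a directed path $\tuple{F_{1},\ldots,F_{n}}$ in the Fuxman graph of $q(\vec{z})$ with $F_{1}=R$ and $F_{n}=S$. Next I would feed this path into Lemma~\ref{lem:pathInFuxmanKeyDetermined}, which gives $\fdset{q}\models\fd{\key{F_{1}}}{\key{F_{n}}}$, i.e., $\fdset{q}\models\fd{\key{R}}{\key{S}}$. Hence the attack from $R$ to $S$ is weak, and since the attack was arbitrary, every attack in $q(\vec{z})$'s attack graph is weak.

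There is essentially no serious obstacle here: all the combinatorial work has been pushed into Lemma~\ref{lem:queryPathImpliesFuxmanPath} (which relates query-graph connectivity to Fuxman-graph paths) and Lemma~\ref{lem:pathInFuxmanKeyDetermined} (which exploits the defining condition $\key{F_{i+1}}\setminus\free{q}\subseteq\notkey{F_{i}}$ of $\cforest$ together with Armstrong's transitivity axiom). The only point worth a line of care is to note explicitly that a witness of an attack is a path in the query graph whose first variable lies in $\notkey{R}$ and whose last variable lies in $\key{S}$, which is exactly the shape required to invoke Lemma~\ref{lem:attackImpliesFuxmanPath}; this is already handled in the proof of that lemma, so the present proof can simply cite it.
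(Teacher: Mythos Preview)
Your proposal is correct and is essentially identical to the paper's own proof: fix an attack $\attacks{R}{S}{q}$, apply Lemma~\ref{lem:attackImpliesFuxmanPath} to obtain a directed path from $R$ to $S$ in the Fuxman graph, then apply Lemma~\ref{lem:pathInFuxmanKeyDetermined} to conclude $\fdset{q}\models\fd{\key{R}}{\key{S}}$, so the attack is weak.
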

\begin{proof}
Let $q(\vec{z}) \in \cforest$. 
Let $R,S$ be two distinct atoms in $\body{q}$ such that $\attacks{R}{S}{q}$.
By Lemma~\ref{lem:attackImpliesFuxmanPath}, the Fuxman graph of $q$ has a directed path $\tuple{F_{1},\ldots,F_{n}}$ with $F_{1}=R$ and $F_{n}=S$. 
By Lemma~\ref{lem:pathInFuxmanKeyDetermined}, $\fdset{q}\models\fd{\key{R}}{\key{S}}$, and, consequently, the attack from $R$ to $S$ is weak.
\end{proof}


\begin{lemma} \label{lem:cForestGoodComponents}
Let $q(\vec{z}) \in \cforest$. 
Let $\vec{x}$ be a $\subseteq$-minimal tuple of bound variables such that for every root~$R$ in the Fuxman graph of~$q(\vec{z})$, $\vec{x}$ contains every bound variable of $\key{R}$.
Then, $\vec{x}$ verifies conditions~\ref{it:source} and~\ref{it:separation} in Definition~\ref{def:newclass}.
\end{lemma}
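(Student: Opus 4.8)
I would prove conditions~\ref{it:source} and~\ref{it:separation} separately, in each case pushing the argument out of the query graph and attack graph into the Fuxman graph, where the path lemmas apply (Lemmas~\ref{lem:pathInFuxmanKeyDetermined} and~\ref{lem:queryPathImpliesFuxmanPath}), together with acyclicity of the attack graph of a $\cforest$ query (Lemma~\ref{lem:cForestAcyclicGraph}). The one structural fact I would record at the outset is that, by $\subseteq$-minimality of $\vec{x}$, the set $\vars{\vec{x}}$ is exactly the set of bound variables occurring in $\key{S}$ for some root $S$ of the Fuxman graph: in particular each variable of $\vec{x}$ lies in the key of some Fuxman root, and conversely every bound variable of a Fuxman root's key lies in $\vec{x}$. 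I would also use that a directed forest has an acyclic underlying undirected graph, hence is a DAG, so every atom is reachable by a directed Fuxman path from some Fuxman root (an atom of in-degree~$0$).

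\textbf{Condition~\ref{it:source}.} Let $M$ be a component of $q(\vec{z})$'s attack graph. Since the attack graph is acyclic and $M$ is a nonempty finite DAG, $M$ has a source, i.e.\ an unattacked atom $R$; by maximality of $M$, this $R$ is unattacked in the whole attack graph. It remains to check $\fdset{q}\models\fd{\vec{x}}{\key{R}}$. Pick a Fuxman root $S_{0}$ with a directed Fuxman path from $S_{0}$ to $R$ (of length~$0$ if $S_{0}=R$). Lemma~\ref{lem:pathInFuxmanKeyDetermined} gives $\fdset{q}\models\fd{\key{S_{0}}}{\key{R}}$. Since $S_{0}$ is a root, $\key{S_{0}}\subseteq\vars{\vec{x}}\cup\free{q}$, and since $\fdset{q}$ contains $\fd{\emptyset}{\free{q}}$ we get $\fdset{q}\models\fd{\vec{x}}{\key{S_{0}}}$, whence $\fdset{q}\models\fd{\vec{x}}{\key{R}}$ by transitivity. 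This settles condition~\ref{it:source} for every component.

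\textbf{Condition~\ref{it:separation}.} I would argue by contradiction. Suppose an atom $R$ and a path $\pi=\tuple{v_{1},\dots,v_{n}}$ in the query graph witness a violation, so $v_{1}\in\notkey{R}$, $v_{n}\in\vars{\vec{x}}$, and no $v_{i}$ lies in $\key{R}\cup\frozen{q}$; note every $v_{i}$ is a bound variable, being a query-graph vertex. Pick a Fuxman root $S_{0}$ with $v_{n}\in\key{S_{0}}$. If $R=S_{0}$ then $v_{n}\in\key{R}$, contradicting that $\pi$ avoids $\key{R}$; hence $R\neq S_{0}$. If $v_{1}=v_{n}$, then this variable lies in $\notkey{R}\cap\vars{S_{0}}$ and is bound, so the Fuxman graph has an edge $R\to S_{0}$, contradicting that $S_{0}$ is a root. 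Otherwise $v_{1}\neq v_{n}$ are distinct, connected in the query graph, with $v_{1}\in\notkey{R}$, $v_{n}\in\vars{S_{0}}$, and $R\neq S_{0}$; then Lemma~\ref{lem:queryPathImpliesFuxmanPath} yields a directed Fuxman path from $R$ to $S_{0}$, again contradicting that $S_{0}$ is a root. So no such $R$ and $\pi$ exist, i.e.\ condition~\ref{it:separation} holds.

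\textbf{Main obstacle.} Once the helping lemmas are available, none of the steps is deep; the only place that needs care is the case analysis for condition~\ref{it:separation} — correctly identifying the Fuxman root attached to the endpoint of the bad path, and checking the distinctness and boundedness hypotheses under which Lemma~\ref{lem:queryPathImpliesFuxmanPath} and the definition of a Fuxman edge may be invoked. It is worth remarking that the argument is slightly stronger than needed: it never uses the $\frozen{q}$ clause of condition~\ref{it:separation} (nor Lemma~\ref{lem:cForestWeakAttacks}), since forbidding only $\key{R}$ already produces the contradiction.
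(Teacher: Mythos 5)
Your proof is correct and follows essentially the same route as the paper's: the same helping lemmas (Lemmas~\ref{lem:cForestAcyclicGraph}, \ref{lem:pathInFuxmanKeyDetermined}, \ref{lem:queryPathImpliesFuxmanPath}) and the same idea of tracing everything back to the Fuxman root of the relevant tree, whose key variables lie in $\vec{x}$ by hypothesis. The only differences are cosmetic: you argue condition~\ref{it:source} directly rather than by contradiction, and you explicitly treat the degenerate case $v_{1}=v_{n}$ (via a direct Fuxman edge) where the paper's appeal to Lemma~\ref{lem:queryPathImpliesFuxmanPath}, which requires distinct variables, leaves that case implicit.
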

\begin{proof}
We first show that $\vec{x}$ verifies condition~\ref{it:source} in Definition~\ref{def:newclass}.
Assume towards a contradiction that $q$'s attack graph has a component $C$ such that $\fdset{q} \not\models \fd{\vec{x}}{\key{U}}$ for every unattacked atom~$U$ that occurs in~$C$.
Since $q$'s attack graph is acyclic by Lemma~\ref{lem:cForestAcyclicGraph}, we can assume an atom~$R$ in the component $C$ that is unattacked.
Consequently, $\fdset{q} \not\models \fd{\vec{x}}{\key{R}}$.
In the Fuxman graph of $q$, let $S$ be the root of the tree that contains~$R$. 
By the hypothesis of the lemma, $\vec{x}$ contains every bound variable of $\key{S}$.
It follows:
\begin{equation}\label{eq:keySdetermined}
\fdset{q}\models\fd{\vec{x}}{\key{S}}.
\end{equation}
If $R=S$, then it follows $\fdset{q}\models\fd{\vec{x}}{\key{R}}$, a contradiction.
Assume next $R\neq S$.
Since~$S$ is the root of the tree in which $R$ appears, there is a path from $S$ to $R$ in the Fuxman graph of $q$. 
By Lemma~\ref{lem:pathInFuxmanKeyDetermined}, we have $\fdset{q}\models\fd{\key{S}}{\key{R}}$.
Using~\eqref{eq:keySdetermined}, we obtain $\fdset{q}\models\fd{\vec{x}}{\key{R}}$, a contradiction.


We next show that $\vec{x}$ verifies condition~\ref{it:separation} in Definition~\ref{def:newclass}.
Assume towards a contradiction that there is an atom $R$ in $\body{q(\vec{z})}$, and a path $\tuple{v_{1}, \ldots, v_{n}}$ in the query graph of $q(\vec{z})$ such that:
\begin{itemize}
    \item $v_{n} \in \vec{x}$; 
    \item $v_{1} \in \notkey{R}\setminus\free{q}$; and
    \item for every $i \in \{1, \ldots, n\}$, $v_{i} \not \in \key{R}$.
\end{itemize}
Since $\vec{x}$ is $\subseteq$-minimal, we can assume an atom  $S$ that is a root in $q$'s Fuxman graph such that $v_{n} \in \key{S}$.
Since $v_{n}\not\in\key{R}$, it follows $R\neq S$.
By Lemma~\ref{lem:queryPathImpliesFuxmanPath}, there is a directed path from $R$ to $S$ in the Fuxman graph of~$q$, contradicting that $S$ is a root.
\end{proof}

\begin{proof}[Proof of Theorem~\ref{the:cforest}]
From Lemmas~\ref{lem:cForestAcyclicGraph}, \ref{lem:cForestWeakAttacks}, and~\ref{lem:cForestGoodComponents}, it follows that $\cforest\subseteq\cnewclass$. 

To show that $\cforest$ is a strict subset, the query $q(z)=\exists x\exists y\, R(\underline{x},y)\land S(\underline{x},y,z)$ is in $\cnewclass$ but not in $\cforest$.
\end{proof}

\end{document}